\title{Just Verification of Mutual Exclusion Algorithms}
\author{Rob {van Glabbeek}}{School of Informatics, University of Edinburgh, UK \and School of Computer Science and Engineering, University of New South Wales, Sydney, Australia \and \url{https://theory.stanford.edu/~rvg/}}{rvg@cs.stanford.edu}{https://orcid.org/0000-0003-4712-7423}{Supported by Royal Society Wolfson Fellowship RSWF\textbackslash R1\textbackslash 221008}
\author{Bas Luttik}{Eindhoven University of Technology, The Netherlands \and \url{https://www.win.tue.nl/luttik/}}{s.p.luttik@tue.nl}{https://orcid.org/0000-0001-6710-8436}{}
\author{Myrthe S.C. Spronck\def\thefootnote{\,$*$\!}\thanks{Corresponding author}}{Eindhoven University of Technology, The Netherlands}{m.s.c.spronck@tue.nl}{https://orcid.org/0000-0003-2909-7515}{}
\authorrunning{R.J. van Glabbeek, B. Luttik and M.S.C. Spronck}
\keywords{Mutual exclusion, safe registers, regular registers, overlapping reads and writes, atomicity, safety, liveness, starvation freedom, justness, model checking, mCRL2.}
\newcommand{\lFor}[2]{%
    \State\algorithmicfor\ {#1}\ \algorithmicdo\ {#2}%
  }
\newcommand{\lWhile}[2]{%
    \State\algorithmicwhile\ {#1}\ \algorithmicdo\ {#2}%
  }
\newcommand{\varname}[1]{\ensuremath{\mathit{#1}}}
\newcommand{\varidx}[2]{\ensuremath{\varname{#1}[#2]}}
\newcommand{\writeop}{\ensuremath{\gets}}
\newcommand{\defeq}{\stackrel{\text{def}}{\equiv}}
\newcommand{\VBar}[2]{($ (0,#1) + #2 $) -- ($(0,-#1)+ #2 $)}
\newcommand{\Above}[2]{($ (0,#1) + #2 $)}
\newcommand{\Below}[2]{($ (0,-#1) + #2 $)}
\newcommand{\concsym}{\smile^{\hspace{-.5ex}\raisebox{-.2ex}{\tiny$\bullet$}}}
\newcommand{\nconcsym}{{\centernot\smile}^{\hspace{-.5ex}\raisebox{-.4ex}{\tiny$\bullet$}}}
\newcommand{\conc}{\ensuremath{\mathbin{\concsym}}}
\newcommand{\nconc}{\ensuremath{\mathbin{\nconcsym}}}
\newcommand{\comp}[1]{\ensuremath{\overline{\mathit{#1}}}}
\newcommand{\co}{\ensuremath{\cdot}}
\newcommand{\states}{\ensuremath{\mathcal{S}}}
\newcommand{\initstate}{\ensuremath{\mathit{init}}}
\newcommand{\actionset}{\ensuremath{\mathit{Act}}}
\newcommand{\transrel}{\ensuremath{\mathit{Trans}}}
\newcommand{\thrlocacts}{\ensuremath{\mathit{TLoc}}}
\newcommand{\block}{\ensuremath{\mathcal{B}}}
\newcommand{\nonblock}{\ensuremath{\comp{\block}}}
\newcommand{\elimf}[2]{\ensuremath{\#_{#1}\{#2\}}}
\newcommand{\thrsym}{\ensuremath{\mathit{thr}}}
\newcommand{\thrmap}[1]{\ensuremath{\thrsym(#1)}}
\newcommand{\regsym}{\ensuremath{\mathit{reg}}}
\newcommand{\regmap}[1]{\ensuremath{\regsym(#1)}}
\newcommand{\justact}[2]{\ensuremath{\mathit{JA}^{#1}_{#2}}}
\newcommand{\truevalsym}{\ensuremath{\mathit{stor}}}
\newcommand{\trueval}[1]{\ensuremath{\truevalsym(#1)}}
\newcommand{\safRep}{\ensuremath{\mathit{saf}}}
\newcommand{\regRep}{\ensuremath{\mathit{reg}}}
\newcommand{\atoRep}{\ensuremath{\mathit{ato}}}
\newcommand{\TID}{\ensuremath{\mathbb{T}}}
\newcommand{\RID}{\ensuremath{\mathbb{R}}}
\newcommand{\Data}[1][\rid]{\ensuremath{\mathbb{D}_{#1}}}
\newcommand{\startread}[1][i,x]{\ensuremath{\mathit{sr_{#1}}}}
\newcommand{\finishread}[2][i,x]{\ensuremath{\mathit{fr_{#1}(#2)}}}
\newcommand{\startwrite}[2][i,x]{\ensuremath{\mathit{sw_{#1}(#2)}}}
\newcommand{\finishwrite}[1][i,x]{\ensuremath{\mathit{fw_{#1}}}}
\newcommand{\orderwrite}[1][i,x]{\ensuremath{\mathit{ow_{#1}}}}
\newcommand{\orderread}[1][i,x]{\ensuremath{\mathit{or_{#1}}}}
\newcommand{\overlapsym}{\ensuremath{\mathit{ovrl}}}
\newcommand{\overlap}[1]{\ensuremath{\overlapsym(#1)}}
\newcommand{\posvalsym}{\ensuremath{\mathit{posv}}}
\newcommand{\posval}[1]{\ensuremath{\posvalsym(#1)}}
\newcommand{\readerssym}{\ensuremath{\mathit{rds}}}
\newcommand{\readers}[1]{\ensuremath{\readerssym(#1)}}
\newcommand{\writerssym}{\ensuremath{\mathit{wrts}}}
\newcommand{\writers}[1]{\ensuremath{\writerssym(#1)}}
\newcommand{\pendingsym}{\ensuremath{\mathit{pend}}}
\newcommand{\pending}[1]{\ensuremath{\pendingsym(#1)}}
\newcommand{\usrsym}{\ensuremath{\mathit{usr}}}
\newcommand{\usr}[1]{\ensuremath{\usrsym(#1)}}
\newcommand{\ufrsym}{\ensuremath{\mathit{ufr}}}
\newcommand{\ufr}[1]{\ensuremath{\ufrsym(#1)}}
\newcommand{\uswsym}{\ensuremath{\mathit{usw}}}
\newcommand{\usw}[1]{\ensuremath{\uswsym(#1)}}
\newcommand{\ufwsym}{\ensuremath{\mathit{ufw}}}
\newcommand{\ufw}[1]{\ensuremath{\ufwsym(#1)}}
\newcommand{\uowsym}{\ensuremath{\mathit{uow}}}
\newcommand{\uow}[1]{\ensuremath{\uowsym(#1)}}
\newcommand{\uorsym}{\ensuremath{\mathit{uor}}}
\newcommand{\uor}[1]{\ensuremath{\uorsym(#1)}}
\newcommand{\valssym}{\ensuremath{\mathit{rec}}}
\newcommand{\vals}[1]{\ensuremath{\valssym(#1)}}
\newcommand{\StatusAll}{\ensuremath{\mathbb{S}}}
\newcommand{\Reg}[1][m]{\ensuremath{\mathit{Reg}_{#1}}}
\newcommand{\SReg}{\ensuremath{\Reg[\safRep]}}
\newcommand{\RReg}{\ensuremath{\Reg[\regRep]}}
\newcommand{\AReg}{\ensuremath{\Reg[\atoRep]}}
\newcommand{\undefsymb}{\ensuremath{\bot}}
\newcommand{\false}{\ensuremath{\mathit{false}}}
\newcommand{\true}{\ensuremath{\mathit{true}}}
\newcommand{\critsym}{\ensuremath{\mathit{c}}}
\newcommand{\crit}[1][i]{\ensuremath{\critsym_{#1}}}
\newcommand{\noncritsym}{\ensuremath{\mathit{nc}}}
\newcommand{\noncrit}[1][i]{\ensuremath{\noncritsym_{#1}}}
\newcommand{\issrsym}{\ensuremath{\mathit{sr}?}}
\newcommand{\issr}[1]{\ensuremath{\issrsym(#1)}}
\newcommand{\isswsym}{\ensuremath{\mathit{sw}?}}
\newcommand{\issw}[1]{\ensuremath{\isswsym(#1)}}
\newcommand{\idx}[1]{\ensuremath{\#({#1})}}
\newcommand{\tid}{\ensuremath{t}}
\newcommand{\tidtwo}{\ensuremath{t'}}
\newcommand{\rid}{\ensuremath{r}}
\newcommand{\data}{\ensuremath{d}}
\newcommand{\regtype}{\ensuremath{\gamma}}
\newcommand{\parcomp}{\ensuremath{\parallel}}
\newcommand{\then}{\ensuremath{\rightarrow}}
\newcommand{\satnone}{\ensuremath{\mathrm{X}}}
\newcommand{\satmutex}{\ensuremath{\mathrm{M}}}
\newcommand{\satdf}{\ensuremath{\mathrm{D}}}
\newcommand{\satsf}{\ensuremath{\mathrm{S}}}
\newcommand{\tp}{\ensuremath{\mathit{tt}}}
\newcommand{\allact}{\ensuremath{\actionset}}
\newcommand{\diam}[1]{\ensuremath{\langle \mathit{#1} \rangle}}
\newcommand{\boxm}[1]{\ensuremath{[ \mathit{#1} ]}}
\newcommand{\clos}[1]{\ensuremath{\mathit{#1}^\star}}
\newcommand{\imps}{\ensuremath{\Rightarrow}}
\newcommand{\trace}{execution}
\newcommand{\Atrace}{An execution}
\newcommand{\WCT}{\textit{WCT}}
\newcommand{\norm}[1]{\ensuremath{\mathit{norm}(#1)}}
\DeclareMathAlphabet{\mathbbm}{U}{bbm}{m}{n}            
\definecolor{highlightColour}{named}{orange}
\newcommand{\colourname}{orange}
\begin{document}

\renewcommand{\subsectionautorefname}{Section}

\maketitle
\setcounter{footnote}{0}

\begin{abstract}
  We verify the correctness of a variety of mutual exclusion algorithms through model checking.
    We look at algorithms where communication is via shared read/write registers, where those registers can be atomic or non-atomic.
    For the verification of liveness properties, it is necessary to assume a completeness criterion to eliminate spurious counterexamples. 
    We use justness as completeness criterion.
    Justness depends on a concurrency relation; we consider several such relations, modelling different assumptions on the working of the shared registers.
    We present {\trace}s demonstrating the violation of correctness properties by several algorithms, and in some cases suggest improvements.
\end{abstract}

\section{Introduction}\label{sec:introduction}
\vspace{-3pt}
The mutual exclusion problem is a fundamental problem in concurrent programming.
Given $N\geq 2$ \emph{threads},\footnote{What we call threads are in the literature frequently referred to as \emph{processes} or \emph{computers}. We use \emph{threads} to distinguish between the real systems and our models of them, expressed in a process algebra.} each of which may occasionally wish to access a \emph{critical section}, a \emph{mutual exclusion algorithm} seeks to ensure that at most one thread accesses its critical section at any given time. Ideally, this is done in such a way that whenever a thread wishes to access its critical section, it eventually succeeds in doing so.
Many mutual exclusion algorithms have been proposed in the literature, and in general their correctness depends on assumptions one can make on the environment in which these algorithms will be running. The present paper aims to make these assumptions explicit, and to verify the correctness of some of the most popular mutual exclusion algorithms as a function of these assumptions.

\vspace{-5pt}
\subparagraph{Correctness properties of mutual exclusion algorithms.}

A thread that does not seek to execute its critical section is said to be executing its \emph{non-critical section}. We regard \emph{leaving the non-critical section} as getting the desire to enter the critical section. After this happens, the thread is executing its \emph{entry protocol},\pagebreak[3] the part of the mutual exclusion algorithm in which it negotiates with other threads who gets to enter the critical section first. The critical section occurs right after the entry protocol, and is followed by an \emph{exit protocol}, after which the thread returns to its non-critical section.
When in its non-critical section, a thread is not expected to communicate with the other threads in any way. Moreover, a thread may choose to remain in its non-critical section forever after. However, once a thread gains access to its critical section, it must leave it within a finite time, so as to make space for other threads.

The most crucial correctness property of a mutual exclusion algorithm is \emph{mutual exclusion}: at any given time, at most one thread will be in its critical section.
This is a safety property. In addition, a hierarchy of liveness properties have been considered. The weakest one is \emph{deadlock freedom}: 
Whenever at least one thread is running its entry protocol, eventually some thread will enter its critical section. This need not be one of the threads that was observed to be in its entry protocol.
A stronger property is \emph{starvation freedom}: whenever a thread leaves its non-critical section, it will eventually enter its critical section.
A yet stronger property, called \emph{bounded bypass}, augments starvation freedom with a bound on the number of times other threads can gain access to the critical section before any given thread in its entry protocol.

In this paper we check for over a dozen mutual exclusion protocols, and for six possible assumptions on the environment in which they are running, whether they satisfy mutual exclusion, deadlock freedom and starvation freedom.
We will not investigate bounded bypass, nor other desirable properties of mutual exclusion protocols, such as \emph{first-come-first-served}, \emph{shutdown safety},  \emph{abortion safety}, \emph{fail safety} and \emph{self-stabilisation} \cite{Lamport86Mutex2}.

\vspace{-3pt}
\subparagraph{Memory models.}\hspace{-10pt}\footnote{A \href{https://en.wikipedia.org/wiki/Memory_model_(programming)}{\emph{memory model}} describes the interactions of threads through memory and their shared use of the data. 
The models reviewed here differ in the degree in which different register accesses exclude each other, and in what values a register may return in case of overlapping reads and writes. In this paper, we do not consider \emph{weak memory models}, that allow for compiler optimisations, and for reads to sometimes fetch values that were already changed by another thread. In \cite{AttiyaGHKMV11} is has been shown that mutual exclusion cannot be realised in weak memory models, unless those models come with \emph{memory fences} or \emph{barriers} that can be used to undermine their weak nature.
}\hspace{6pt}
In the mutual exclusion algorithms considered here, the threads communicate with each other solely by reading from and writing to shared registers. The main assumptions on the environment in which mutual exclusion algorithms will be running concern these registers.
It is frequently assumed that (read and write) operations on registers are ``undividable'', meaning that they cannot overlap or interleave each other: if two threads attempt to perform an operation on the same register at the same time, one operation will be performed before the other. This assumption, sometimes referred to as \emph{atomicity}, is explicitly made in Dijkstra's first paper on mutual exclusion \cite{dijkstra65}.
Atomicity is sometimes conceptualised as operations occurring at a single moment in time. We instead acknowledge that operations have duration. 
Consequently, if operations cannot overlap in time, then, when multiple operations are attempted simultaneously, the one performed first must postpone the occurrence of the others by at least its own duration.
One operation postponing another is called \emph{blocking} \cite{CDV09}.

Deviating from Dijkstra's original presentation, several authors have considered a variation of the mutual exclusion problem where the atomicity assumption is dropped \cite{Lamport74,peterson1983new,Lamport86Mutex1,Lamport86Mutex2,Szy88,Szy90,anderson1993fine,aravind2010yet}.
Attempted operations can then occur immediately, without blocking each other. We say these operations are \emph{non-blocking}. In this context, read and write operations may be \emph{concurrent}, i.e. overlap in time. 
We must then consider the consequences of operations overlapping each other.

In \cite{Lamport86IPCbasic,Lamport86IPCalg},\pagebreak[3] Lamport proposes a hierarchy of three memory models in this context, specifically for single-writer multi-reader (SWMR) registers; such registers are owned by one thread, and only that thread is capable of writing to it.
Crucial for these definitions is the assumption that every register has a domain, and a read of that register always yields a value from that domain. It is also important to note that threads can only perform a single operation at a time, meaning that a thread's operations can never overlap each other.
\begin{itemize}
\item A \textbf{safe} register guarantees merely that a read that is not concurrent with any write returns the most recently written value.
    \item A \textbf{regular} register guarantees that any read returns either the last value written before it started, or the value of any overlapping write, if there is one. 
    \item An \textbf{atomic} register guarantees that reads and writes behave as though they occur in some total order. This total order must comply with the real-time ordering of the operations: if operation $a$ ends before operation $b$ begins, then $a$ must be ordered before $b$.
\end{itemize}
In \autoref{sec:non-atomic-communication} we illustrate the differences between these memory models with an example. They form a hierarchy, in the sense that any atomic register is regular, and any regular one is safe. 
When we merely know that a register is safe,
a read that overlaps with any write might return any value in the domain of the register. In \autoref{sec:registers summary} we discuss the generalisation of these memory models to multi-writer multi-reader (MWMR) registers, ones that can be written and read by all threads.

Besides blocking and non-blocking registers, as explained above, we consider two intermediate memory models. The \emph{blocking model with concurrent reads} requires (1) any scheduled read or write to await the completion of any write that is in progress, and (2) any scheduled write to await the completion of any unfinished read. However, reads from different threads need not wait for each other and may overlap in time without ill effects. In the model of \emph{non-blocking reads},\footnote{In this terminology, from \cite{CDV09}, a \emph{blocking read} blocks a write; it does not refer to a read that is blocked.} we have (1) but not (2).
This model, where writes block reads but reads do not block writes, may apply when writes can abort in-progress reads, superseding them.

\expandafter\ifx\csname graph\endcsname\relax
   \csname newbox\expandafter\endcsname\csname graph\endcsname
\fi
\ifx\graphtemp\undefined
  \csname newdimen\endcsname\graphtemp
\fi
\expandafter\setbox\csname graph\endcsname
 =\vtop{\vskip 0pt\hbox{%
    \graphtemp=.5ex
    \advance\graphtemp by 0.429in
    \rlap{\kern 0.000in\lower\graphtemp\hbox to 0pt{\hss \emph{blocking reads and writes}\hss}}%
    \graphtemp=.5ex
    \advance\graphtemp by 0.286in
    \rlap{\kern 0.000in\lower\graphtemp\hbox to 0pt{\hss \emph{blocking model with concurrent reads}\hss}}%
    \graphtemp=.5ex
    \advance\graphtemp by 0.143in
    \rlap{\kern 0.000in\lower\graphtemp\hbox to 0pt{\hss \emph{blocking writes and non-blocking reads}\hss}}%
    \graphtemp=.5ex
    \advance\graphtemp by 0.000in
    \rlap{\kern 0.000in\lower\graphtemp\hbox to 0pt{\hss \emph{non-blocking reads and writes}\hss}}%
    \graphtemp=.5ex
    \advance\graphtemp by 0.429in
    \rlap{\kern 2.857in\lower\graphtemp\hbox to 0pt{\hss \emph{atomic registers}\hss}}%
    \graphtemp=.5ex
    \advance\graphtemp by 0.214in
    \rlap{\kern 2.857in\lower\graphtemp\hbox to 0pt{\hss \emph{regular registers}\hss}}%
    \graphtemp=.5ex
    \advance\graphtemp by 0.000in
    \rlap{\kern 2.857in\lower\graphtemp\hbox to 0pt{\hss \emph{safe registers}\hss}}%
\pdfliteral{
q [] 0 d 1 J 1 j
0.576 w
0.576 w
q [3.6 4.] 0 d
92.592 -30.888 m
164.592 -30.888 l
S Q
q [3.6 4.081382] 0 d
92.592 -20.592 m
164.592 -30.888 l
S Q
q [3.6 3.528678] 0 d
92.592 -10.296 m
164.592 -30.888 l
S Q
92.592 0 m
164.592 -30.888 l
S
92.592 0 m
164.592 -15.408 l
S
92.592 0 m
164.592 0 l
S
Q
}%
    \hbox{\vrule depth0.429in width0pt height 0pt}%
    \kern 2.857in
  }%
}%

\centerline{\box\graph}
\vspace{2ex}

In this paper, we model six different memory models, which are illustrated above. The blocking aspect of our memory models is captured via different concurrency relations (\autoref{sec:justness-thread-register}). The distinction between safe, regular and atomic registers is captured via three different process algebraic models (\autoref{sec:registers summary}). \hypertarget{justification}{Since the safe/regular/atomic distinction is only relevant in models that allow writes to overlap reads and writes, we only make it for the non-blocking model; for the other three memory models we reuse our atomic register models}.

\vspace{-3pt}
\subparagraph{Completeness criteria.}

In previous work \cite{spronck2023process}, we checked the mutual exclusion property of several algorithms, with safe, regular and atomic MWMR registers, through model checking with the mCRL2 toolset~\cite{mCRL2toolset}.
We did not check the liveness properties at that time; the presence of certain infinite loops in our models introduced spurious counterexamples to such properties, which hindered our verification efforts.
As an example of what we call a ``spurious counterexample'', we frequently found violations to starvation freedom where one thread, $i$, never obtained access to its critical section because a different thread, $j$, was endlessly repeating a busy wait, or some other infinite cycle which should reasonably not prevent $i$ from progressing to its critical section.
Yet, the model checker does not know this, and can therefore only conclude that the property is not satisfied.\pagebreak[3]
In this paper, we extend our previous work by addressing this problem and checking liveness properties as well.

One method for discarding spurious counterexamples from verification results is applying completeness criteria: rules for determining which paths in the model represent real executions of the modelled system.
By ensuring that all spurious paths are classified as incomplete and only taking complete paths into consideration when verifying liveness properties, we can circumvent the spurious counterexamples.
Of course, one must take care not to discard true system executions by classifying those as incomplete.
The completeness criterion must therefore be chosen with care.
Examples of well-known completeness criteria are weak fairness and strong fairness.
Weak fairness assumes that every task\footnote{What constitutes a \emph{task} differs from paper to paper; hence there are multiple flavours of strong and weak fairness; here a task could be a read or write action of a certain thread on a certain register.} that eventually is perpetually enabled must occur infinitely often; strong fairness assumes that if a task is infinitely often enabled it must occur infinitely often \cite{lehmann1981impartiality,apt1983proof,glabbeek2019progress}.
In effect, making a fairness assumptions amounts to assuming that if something is tried often enough, it will always eventually succeed \cite{glabbeek2019progress}.
In that sense, these assumptions, even weak fairness, are rather strong, and may well result in true system executions being classified as incomplete.
In this paper, we therefore use the weaker completeness criterion \emph{justness} \cite{glabbeek2019progress,glabbeek2019justness,bouwman2020off}.

Unlike weak and strong fairness, justness takes into account how different actions in the model relate to each other. 
Informally, it says that if an action $a$ can occur, then eventually $a$ occurs itself, or a different action occurs that interferes with the occurrence of $a$.
The underlying idea
of justness 
is that the different components that make up a system must all be capable of making progress: if thread $i$ wants to perform an action entirely independent of the actions performed by $j$, then there can be no interference.
However, if both threads are interacting with a shared register, then we may decide that one thread writing to the register can prevent the other from reading it at the same time, or vice versa.
Which actions interfere with each other is a modelling decision, dependent on our understanding of the real underlying system.
It is formalised through a \emph{concurrency relation}, which must adhere to some restrictions. In this paper we propose four concurrency relations, each modelling one of the four major memory models reviewed above: 
non-blocking reads and writes, blocking writes and non-blocking reads, the blocking model with concurrent reads, and blocking reads and writes.

\subparagraph{Model checking.}
Traditionally, mutual exclusion algorithms have been verified by pen-and-paper proofs using behavioural reasoning. As remarked by Lamport \cite{Lamport86Mutex2},
``the behavioral reasoning used in our correctness proofs, and in most other published correctness proofs of concurrent algorithms, is inherently unreliable''.  This is especially the case when dealing with the intricacies of non-atomic registers.\footnote{A good illustration of unreliable behavioural reasoning is given in \cite[Section 21]{glabbeek2023modelling}, through a short but fallacious argument that the mutual exclusion property of Peterson's mutual exclusion protocol, which is known to hold for atomic registers, would also hold for safe registers. We challenge the reader to find the fallacy in this argument before looking at the solution.} 
This problem can be alleviated by automated formal verification; here we employ model checking.

While the precise modelling of the algorithms, the registers and the employed completeness criterion requires great care, the subsequent verification requires a mere button-push and some patience.
Since our model checker traverses the entire state-space of a protocol, the verified protocols and all their registers need to be finite. This prevented us from checking the bakery algorithm \cite{Lamport74}, as it is one of the few mutual exclusion protocols that employs an unbounded state space.\pagebreak[3] Moreover, those algorithms that work for $N$ threads, for any $N\in\mathbbm{N}$, could be checked for small values of $N$ only; in this paper we take $N=3$. Consequently, any failure of a correctness property that shows up only for $>3$ threads will not be caught here.

As stated, we employed these methods in previous work to check mutual exclusion algorithms.
Although there we checked only safety properties, and did not consider the blocking aspects of memory, this already gave interesting results.
For instance, we showed that Szymanski's flag algorithm from \cite{Szy88}, even when adapted to use Booleans, violates mutual exclusion with non-atomic registers.
Here, we expand this previous work by checking deadlock freedom and starvation freedom in addition to mutual exclusion, and by including blocking into our memory models.
In total, we check the three correctness properties of over a dozen mutual exclusion algorithms, for six different memory models.
Among others, we cover Aravind's BLRU algorithm \cite{aravind2010yet}, Dekker's algorithm \cite{dijkstra1962over,alagarsamy2003some} and its RW-safe variant \cite{buhr2016dekker}, and Szymanski's 3-bit linear wait algorithm \cite{Szy90}.
In some cases where we find property violations, we suggest fixes to the algorithms so that the properties are satisfied.

\section{Preliminaries}\label{sec:preliminaries}

A \emph{labelled transition system} (LTS) is a tuple $(\states, \actionset, \initstate, \transrel)$ in which $\states$ is a finite set of states, $\actionset$ is a finite set of actions, $\initstate \in \states$ is the initial state, and $\transrel \subseteq \states \times \actionset \times \states$ is a transition relation.
We write $s \xrightarrow{a} s'$ for $(s, a, s') \in \transrel$.
We say an action $a$ is \emph{enabled} in a state $s$ if there exists a state $s'$ such that $(s, a, s') \in \transrel$.

A \emph{path} $\pi$ is a non-empty, potentially infinite alternating sequence of states and actions $s_0 a_1 s_{1} a_2 \ldots$, with $s_0, s_1, \ldots \in \states$ and $a_1, a_2, \ldots \in \actionset$, such that if $\pi$ is finite, then its last element is a state, and for all $i \in \mathbbm{N}$, $s_i \xrightarrow{a_{i+1}} s_{i+1}$. 
The first state of $\pi$ is its \emph{initial state}.
The \emph{length} of $\pi$ is the number of transitions in it.

We use a notion of parallel composition that is taken from Hoare's CSP \cite{Ho85}, where synchronisation between components is enforced on all shared actions. 
It is defined as follows:
    For some $k\geq 1$, let $P_1, \ldots, P_k$ be LTSs, where $P_i = (\states_i, \actionset_i, \initstate_i, \transrel_i)$ for all $1 \leq i \leq k$.
    The \emph{parallel composition} $P_1 \parcomp \ldots \parcomp P_k$ of $P_1, \ldots, P_k$ is the LTS $P = (\states, \actionset, \initstate, \transrel)$ in which $\states = \states_1 \times \ldots \times \states_k$, $\actionset = \bigcup_{1 \leq i \leq k}\actionset_i$, $\initstate = (\initstate_1, \ldots, \initstate_k)$, and a transition $((s_1, \ldots, s_k), a, (s_1', \ldots, s_k'))$ is in $\transrel$ if, and only if, $a \in \actionset$ and the following are true for all $1 \leq i \leq k$:
                   if $a \notin \actionset_i$, then $s_i = s_i'$, and
                   if $a \in \actionset_i$, then $(s_i, a, s_i') \in \transrel_i$.
\\
Note that by this definition of $\transrel$, if an action is in the action set of a component but not enabled by that component in a particular state of the parallel composition, then the composition cannot perform a transition labelled with that action.

As mentioned in the introduction, the completeness criterion we use for our liveness verification is a variant of \emph{justness} \cite{glabbeek2019justness,glabbeek2019progress}.
Specifically, while justness is originally defined on transitions, we here define it on action labels, an adaption we take from \cite{bouwman2020off}. 
As stated earlier, the definition of justness relies on the notion of a concurrency relation.

\hypertarget{second}{
\begin{definition}\rm\label{def:conc}
    Given an LTS $(\states, \actionset, \initstate, \transrel)$, a relation $\conc \subseteq \actionset \times \actionset$ is a \emph{concurrency relation} if, and only if:
    \begin{itemize}
        \item $\conc$ is irreflexive.
        \item For all $a \in \actionset$, if $\pi$ is a path from a state $s \in \states$ to a state $s' \in \states$ such that $a$ is enabled in $s$ and $a \conc b$ for all $b \in \actionset$ occurring on $\pi$, then $a$ is enabled in $s'$.
    \end{itemize}
\end{definition}}\vspace{1ex}
A concurrency relation may be asymmetric. We often reason about the complement of $\conc$, $\nconc$. Read $a \conc b$ as ``$a$ is independent from $b$'' and $a \nconc b$ as ``$b$ interferes with/postpones $a$''. 

\begin{observation}\rm\label{obs:subset}
Concurrency relations can be refined by removing pairs; a subset of a concurrency relation is still a concurrency relation.
\end{observation}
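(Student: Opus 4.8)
The plan is to verify directly that any subset $\conc'$ of a concurrency relation $\conc$ satisfies both defining clauses of \autoref{def:conc}, keeping in mind throughout that $\conc'$ is interpreted over the very same LTS as $\conc$. This is the crucial framing point: the notions of ``enabled in a state'' and of a ``path'' depend only on the LTS, not on the concurrency relation, so moving from $\conc$ to $\conc'$ changes nothing except the set of admissible pairs.

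For irreflexivity, I would simply note that if $(a,a) \notin \conc$ for every $a \in \actionset$ and $\conc' \subseteq \conc$, then \emph{a fortiori} $(a,a) \notin \conc'$; hence $\conc'$ is irreflexive. The second clause is where the actual reasoning lies, though it too is short. First I would fix an action $a \in \actionset$ and a path $\pi$ from a state $s$ to a state $s'$ with $a$ enabled in $s$ and $a \conc' b$ for every action $b$ occurring on $\pi$, aiming to conclude that $a$ is enabled in $s'$.

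The key step is to observe that passing to a subset \emph{strengthens}, rather than weakens, the universally quantified premise of the clause. Since $\conc' \subseteq \conc$, the assumption that $a \conc' b$ for all $b$ occurring on $\pi$ immediately yields $a \conc b$ for all such $b$. Combined with the fact that $a$ is enabled in $s$, the second clause for the original relation $\conc$ now applies verbatim and gives that $a$ is enabled in $s'$, as required. Thus both clauses hold and $\conc'$ is a concurrency relation.

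The only thing to get right — and the sole place where a careless argument could slip — is the direction of this implication. Removing pairs makes the hypothesis of the clause harder to meet, so the collection of path/action situations in which the conclusion must be established only shrinks, and every situation that does arise for $\conc'$ was already handled by $\conc$. There is no genuine obstacle beyond articulating this monotonicity correctly; no property of the LTS, of enabledness, or of paths needs to be re-examined.
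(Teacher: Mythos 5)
Your proof is correct and matches the reasoning the paper leaves implicit: the paper states this as an unproved observation, and the intended justification is exactly your monotonicity argument --- shrinking the relation preserves irreflexivity trivially and only strengthens the hypothesis of the second clause of \autoref{def:conc}, so every path/action situation arising for the subset was already covered by the original relation. Nothing is missing.
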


Informally, justness says that a path is complete if whenever an action $a$ is enabled along the path, there is eventually an occurrence of an action (possibly $a$ itself) that interferes with it. 
This can be weakened by defining a set of \emph{blockable actions}, for which this restriction does not hold; a blockable action may be enabled on a complete path without there being a subsequent occurrence of an interfering action.
In this paper, the action of a thread to leave its non-critical section will be blockable. This way we model that a thread may choose to never take that option.
We give the formal definition of justness, incorporating the blockable actions. We represent the set of blockable actions as $\block$. Its complement, $\nonblock$, is defined as $\actionset \setminus \block$, given a set of actions $\actionset$.

\hypertarget{just}{
\begin{definition}\rm\label{def:justness}
    A path $\pi$ in an LTS $(\states, \actionset, \initstate, \transrel)$ satisfies $\block$-$\conc$-\emph{justness of actions} ($\justact{\conc}{\block}$) if, and only if, for each suffix $\pi'$ of $\pi$, if an action $a \in \nonblock$ is enabled in the initial state of $\pi'$, then an action $b \in \actionset$ occurs in $\pi'$ such that $a \nconc b$.
\end{definition}}
We say that a property is satisfied on a model under $\justact{\conc}{\block}$ if it is satisfied on every path of that model, starting from the model's initial state, that satisfies $\justact{\conc}{\block}$. If $\block$ and $\conc$ are clear from the context, we simply say that a path that satisfies $\justact{\conc}{\block}$ is \emph{just}.

\section{Register models}\label{sec:registers summary}

In \cite{spronck2023process}, we presented process-algebraic models of MWMR safe, regular and atomic registers. Through the semantics of the process algebra, this determines an LTS for each register of a given kind. In this paper, we use the same definitions for the three register types, but we have altered the process-algebraic models to be more compact and better facilitate the definition of the concurrency relations. The process-algebraic models can be found in \autoref{sec:registers}; here we merely summarise the key design decisions.

A register model represents a multi-reader multi-writer read-write register that allows every thread to read from and write to it.
However, every thread may only perform a single operation on the register at a time.
The register model specifies the behaviour of the register in response to operations performed by threads.
Here we presuppose two disjoint finite sets: $\TID$ of thread identifiers (thread id's) and $\RID$ of register identifiers (register id's).
Additionally, for every $r\mathbin\in\RID$ we reference the set $\Data$ of all data values that the register $r$ can hold. 

Recall that read and write operations take time, and may hence be concurrent. Therefore we represent a single operation with two actions: an \emph{invocation} to indicate the start of the operation, and a \emph{response} to indicate its end. Two operations are concurrent if the interval between their respective invocations and responses overlaps.
The interface of a register is represented by the following actions: a read by thread $\tid \in \TID$ of register $\rid \in \RID$ that returns value $\data \in \Data$ is a start read action $\startread[\tid,\rid]$ followed by a finish read action $\finishread[\tid,\rid]{\data}$; a write of value $\data \in \Data$ by a thread $\tid \in \TID$ to register $\rid \in \RID$ is a start write action $\startwrite[\tid,\rid]{\data}$ followed by a finish write action $\finishwrite[\tid,\rid]$.
In addition to this interface, which the threads can use to perform operations on the register, 
the regular and atomic models also use \emph{register local actions}; these are internal actions by the register that are used to model the correct behaviour.

A register model requires some finite amount of memory to store a representation of relevant past events. We store this in what we call the \emph{status object}, which features a finite set $\StatusAll$ of possible states. We abstract away from the exact implementation; for this presentation, all that is relevant is which information can be retrieved from it. Amongst others, we use the following \emph{access functions}, which are local to any given register $r$:
\begin{itemize}
    \item $\truevalsym: \StatusAll \rightarrow \Data$, the value that is currently stored in the register.
    \item $\writerssym: \StatusAll \rightarrow 2^{\TID}$, the set of thread id's of threads that have invoked a write operation on this register that has not yet had its response.
\end{itemize}
Any occurrence of a register action $a$ induces a state change $s \xrightarrow{a} s'$, resulting in an \emph{update} to these access functions. For instance, the actions $\startwrite[\tid,\rid]{\data}$ and $\finishwrite[\tid,\rid]$ cause the updates $\writers{s'} = \writers{s} \cup \{\tid\}$ and $\writers{s'} = \writers{s} \setminus \{\tid\}$, respectively.

\vspace{-3pt}
\subsection{Safe MWMR registers}\label{sec:safe}
To extend the single-writer definition of safe registers to a multi-writer one, we follow Lamport in assuming that a concurrent read cannot affect the behaviour of a read or write. Lamport's SWMR definitions consider how concurrent writes affect reads, but not how concurrent writes affect writes.
Here we follow Raynal's approach for safe registers: a write that is concurrent with another write sets the value of the register to some arbitrary value in the domain of the register \cite{Raynal13}.
We can summarise the behaviour of MWMR safe registers with four rules:
\begin{enumerate}
    \item A read not concurrent with any writes on the same register returns the value most recently written into the register.
    \item A read concurrent with one or more writes on the same register returns an arbitrary value in the domain of the register.
    \item A write not concurrent with any other write on the same register results in the intended value being set.
    \item A write concurrent with one or more other writes on the same register results in an arbitrary value in the domain of the register being set.\vspace{-3pt}
\end{enumerate}

In our model of a safe register $r$, its status object maintains a Boolean variable $\overlapsym$ for each thread id, telling whether an ongoing read or write action of this thread overlapped with a write by another thread. The value of $\overlapsym$ is updated in a straightforward way each time $r$ experiences a register interface action $\startread[\tid,\rid]$, $\finishread[\tid,\rid]{\data}$, $\startwrite[\tid,\rid]{\data}$ or $\finishwrite[\tid,\rid]$, aided by the access function $\writerssym$.
Using this function, our model can determine which of the above four rules applies when a read or write finishes, and behave accordingly.

\vspace{-3pt}
\subsection{Regular MWMR registers}

We wish to define regular MWMR registers as an extension of Lamport's definition of SWMR regular registers: a read returns either the last written value before the read began, or the value of any concurrent write, if there is one. This is non-trivial; in \cite{spronck2023process} we present one extension and compare it to four different suggestions from \cite{Shao11}. The complexity comes from determining what the last written value is, given that writes may be concurrent with each other. Here, 
following \cite{spronck2023process}, we require that all threads see the same global ordering on writes once those writes have completed. Hence, if two writes $w_1$ and $w_2$ occur concurrently, and after their completion, but before the invocation of any other write, there are two reads $r_1$ and $r_2$, then either both $r_1$ and $r_2$ see $w_1$'s value as the last written value, or they both see $w_2$'s value as the last written value.
We generate the global ordering through the register local order write action $\orderwrite[\tid,\rid]$, which is scheduled between the start write action $\startwrite[\tid,\rid]{\data}$ and the finish write $\finishwrite[\tid,\rid]$. This action does not represent any true internal behaviour by the register; the interleaving of order write actions from various threads merely determines the global ordering. 
Given a read, we say the ``last written'' value this read sees, and hence the value this read may return in addition to those of overlapping writes, is the intended value of the write whose ordering was the most recent before the read's invocation.

In our process algebraic model, the value $\truevalsym$ of the register is set when an action $\orderwrite[\tid,\rid]$ occurs. During any read action of a thread $\tid$, that is, between $\startread[\tid,\rid]$ and $\finishread[\tid,\rid]{\data}$, the register model builds a set of the possible return values on the fly.
When the read starts, this set is initialised to $\truevalsym$ and the intended value of every active write.
Subsequently, whenever a write occurs, its intended value is added to the set.
This way, at the finish read, the set will contain exactly those values that the read could return.

\subsection{Atomic MWMR registers}
\vspace{-3pt}
Lamport's definition of SWMR atomic registers, namely that the register must behave as though reads and writes occur in some strict order, is directly applicable to the MWMR case.
We reuse the register local order write action from the regular register model, and add the similar order read action $\orderread[\tid,\rid]$ for read operations. This way, we generate an ordering on all operations. In our process-algebraic model, $\truevalsym$ is updated when the $\orderwrite[\tid,\rid]$ occurs, similar to regular registers. For read operations, the value of $\truevalsym$ when $\orderread[\tid,\rid]$ occurs is remembered, and returned at the matching response.

\vspace{-2pt}
\section{Thread-register models}\label{sec:thread-register}
\vspace{-3pt}
In our models, we combine processes representing both threads and registers. 
Similar to how register processes may contain register local actions, threads may contain \emph{thread local actions}.
Crucially, the local actions of both registers and threads are not involved in any communication, meaning that the only way for two threads to communicate is by writing to and reading from registers using the register interface actions.

The register models are mostly independent of the algorithm that we analyse with them; the algorithm merely dictates a register's identifier, domain, and initial value.
However, the thread models are fully dependent on the modelled algorithm, which dictates their behaviour.
Therefore, we cannot present an algorithm-independent thread process.
Instead, we presuppose the existence of an LTS $T_{\tid} = (\states_{\tid}, \actionset_{\tid}, \initstate_{\tid}, \transrel_{\tid})$ and a set of thread local actions $\thrlocacts_{\tid}$ for every $\tid \mathbin\in \TID$ such that $\actionset_{\tid} = \{\startread[\tid, \rid], \finishread[\tid, \rid]{\data}, \startwrite[\tid, \rid]{\data}, \finishwrite[\tid, \rid] \mid \rid \mathbin\in \RID, \data \mathbin\in \Data\} \cup \thrlocacts_{\tid}$. We assume that all sets of thread local actions are pairwise disjoint and that all thread LTSs $T_{\tid}$ satisfy two properties, representing the reasonable implementation of read and write operations. Firstly, on all paths from $\initstate_{\tid}$, each transition labelled $\startread[\tid,\rid]$ for some $\rid \in \RID$ must go to a state where exactly the actions $\finishread[\tid,\rid]{\data}$ for all $\data \in \Data$ are enabled. Similarly, all transitions labelled $\startwrite[\tid,\rid]{\data}$ for some $\rid\in\RID, \data \in \Data$ must go to states where only $\finishwrite[\tid,\rid]$ is enabled. Secondly, transitions labelled $\finishread[\tid,\rid]{\data}$ or $\finishwrite[\tid,\rid]$ are only enabled in these states.

We combine thread and register LTSs into \emph{thread-register models}.
For the following definition, we let $R_{\rid} = (\states_{\rid}, \actionset_{\rid}, \initstate_{\rid}, \transrel_{\rid})$ be the LTS associated with each $\rid \in \RID$.
\begin{definition}\rm\label{def:tr-model}
    A \emph{thread-register model} is a six-tuple $(\states, \actionset, \initstate, \transrel, \thrsym, \regsym)$, such that $(\states, \actionset, \initstate, \transrel)$ is a parallel composition of thread and register LTSs and
    \begin{itemize}
        \item $\thrsym \!: \actionset \rightarrow \TID$ is a mapping from actions to thread id's; and
        \item $\regsym \!: \actionset \rightarrow \RID \cup \{\undefsymb\}$ is a mapping from actions to register id's and the special value $\undefsymb \mathbin{\notin} \RID$.
    \end{itemize}
\noindent
$\thrsym$ and $\regsym$ are defined in the obvious way, e.g., $\thrmap{\startread[\tid,\rid]} = \tid$ and $\regmap{\startread[\tid,\rid]} = \rid$.
Crucially, for a thread local action $a$, $\regmap{a} = \undefsymb$.
\end{definition}
Note that by our construction of the thread and register LTSs, every action in $\actionset$ appears in at most two components of the parallel composition. Specifically, for all $\tid \in \TID$ and $a_{\tid} \in \thrlocacts_{\tid}$, $a_{\tid}$ appears only in $T_{\tid}$; for all $\tid \in \TID, \rid \in \RID$, $\orderread[\tid,\rid]$ and $\orderwrite[\tid,\rid]$ appear only in $R_{\rid}$; and for all $\tid \in \TID, \rid \in \RID$ and $\data \in \Data$, $\startread[\tid,\rid], \finishread[\tid,\rid]{\data}, \startwrite[\tid,\rid]{\data}$ and $\finishwrite[\tid,\rid]$ appear exactly in $T_{\tid}$ and $R_{\rid}$.

\vspace{-2pt}
\section{Justness for thread-register models}\label{sec:justness-thread-register}

\vspace{-2pt}
In order to obtain a suitable notion of justness for our thread-register models, we need to choose both $\block$ and $\conc$. 
Only thread local actions will be blockable; we define $\block$ in \autoref{sec:verification}.

The concurrency relation, on the other hand, should relate the register interface actions.
This is how we represent whether it is reasonable for one thread's operations on a register to interfere with (and thereby postpone) another thread's operations on that register.
We use four different concurrency relations in our verifications, representing the four different models of blocking described in \autoref{sec:introduction}. 
These concurrency relations do not reference the thread local actions outside of the $\thrsym$ mapping, so we can already present these relations before giving more details on the precise models.
To establish that the relations we present are indeed concurrency relations, we first establish a property of our models. We call this property \emph{thread consistency}.

\begin{definition}\rm\label{def:thr-const}
    An LTS $(\states, \actionset, \initstate, \transrel)$ is \emph{thread consistent} with respect to a mapping $\thrsym: \actionset \rightarrow \TID$ if, and only if, for all states $s \in \states$, if an action $a \in \actionset$ is enabled in $s$ and there exists a transition $s \xrightarrow{b} s'$ for some $s' \in \states, b \in \actionset$ such that $\thrmap{a} \neq \thrmap{b}$, then $a$ is also enabled in $s'$.
\end{definition}

\noindent
The correctness of our concurrency relations (cf.\ \autoref{def:conc}) relies on our thread-register models being thread-consistent. 
The proof of this fact is given in \autoref{app:thr-consist-proof}.
\begin{restatable}{lemma}{thrconsist}\rm\label{lem:thr-consist}
    Let $M = (\states, \actionset, \initstate, \transrel, \thrsym, \regsym)$ be a thread-register model. 
    Then the LTS $(\states, \actionset, \initstate, \transrel)$ is thread consistent with respect to the mapping $\thrsym$.
\end{restatable}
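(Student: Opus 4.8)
The plan is to reduce thread consistency of the full parallel composition to thread consistency of the individual register LTSs, exploiting the fact that the components of a CSP-style parallel composition evolve independently except on their shared actions. Fix a state $s$, an action $a$ enabled in $s$, and a transition $s \xrightarrow{b} s'$ with $\thrmap{a} \neq \thrmap{b}$. By the definition of parallel composition from \autoref{sec:preliminaries}, $a$ is enabled in $s'$ as soon as, for every component LTS $C$ whose action set contains $a$, the local state of $C$ in $s'$ still enables $a$; so I would argue component by component.

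First I would dispose of the easy case. If a component $C$ with $a \in \actionset_{C}$ does not have $b$ in its action set, then the $b$-transition leaves the $C$-component untouched, so the local state of $C$ is identical in $s$ and $s'$; since $a$ was locally enabled there in $s$, it remains so in $s'$. Hence the only components that could threaten the enabledness of $a$ are those $C$ with $\{a,b\} \subseteq \actionset_{C}$. Next I would show that any such component must be a register LTS. Every action in the action set of a thread LTS $T_{u}$ for a thread id $u$ — whether an interface action of $u$ or a thread-local action from $\thrlocacts_{u}$ — is mapped by $\thrsym$ to $u$. Thus if both $a$ and $b$ lay in $\actionset_{u}$ we would have $\thrmap{a} = u = \thrmap{b}$, contradicting $\thrmap{a} \neq \thrmap{b}$. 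By the structural analysis following \autoref{def:tr-model} of which actions occur in which components, it follows that a component containing both $a$ and $b$ can only be some register LTS $R_{\rid}$.

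It therefore remains to prove that each register LTS $R_{\rid}$ is itself thread consistent with respect to $\thrsym$: a register action attributed to one thread never disables a register action attributed to a different thread. I expect this to be the main obstacle, since — unlike the preceding purely structural steps — it requires inspecting the concrete safe, regular and atomic register models of \autoref{sec:registers summary} (fully spelled out in \autoref{sec:registers}) and carrying out a case analysis over the pairs of register-action types, tracking the induced updates to the access functions such as $\writerssym$ and $\truevalsym$. The intuition driving these cases is that the register models are deliberately non-blocking at the level of the transition relation: invocations $\startread[\tidtwo,\rid]$ and $\startwrite[\tidtwo,\rid]{\data}$ of one thread do not constrain the invocations of another, a pending response $\finishread[\tid,\rid]{\data}$ or $\finishwrite[\tid,\rid]$ stays enabled (possibly across the full range of admissible return values) regardless of another thread's action, and the register-local actions $\orderwrite[\tidtwo,\rid]$ and $\orderread[\tidtwo,\rid]$ merely interleave to fix a global order without revoking another thread's pending order action — all blocking being deferred to the concurrency relation rather than encoded in $\transrel_{\rid}$.

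Once register-level thread consistency is established for all three register types, the argument closes: for every component $C$ with $a \in \actionset_{C}$, either $b \notin \actionset_{C}$ and the component is unchanged, or $C$ is a register LTS, which is thread consistent and hence keeps $a$ enabled after the $b$-step. In either case the local state of $C$ in $s'$ enables $a$, so $a$ is enabled in $s'$, which is exactly what \autoref{def:thr-const} demands.
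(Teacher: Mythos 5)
Your structural reduction is sound and is essentially the same as the paper's own opening moves: by the definition of parallel composition only the components sharing the action $b$ can change state, a thread LTS cannot contain actions of two distinct threads (so no thread component contains both $a$ and $b$), and hence the claim reduces to showing that, inside a single register LTS, an action attributed to thread $\tid$ stays enabled after an action attributed to a different thread. However, you stop exactly where the real work begins: you label the register-level case analysis the ``main obstacle'' and replace it with the intuition that the register models are non-blocking at the level of the transition relation, so that in particular a pending response $\finishread[\tid,\rid]{\data}$ ``stays enabled \dots regardless of another thread's action.'' That intuition is false as stated, and this is precisely the point where the paper's proof has to do something genuinely non-trivial.

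Concretely, in the safe register model a read by $\tid$ with $\neg\overlap{s,\tid}$ enables only the single action $\finishread[\tid,\rid]{\trueval{s}}$, with that one specific return value. If another thread $\tidtwo$ could in that state perform $\finishwrite[\tidtwo,\rid]$, the update $\ufwsym$ could change $\truevalsym$, thereby disabling exactly that action; so register-level thread consistency does \emph{not} hold as a purely local, transition-by-transition property of the raw LTS on $\StatusAll$. The paper closes this hole with an auxiliary invariant (\autoref{lem:overlap}): in every state reachable from the register's initial state, $\tid \in \readers{s}$ together with $\writers{s} \neq \emptyset$ implies $\overlap{s,\tid}$, so the problematic configuration ($\tid$ reading without its overlap flag set while another thread has a pending write) can never arise. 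Establishing that invariant requires reasoning about the history of paths from the initial state, not just about the update functions of one step, and it is the heart of the proof --- alongside the remaining case analysis for the finish and order actions of the regular and atomic models, which rests on the per-thread components of $\pendingsym$, $\posvalsym$ and $\valssym$ being untouched by other threads' actions. Without carrying out this analysis, and without the invariant that rescues the safe-register case, what you have is a correct plan rather than a proof.
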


For the definitions of our four concurrency relations, we fix a thread-register model $M = (\states, \actionset, \initstate, \transrel, \thrsym, \regsym)$.
We also introduce two predicates on $\actionset$: $\issrsym$  and $ \isswsym$. For an action $a \in \actionset$, these are defined as:
\[
    \issr{a} = \exists_{\tid\in\TID,\rid\in\RID}. (a=\startread[\tid,\rid])
\qquad\qquad
    \issw{a} = \exists_{\tid\in\TID,\tid\in\RID,\data\in\Data}. (a=\startwrite[\tid,\rid]{\data}).
\]

The \emph{thread interference relation}, $\conc_T$, expresses that every action is independent from every other action \emph{unless} the two actions belong to the same thread; every two actions by the same thread interfere with each other. It captures the memory model with non-blocking reads and writes. This is the coarsest concurrency relation we will use.

\begin{definition}\rm\label{def:concT}
    $\conc_T = \{(a, b) \mid a,b \in \actionset, \thrmap{a} \neq \thrmap{b}\}$
\vspace{-1ex}
\end{definition}

\begin{restatable}{lemma}{concTval}\rm\label{lem:concT-val}
   $\conc_T$ is a concurrency relation for $M$.
\end{restatable}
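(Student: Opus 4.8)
The plan is to verify the two defining conditions of a concurrency relation (\autoref{def:conc}) directly for $\conc_T$. Irreflexivity is immediate: a pair $(a,a)$ could only lie in $\conc_T$ if $\thrmap{a} \neq \thrmap{a}$, which is impossible, so $\conc_T$ contains no such pair and is therefore irreflexive.

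For the second, enabledness-preserving condition, the key resource is thread consistency (\autoref{lem:thr-consist}), which states precisely that an action $a$ enabled in a state survives a single transition labelled by an action of a different thread. I would fix an action $a \in \actionset$ together with a path $\pi$ from a state $s$ to a state $s'$ such that $a$ is enabled in $s$ and $a \conc_T b$ holds for every action $b$ occurring on $\pi$; by \autoref{def:concT} this last assumption unfolds to $\thrmap{a} \neq \thrmap{b}$ for every such $b$. The goal is then to show that $a$ is enabled in $s'$, and I would establish this by induction on the length of $\pi$.

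The base case is the length-zero path, where $s = s'$, so enabledness of $a$ in $s'$ is just the hypothesis. For the inductive step I would split $\pi$ into its first transition $s \xrightarrow{b_1} s_1$ followed by a suffix $\pi'$ from $s_1$ to $s'$. Since $b_1$ occurs on $\pi$, the hypothesis gives $\thrmap{a} \neq \thrmap{b_1}$, so \autoref{lem:thr-consist} applied to $s \xrightarrow{b_1} s_1$ yields that $a$ is enabled in $s_1$. Every action occurring on $\pi'$ also occurs on $\pi$, so the independence assumption $a \conc_T b$ is inherited along $\pi'$; the induction hypothesis applied to the shorter path $\pi'$ then delivers enabledness of $a$ in $s'$, completing the argument.

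I do not anticipate a genuine obstacle here: the lemma is essentially the statement that thread consistency, a single-step property, lifts to entire paths, and the only point requiring care is the bookkeeping observation that the set of actions occurring on a suffix $\pi'$ is contained in the set of actions occurring on $\pi$, so that the quantified independence hypothesis passes to $\pi'$ and feeds the induction. This is exactly why \autoref{def:thr-const} was phrased as a one-step condition — it is tailored to be iterated along a path.
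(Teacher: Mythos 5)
Your proposal is correct and follows essentially the same route as the paper: irreflexivity directly from the definition of $\conc_T$, and the enabledness-preservation condition by induction on the length of $\pi$, with \autoref{lem:thr-consist} supplying the one-step argument. The only (immaterial) difference is that you peel off the first transition and apply the induction hypothesis to the suffix, whereas the paper peels off the last transition and applies it to the prefix.
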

\noindent
This follows by a straightforward application of \autoref{lem:thr-consist}.
The details are in \autoref{app:concT-val}.

The model with blocking writes and non-blocking reads is captured by the \emph{signalling reads relation}, $\conc_S$.
\begin{definition}\rm
    $\conc_S = \conc_T \setminus \{(a, b) \mid a,b \in \actionset, \issr{a} \lor \issw{a},  \issw{b}, \regmap{a} = \regmap{b} \}$
\end{definition}
Intuitively, this is the same as $\conc_T$ except that one thread starting a write to a register can interfere with a write to or a read from that same register by another thread. However, a read cannot interfere with another thread's read or write.
This concurrency relation has a precedent in \cite{glabbeeksignals,bouwman2020off}. There, reads are modelled as \emph{signals}, which differ from standard actions in that they do not block any other actions. Hence the name of this relation.

The blocking model with concurrent reads is captured by the \emph{interfering reads relation}, $\conc_I$.
This is a further refinement from $\conc_S$, where a start read can interfere with a start write on the same register, but cannot interfere with a start read.

\begin{definition}\rm\label{def:concI}
    $\conc_I = \conc_S \setminus \{(a, b) \mid a,b \in \actionset, \issw{a},  \issr{b}, \regmap{a} = \regmap{b} \}$\pagebreak[3]
\end{definition}
This goes with the idea that performing a write on a memory location can only be done when the memory is reserved: repeated reads can prevent the memory from being reserved for a write, but as long as there is no write all the reads can take place concurrently.\pagebreak[4]

Finally, the model of blocking reads and writes is captured by the \emph{all interfering relation}, $\conc_{\!A}$, a refinement of $\conc_I$ where a start read can also interfere with another start read.
\begin{definition}\rm\label{def:concA}
    $\conc_{\!A}= \conc_I \setminus \{(a, b) \mid a,b \in \actionset, \issr{a},  \issr{b}, \regmap{a} = \regmap{b}\}$
\end{definition}

\noindent
In this model, every operation on a register fully reserves that register for only that operation, and hence can prevent any other operation from taking place at the same time.

\begin{lemma}\rm\label{lem:ai-val}
    $\conc_S$, $\conc_I$ and $\conc_{\!A}$ are concurrency relations for $M$.
\end{lemma}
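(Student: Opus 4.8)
The plan is to avoid re-proving the two defining clauses of \autoref{def:conc} from scratch, and instead to exploit the fact that the three relations are built as successive subsets of $\conc_T$. Reading off the definitions, each relation is obtained by deleting pairs: $\conc_S \subseteq \conc_T$, then $\conc_I \subseteq \conc_S$, and finally $\conc_{\!A} \subseteq \conc_I$. Chaining these inclusions, all three of $\conc_S$, $\conc_I$ and $\conc_{\!A}$ are subsets of $\conc_T$.

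I would then simply combine two results already available. \autoref{lem:concT-val} establishes that $\conc_T$ is a concurrency relation for $M$, and \autoref{obs:subset} states that any subset of a concurrency relation is again a concurrency relation. Since $\conc_S$, $\conc_I$ and $\conc_{\!A}$ are each subsets of the concurrency relation $\conc_T$, \autoref{obs:subset} yields immediately that each of them is itself a concurrency relation for $M$, which is exactly the claim.

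The only point deserving attention is that \autoref{obs:subset} carries the entire weight of the argument, so I would want to be sure it genuinely applies. Its justification is short and worth keeping in mind: irreflexivity is inherited by any subrelation, and for the enabledness-preservation clause, if $\conc' \subseteq \conc$ and $a \conc' b$ holds for all actions $b$ occurring on a path $\pi$, then a fortiori $a \conc b$ for all those $b$, so the preservation property already known for $\conc$ transfers verbatim to $\conc'$. Consequently I do not anticipate any real obstacle here: the substance of the lemma reduces to verifying the chain of inclusions and citing \autoref{lem:concT-val} and \autoref{obs:subset}.
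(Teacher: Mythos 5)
Your proposal is correct and matches the paper's own proof exactly: the paper also derives the lemma as an immediate consequence of \autoref{lem:concT-val} and \autoref{obs:subset}, using the fact that each relation is by construction a subset of $\conc_T$. Your additional remarks justifying \autoref{obs:subset} (irreflexivity and enabledness-preservation are inherited by subrelations) are sound, though the paper treats that observation as already established.
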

\begin{proof}
    This follows from \autoref{lem:concT-val} and \autoref{obs:subset}.
\end{proof}

\noindent
As stated \hyperlink{justification}{in the introduction}, we capture six memory models. We obtain three variants of the non-blocking model by combining the $\conc_T$ relation with the safe, regular and atomic register models. The remaining three memory models are represented by combining $\conc_S$, $\conc_I$ and $\conc_{\!A}$ with the atomic register model. In \autoref{app:complete paths}, we formally characterise just paths in our thread-register models for all six variants. 
In \autoref{app:blocking registers}, we prove that using the atomic register model, which allows overlapping writes, for the three memory models with blocking writes is sound for verification purposes.

\section{Verification}\label{sec:verification}

Below, and in \autoref{app:algorithms}, we collect over a dozen mutual exclusion algorithms from the literature.
We also present altered versions of several algorithms, incorporating fixes we propose.
All these algorithms, and the registers themselves, have been translated to the process algebra mCRL2 \cite{mCRL2language}.
\href{https://github.com/mCRL2org/mCRL2/tree/master/examples/academic/non-atomic_registers}{The mCRL2 files are available as supplementary material.}
We give the most important design decisions regarding this translation here; further details can be found in \autoref{app:mCRL2}.
The only operations on registers we allow are reading and writing.
Hence, more complicated statements that may have been present in the original presentation of an algorithm, such as compare-and-swap instructions, are converted into these primitive operations. A statement like ``\textbf{await} $\forall_{i \in \TID}: x(i)$'', where $x$ is some condition on a thread id $i$, is modelled as a recursive process that checks each thread id from smallest to largest, waiting for each until $x(i)$ is satisfied.
Where an algorithm does not specify the initial value of a register, we take the lowest value from the given domain.

We use $\crit[\tid]$ and $\noncrit[\tid]$, with $\tid\in\TID$, as thread local actions.
These actions represent a thread entering its critical section and leaving its non-critical section, respectively. We define $\block = \{\noncrit[\tid] \mid \tid \in \TID\}$ to capture that a thread may always choose to remain in its non-critical section indefinitely; this is an important assumption underlying the correctness of mutual exclusion protocols.
  
We did our verification with the mCRL2 toolset~\cite{mCRL2toolset}.
To this end, we encoded mutual exclusion, deadlock freedom, and starvation freedom in the modal $\mu$-calculus, the logic used to represent properties in the mCRL2 toolset.
We used the patterns from \cite{spronck2024progress} to incorporate the justness assumption into our formulae for deadlock freedom and starvation freedom.
The full modal $\mu$-calculus formulae appear in \autoref{app:mucalc} (and also as supplementary material).
Besides the correctness properties discussed in \autoref{sec:introduction}, Dijkstra \cite{dijkstra65} requires that the correctness of the algorithms may not depend on the relative speeds of the threads. This requirement is automatically satisfied in our approach, since we allow all possible interleavings of thread actions in our models.

We checked mutual exclusion, deadlock freedom, and starvation freedom. 
If mutual exclusion is not satisfied, we do not care about the other two properties.
Additionally, if deadlock freedom is not satisfied, we know that starvation freedom is not satisfied either.
We can therefore summarise our results in a single table: $\satnone$ if none of the three properties are satisfied, $\satmutex$ if only mutual exclusion is satisfied, $\satdf$ if only mutual exclusion and deadlock freedom hold, and $\satsf$ if all three are satisfied. See \autoref{tab:results}.
As stated previously, we verify liveness properties under justness, where we employ $\conc_T$ for safe and regular registers and all four concurrency relations $\conc_T$, $\conc_S$, $\conc_I$ and $\conc_{\!A}$ for atomic registers.
We checked with 2 threads for algorithms designed for 2 threads, and with 3 for all others. We restrict\linebreak[3] ourselves to at most 3 threads because, due to the state-space explosion problem, even models with only 3 threads frequently take hours or even days to check these properties on. 

\begin{table}[tbp]
\caption{Verification results.}
\label{tab:results}
\resizebox{\textwidth}{!}{%
\begin{tabular}{@{}lc|cccccc@{}}
\toprule
\multirow{2}{*}{\textit{Algorithm}}              & \multirow{2}{*}{\textit{$\#$ threads}} & \textit{Safe} & \textit{Regular} & \multicolumn{4}{c}{\textit{Atomic}} \\
                                                 &                                                   & $T$             & $T$                & $T$       & $S$       & $I$      & $A$      \\ \midrule
    Anderson \cite{anderson1993fine}                 & 2 &  $\satsf$     & $\satsf$ &  $\satsf$ & $\satsf$ & $\satmutex$ &  $\satmutex$ \\* \midrule
    Aravind BLRU \cite{aravind2010yet}             & 3 & $\satsf$ & $\satsf$ & $\satsf$ & $\satmutex$ & $\satmutex$ & $\satmutex$ \\
    Aravind BLRU (alt.)                             & 3  & $\satsf$ & $\satsf$ & $\satsf$ & $\satsf$ & $\satmutex$ & $\satmutex$ \\* \midrule
    Attiya-Welch (orig.) \cite{AttiyaWelch04}        & 2  &  $\satdf$     & $\satsf$    &  $\satsf$ & $\satdf$ & $\satmutex$ & $\satmutex $ \\
    Attiya-Welch (orig., alt.)                       & 2  &  $\satsf$     & $\satsf$    &   $\satsf$ & $\satdf$ & $\satmutex$ & $\satmutex $ \\
    Attiya-Welch (var.) \cite{Shao11}                & 2  &  $\satmutex$  & $\satmutex$ &  $\satsf$ & $\satdf$ & $\satmutex$ & $\satmutex $ \\
    Attiya-Welch (var., alt.)                        & 2  &  $\satsf$     & $\satsf$    & $\satsf$ & $\satdf$ & $\satmutex$ & $\satmutex$  \\* \midrule
    Burns-Lynch \cite{burns1993bounds}               & 3  &  $\satdf$     & $\satdf$    &  $\satdf$ & $\satdf$ & $\satmutex$ & $\satmutex$  \\* \midrule
    Dekker \cite{alagarsamy2003some}                 & 2  &  $\satmutex$  & $\satmutex$ &  $\satsf$ & $\satdf$ & $\satmutex$ & $\satmutex $ \\
    Dekker (alt.)                                    & 2  &  $\satmutex$  & $\satmutex$ &  $\satsf$ & $\satsf$ & $\satmutex$ & $\satmutex $ \\
    Dekker RW-safe \cite{buhr2016dekker}           & 2  &  $\satsf$     & $\satsf$    &  $\satsf$ & $\satdf$ & $\satmutex$ & $\satmutex $ \\
    Dekker RW-safe (DFtoSF) & 2 & $\satsf$ & $\satsf$ & $\satsf$ & $\satsf$ & $\satmutex$ & $\satmutex$  \\* \midrule
    Dijkstra \cite{dijkstra65}                       & 3  &  $\satmutex$  & $\satdf$ &  $\satdf$ & $\satmutex$ & $\satmutex$ & $\satmutex$  \\* \midrule
    Kessels \cite{kessels1982arbitration}            & 2  &  $\satnone$   & $\satnone$  &  $\satsf$ & $\satsf$ & $\satmutex$ & $\satmutex $ \\* \midrule
    Knuth \cite{knuth1966additional}                 & 3  &  $\satmutex$  &  $\satsf$ & $\satsf$ & $\satmutex$ & $\satmutex$ & $\satmutex$ \\* \midrule
    Lamport 1-bit \cite{Lamport86Mutex2}             & 3  &  $\satdf$     &   $\satdf$  &  $\satdf$ & $\satdf$ & $\satmutex$ & $\satmutex$  \\
    Lamport 1-bit (DFtoSF) & 3 & $\satsf$ & $\satsf$ & $\satsf$ & $\satsf$ & $\satmutex$ & $\satmutex$ \\
    Lamport 3-bit \cite{Lamport86Mutex2}             & 3  &  $\satsf$ & $\satsf$ & $\satsf$ & $\satsf$ & $\satmutex$ & $\satmutex$ \\* \midrule
    Peterson \cite{Peterson81}                       & 2  &  $\satnone$   & $\satnone$  &  $\satsf$ & $\satsf$ & $\satmutex$ & $\satmutex $ \\
\midrule
    Szymanski flag (int.) \cite{Szy88}                    & 3  &  $\satnone$   & $\satnone$  &  $\satsf$ & $\satsf$ & $\satmutex$ & $\satmutex$ \\
    Szymanski flag (bit) \cite{Szy88}                & 3  &  $\satnone$   & $\satnone$ &   $\satnone$ & $\satnone$ & $\satnone$ & $\satnone$    \\
    Szymanski 3-bit lin. wait \cite{Szy90}         & 3  &  $\satnone$   & $\satnone$ &   $\satnone$ & $\satnone$ & $\satnone$ & $\satnone$   \\
    Szymanski 3-bit lin. wait (alt.)                & 2  &  $\satsf$     & $\satsf$    &  $\satsf$ & $\satsf$ & $\satmutex$ & $\satmutex$     \\* 
    \bottomrule
\end{tabular}%
}
\end{table}

We list the origin of each algorithm in the table; the results of verifying our proposed alternate versions are indicated by ``alt.''.
We give pseudocode for the algorithms.
Therein we merely present the entry and exit protocols of an algorithm, separated by the instruction \textbf{critical section}. Implicitly these instructions alternate with the non-critical section, and may be repeated indefinitely.
\hypertarget{threadorder}{We use $N$ for the number of threads.
As identifiers for threads}, we use the integers $0\ldots N{-}1$. So $\TID = \{0,\ldots,N{-}1\}$. When presenting pseudocode, we give the algorithm for an arbitrary thread $i$. When $N = 2$, we use the shorthand notation $j = 1-i$.

In the subsequent sections, we discuss the most interesting results.
Further discussion, and the remaining pseudocode, appears in \autoref{app:algorithms}.

\subsection[Impossibility of liveness with reads interference]{Impossibility of liveness with $\conc_I$}
Perhaps the most notable result in \autoref{tab:results} is that no algorithm satisfies either liveness property under $\justact{\conc_I}{\block}$ or $\justact{\conc_{\!A}}{\block}$.
Since $\conc_{\!A}$ is a refinement of $\conc_I$, we focus on the behaviour for $\conc_I$.
When we take $\conc_I$ as our concurrency relation, then one thread's read of a register can interfere with another thread's write to that same register.
It turns out that when this is the case, starvation freedom is impossible for algorithms that rely on communication via registers. The following argument is adapted from \cite{glabbeek2018speed,glabbeek2023modelling}.
Assume that $\mathit{Alg}$ is an algorithm that satisfies starvation freedom.
Let $i$ and $j$ be different threads, and assume that all other threads, if any, stay in their non-critical section forever.
Since $\mathit{Alg}$ is starvation-free, thread $i$ must be capable of freely entering the critical section if thread $j$ is not competing for access.
Hence, 
thread $j$ must communicate its interest in the critical section to thread $i$ as part of its entry protocol.
Since reading from and writing to registers is the only form of communication we allow, 
thread $j$ must, in its entry protocol, write to some register $\varname{reg}$, which $i$ must read in its own entry protocol.
As long as $i$ does not read $j$'s interest from $\varname{reg}$, thread $i$ can enter the critical section freely.
Therefore, if thread $i$'s read of $\varname{reg}$ can block thread $j$'s write to $\varname{reg}$, thread $i$ can infinitely often access the critical section without ever letting thread $j$ communicate its interest, thus never letting thread $j$ enter.

For this argument it is crucial that right after $i$'s read of $\varname{reg}$, thread $i$ enters and then leaves the critical section and returns to its entry protocol, where it engages in another read of $\varname{reg}$, so quickly that thread $j$ has not yet started its write to $\varname{reg}$ in the meantime. 
This uses the requirement on mutual exclusion protocols that their correctness may not depend on the relative speeds of the
threads.
Without that requirement one can easily achieve starvation freedom even with blocking reads, as demonstrated in \cite{glabbeek2023modelling}.

The argument above explains why starvation freedom is never satisfied for $\conc_I$ or $\conc_{\!A}$.
However, it does not explain why we also never observe deadlock freedom.
After all, in the {\trace} sketched above, while thread $j$ is stuck in its entry protocol, thread $i$ infinitely often accesses the critical section.
While we do not (yet) have an argument that deadlock freedom is impossible to satisfy if reads can block writes for all possible algorithms, we do observe this to be the case for all algorithms we have analysed.

For many algorithms, it is possible for both competing threads to become stuck in their entry protocol. Consider, for example, Peterson's algorithm from \cite{Peterson81}, here given as \cref{alg:peterson}.
If $\varname{turn}$ is initially $0$, and thread 1 manages to set $\varidx{flag}{1}$ to $\true$ before thread 0 starts the competition, then on line 3 thread 0 will get stuck in a busy waiting loop. Thread 1 needs to set $\varname{turn}$ to $1$ to let thread 0 pass line 3, but thread 0's repeated reads of $\varname{turn}$ prevent this write from taking place, resulting in both threads being trapped in the entry protocol.
An alternative way to get a deadlock freedom violation is via the exit protocol.
Once a thread has finished its critical section access, it needs to communicate that it no longer requires access to the other thread. In Peterson's, this is done on line 5 by setting the thread's $\varname{flag}$ to $\false$.
However, if the other thread is repeatedly reading this register, such as is done on line 3, then the completion of the exit protocol can be blocked, once again preventing both threads from accessing their critical sections.

We see similar behaviour in all algorithms we analyse.
Frequently, although not always, the problem lies in busy waiting loops.
Given this behaviour, it would be interesting to modify our models to treat busy waiting reads differently from normal reads, and only allow normal reads to interfere with writes.
This would give us greater insight into whether for some of the algorithms it is truly the busy waiting that is the source of the deadlock freedom violation.
We leave this as future work.

\subsection{Aravind's BLRU algorithm}\label{sec:aravind}

Aravind's BLRU algorithm \cite{aravind2010yet}, here given as \cref{alg:aravind}, is designed for an arbitrary number of threads $N$.
\begin{table}[t]
\vspace{-1em}
\noindent\begin{minipage}[t]{0.38\textwidth}
\begin{algorithm}[H]
\caption{Peterson's algorithm}\label{alg:peterson}
\begin{algorithmic}[1]
    \State{$\varidx{flag}{i} \writeop \true$}
    \State{$\varname{turn} \writeop i$}
    \State{\textbf{await} $\varidx{flag}{j} = \false \lor \varname{turn} = j$}
    \State{\textbf{critical section}}
    \State{$\varidx{flag}{i} \writeop \false$}
\end{algorithmic}
\end{algorithm}
\vspace{-1em}
\begin{algorithm}[H]
\caption{Dekker's algorithm}\label{alg:dekker}
\begin{algorithmic}[1]
    \State{$\varidx{flag}{i} \writeop \true$}
    \While{$\varidx{flag}{j} = \true$}
        \If{$\varname{turn} = j$}
            \State{$\varidx{flag}{i} \writeop \false$}
            \State{\textbf{await} $\varname{turn} = i$}
            \State{$\varidx{flag}{i} \writeop \true$}
        \EndIf
    \EndWhile
    \State{\textbf{critical section}}
    \State{$\varname{turn} \writeop j$}
    \State{$\varidx{flag}{i} \writeop \false$}
\end{algorithmic}
\end{algorithm}
\end{minipage}
\hfill
\begin{minipage}[t]{0.56\textwidth}
\begin{algorithm}[H]
\caption{Aravind's BLRU algorithm}\label{alg:aravind}
\begin{algorithmic}[1]
    \State{$\varidx{flag}{i} \writeop \true$}
    \Repeat
        \State{$\varidx{stage}{i} \writeop \false$}
        \State{\textbf{await} $\forall_{j \neq i}: \varidx{flag}{j} = \false \lor \varidx{date}{i} < \varidx{date}{j}$}
        \State{$\varidx{stage}{i} \writeop \true$}
    \Until{$\forall_{j \neq i}:$ $\varidx{stage}{j} = \false$}
    \State{\textbf{critical section}}
    \State{$\varidx{date}{i} \writeop \max(\varidx{date}{0}, ..., \varidx{date}{N-1}) + 1$}
    \If{$\varidx{date}{i} \geq 2N-1$}
        \State{$\forall_{j \in [0...N-1]}: \varidx{date}{j} \writeop j$}
    \EndIf
    \State{$\varidx{stage}{i} \writeop \false$}
    \State{$\varidx{flag}{i} \writeop \false$}
\end{algorithmic}
\end{algorithm}
\end{minipage}
\end{table}
Every thread has three registers: $\varname{flag}$ and $\varname{stage}$, Booleans that are initialised at $\false$, and a natural number $\varname{date}$, initialised at the thread's own id.
We observe that this algorithm satisfies all three properties with safe and regular registers, as claimed in \cite{aravind2010yet}.
However, with atomic registers, deadlock freedom is violated under $\justact{\conc_S}{\block}$.
The following {\trace} for two threads demonstrates this violation:
\begin{itemize}
    \item Thread 1 moves through lines 1 through 5, setting $\varidx{flag}{1}$ and $\varidx{stage}{1}$ to $\true$. Note that thread 1 can go through line 4 because $\varidx{flag}{0} = \false$.
    \item Thread 0 can similarly move through lines 1 through 5; while $\varidx{flag}{1} = \true$, we do have that $\varidx{date}{0} = 0 < 1 = \varidx{date}{1}$, so it can pass through line 4. However, on line 6, thread 0 observes $\varidx{stage}{1} = \true$, so it has to return to line 2. 
\end{itemize}
At this point, thread 0 can repeat lines 2 through 5 endlessly, as long as thread 1 does not set $\varidx{stage}{1}$ to $\false$.
Note that the resulting infinite {\trace} satisfies $\justact{\conc_S}{\block}$: thread 1's read of $\varidx{stage}{0}$, which it has to perform on line 6, is repeatedly blocked by thread 0's writes to $\varidx{stage}{0}$ on lines 3 and 5.

This violation can easily be fixed by preventing a thread from endlessly repeating the loop in the entry protocol while the other thread's $\varname{stage}$ is $\false$.
This can be done by altering line 4 to instead say $\textbf{await } \forall_{j \neq i}:\varidx{flag}{j} = \false \lor (\varidx{date}{i} < \varidx{date}{j} \land \varidx{stage}{j} = \false)$.
While this makes it more difficult to progress through line 4, it is impossible for all threads to get stuck there: if all threads are on line 4, then $\varname{stage}$ is $\false$ for all of them, and so the one with the lowest $\varname{date}$ can go to line 5.
Indeed, as is shown in \autoref{tab:results}, with this modification Aravind's algorithm now satisfies starvation freedom under $\justact{\conc_S}{\block}$.

\subsection{Dekker's algorithm}\label{sec:dekker}

Dekker's algorithm originally appears in \cite{dijkstra1962over}. There is no clear pseudocode given there, so we use the pseudocode from \cite{alagarsamy2003some}, here given as \cref{alg:dekker}.
The algorithm uses a Boolean $\varname{flag}$ per thread, initially $\false$, and a multi-writer register $\varname{turn}$, initially $0$.
{\Atrace} showing that Dekker's algorithm does not satisfy starvation freedom with safe registers is reported in~\cite{buhr2016dekker}.
This same {\trace} can be found by mCRL2:
\begin{itemize}
    \item Thread 0 goes through the algorithm without competition, and starts setting $\varidx{flag}{0}$ to $\false$ in the exit protocol, when it is currently $\true$.
    \item Thread 1 starts the competition and reads $\varidx{flag}{0} = \false$, the new value, on line 2. It can therefore go to the critical section, and set $\varname{turn}$ to $0$ in the exit protocol.
    \item Thread 1 then starts the competition again, now reading $\varidx{flag}{0} = \true$, the old value, on line 2. Since $\varname{turn} = 0$, it goes to line 5 and starts waiting for $\varname{turn}$ to be $1$.
    \item Thread 0 finishes the exit protocol and never re-attempts to enter the critical section.
\end{itemize}
Since thread 0 will never set $\varname{turn}$ to $1$, thread 1 can never escape line 5.
This {\trace} violates deadlock freedom as well as starvation freedom, and is also applicable to regular registers.
The phenomenon where two reads concurrent with the same write return first the new and then the old value is called \emph{new-old inversion}, and is explicitly allowed by Lamport in his definitions of safe and regular registers \cite{Lamport86IPCalg}.
An interesting quality of this {\trace} is that it relies on thread 0 only finitely often executing the algorithm. If we did not define the actions $\noncrit[\tid]$ for all $\tid \in \TID$ to be blockable, this {\trace} would be missed.

In \cite{buhr2016dekker} the following improvements are suggested to make the algorithm ``RW-safe'', i.e., correct with safe registers: on line 5, \textbf{await} $\varname{turn} = i \lor \varidx{flag}{j} = \false$, and on line 8, only write to $\varname{turn}$ if its value would be changed.
Our model checking confirms that with these alterations, starvation freedom is satisfied with both safe and regular registers.

In \cite{nigro2024modeling}, it is claimed that Dekker's algorithm without alterations is correct with non-atomic registers.
Instead of dealing with the spurious violations of liveness properties via completeness criteria, they use the model checking tool UPPAAL \cite{Behrmann2004} to compute the maximum number of times a thread may be overtaken by another thread. They determine that bound to be finite and conclude starvation freedom is satisfied.
However, the deadlock freedom violation observed here and in \cite{buhr2016dekker} shows a thread never gaining access to the critical section while only being overtaken once.
Hence, finding a finite upper bound to the number of overtakes is insufficient to establish deadlock freedom.

As can be observed in \autoref{tab:results}, both the version of Dekker's algorithm presented in \cite{alagarsamy2003some}\pagebreak[3] and the RW-safe version from \cite{buhr2016dekker} are starvation-free with atomic registers under $\justact{\conc_T}{\block}$, but only deadlock-free under $\justact{\conc_S}{\block}$. In both variants, this is because one thread, say $0$, can remain stuck on line 5 trying to perform a read, while the other thread, in this case $1$, repeatedly executes the full algorithm without having to wait on thread $0$, since $\varidx{flag}{0} = \false$. In the process, thread $1$ writes to the variable that thread $0$ is trying to read, meaning this execution is just under $\justact{\conc_S}{\block}$. 

This starvation freedom violation can be easily fixed for the presentation in \cite{alagarsamy2003some}, since the variable that thread $0$ is trying to read on line 5 is $\varname{turn}$. If we alter the algorithm so that a thread only writes to $\varname{turn}$ on line 8 if this would change the value, then starvation freedom is satisfied.
This change is part of the changes suggested in \cite{buhr2016dekker}, yet that version of the algorithm is not starvation-free under $\justact{\conc_S}{\block}$.
This is due to the other change: on line 5, thread $0$ now also has to read $\varidx{flag}{1}$, the value of which thread $1$ does change every time it executes the algorithm. 
This violation therefore cannot be fixed so easily.

\subsection{Szymanski's 3-bit linear wait algorithm}\label{sec:Szymanski}
\vspace{-5pt}

 \begin{algorithm}[hb]
    \caption{Szymanski's 3-bit linear wait algorithm}
    \label{alg:szymanski-3bit}
    \begin{algorithmic}[1]
      \State{$\varidx{a}{i} \writeop \true$}\label{szy-3bit-1}
      \lFor{$j$ \textbf{from} $0$\ \textbf{to}\ $N{-}1$} {\textbf{await} $\varidx{s}{j} = \false$}\label{szy-3bit-2}
      \State{$\varidx{w}{i} \writeop \true$}\label{szy-3bit-3}
      \State{$\varidx{a}{i} \writeop \false$}\label{szy-3bit-4}
      \While{$\varidx{s}{i} = \false$}\label{szy-3bit-5}
        \State{$j \writeop 0$}\label{szy-3bit-6}
        \lWhile{$j < N \land \varidx{a}{j} = \false$}{$j\writeop j+1$}\label{szy-3bit-7}
          \If{$j=N$}\label{szy-3bit-8}
            \State{$\varidx{s}{i} \writeop \true$}\label{szy-3bit-9}
            \State{$j\writeop 0$}         \label{szy-3bit-10}
            \lWhile{$j<N\land \varidx{a}{j} = \false$}{$j\writeop j+1$}\label{szy-3bit-11}
            \If{$j< N$} {$\varidx{s}{i} \writeop \false$}\label{szy-3bit-12}
           \Else\label{szy-3bit-13}
             \State{$\varidx{w}{i} \writeop \false$}\label{szy-3bit-14}
             \lFor{$j$ \textbf{from} $0$\ \textbf{to} $N-1$}{\textbf{await} $\varidx{w}{j} = \false$}\label{szy-3bit-15}
           \EndIf
         \EndIf
         \If{$j<N$}\label{szy-3bit-16}
           \State{$j\writeop 0$}\label{szy-3bit-17}
           \lWhile{$j<N \land (\varidx{w}{j} = \true \lor \varidx{s}{j} = \false)$}{$j\writeop j+1$}\label{szy-3bit-18}
         \EndIf
         \If{$j \neq i \land j<N$}\label{szy-3bit-19}
         \State{$\varidx{s}{i} \writeop \true$}\label{szy-3bit-20}
         \State{$\varidx{w}{i} \writeop \false$}\label{szy-3bit-21}
         \EndIf
     \EndWhile
      \lFor{$j$ \textbf{from} $0$ \textbf{to} $i-1$}{\textbf{await} $\varidx{s}{j} = \false$}\label{szy-3bit-22}
      \State{\textbf{critical section}}\label{szy-3bit-23}
      \State{$\varidx{s}{i} \writeop \false$}\label{szy-3bit-24}
     \end{algorithmic}
  \end{algorithm}
In \cite{Szy90}, Szymanski proposes four mutual exclusion algorithms. Here, we discuss the first: the 3-bit linear wait algorithm, which is claimed to be correct with non-atomic registers.
See \autoref{alg:szymanski-3bit}. Each thread has three Booleans, $\varname{a}$, $\varname{w}$, and $\varname{s}$, all initially $\false$.
{\Atrace} showing a mutual exclusion violation for safe, regular, and atomic registers with three threads for the 3-bit linear wait algorithm is given in \cite{spronck2023process}.

With two threads,\pagebreak[4] we do still get a mutual exclusion violation with safe and regular registers, given in detail in \cite{spronck2023process}, but not with atomic registers.
The violation specifically relies on reading $\varidx{w}{j}$ before $\varidx{s}{j}$ on line 18, so that there can be new-old inversion on $\varidx{s}{j}$, while still obtaining the value of $\varidx{w}{j}$ from before the other thread started writing to $\varidx{s}{j}$.
We therefore considered the alternative, where $\varidx{s}{j}$ is read before $\varidx{w}{j}$ on line 18.
With this change and only two threads, the algorithm satisfies all expected properties.

The observation that Szymanski's 3-bit linear wait algorithm violates mutual exclusion with three threads is also made in \cite{nigro2024Verifying}. 
They suggest a different way to make the algorithm correct for two threads: instead of swapping the reads on line 18, they require a thread to also read $\varidx{w}{j}$ on line 22.
We did not investigate this suggested change further, as we prefer the solution that does not require an additional read.

\subsection{From deadlock freedom to starvation freedom}\label{sec:dftosf}
In \cite[Section 2.2.2]{Raynal13}, an algorithm is presented to turn any mutual exclusion algorithm that satisfies mutual exclusion and deadlock freedom into one that satisfies starvation freedom as well. It is due to Yoah Bar-David (1998) and first appears in \cite{taubenfeld2006synchronization}.
We take the pseudocode from \cite{GG}, where it is proven that this algorithm works for safe, regular and atomic registers. 

To confirm the correctness of the algorithm experimentally, we applied it
to Lamport's 1-bit algorithm \cite{Lamport86Mutex2}, which (by design) does not satisfy starvation freedom at all, and to the RW-safe version of Dekker's algorithm \cite{buhr2016dekker}, where starvation freedom only fails due to writes interfering with reads. 
The results are given in the table with ``DFtoSF''. 
We indeed find that starvation freedom is now satisfied where previously only deadlock freedom was.

In view of space, the pseudocode is given in \autoref{app:algorithms}.

\section{Related work}\label{sec:relwork}
To the best of our knowledge, we are the first to do automatic verification of mutual exclusion algorithms while incorporating both which operations can block each other and the effects of overlapping write operations. The two elements have been considered separately previously.

In \cite{bouwman2020off}, starvation-freedom of Peterson's algorithm with atomic registers is checked using mCRL2 under the justness assumption. Two different concurrency relations are considered, which are similar to our $\conc_S$ and $\conc_{\!A}$.

There have been many formal verifications of mutual exclusion algorithms with atomic registers \cite{benari2008principles,mateescu2013model,groote2021tutorial}.
Non-atomic registers have been covered less frequently, and their verification is often restricted to single-writer registers \cite{lamportHyperook,buhr2016dekker}.
The safety properties of mutual exclusion algorithms with MWMR non-atomic registers were verified using mCRL2 in~\cite{spronck2023process}. In several papers by Nigro, including \cite{nigro2024Verifying,nigro2024modeling}, safety and liveness verification with MWMR non-atomic registers has been done using UPPAAL.

A major drawback of our approach is that we only consider a small number of parallel threads.
There has been much work in the literature on parametrised verification, where the correctness of an algorithm is established for an arbitrary number of threads \cite{bouajjani2000regular,fisman2001beyond,zuck2004model,bloem2016decidability}.
Where these techniques handle liveness, it is under forms of weak or strong fairness, not justness.
To our knowledge, these techniques have not yet been applied in the context of non-atomic registers. 
While several papers on parametrised verification, such as \cite{abdulla2008handling,abdulla2016parameterized}, mention dropping the atomicity assumption, this refers to evaluating existential and universal conditions on all threads in a single atomic operation, rather than atomicity of the registers.

In \cite{AravindH11,Hesselink13a,hesselink2015mutual} and other work by Hesselink, interactive theorem proving with the proof assistant PVS is used to check safety and liveness properties (under fairness) of mutual exclusion algorithms for an arbitrary number of threads with non-atomic registers. To our knowledge, the case of writes overlapping each other has not been covered with this technique.

\section{Conclusion}\label{sec:conclusion}
When it comes to analysing the correctness of algorithms, particularly when considering non-atomic registers, behavioural reasoning is often insufficient.
Mistakes can be subtle, and may depend on edge-cases that are easily overlooked.
Model checking is a solution here; we formally model the threads executing the algorithm, as well as the registers through which they communicate, and the entire state-space is searched for possibly violations of correctness properties.
In this work, we verify a large number of mutual exclusion algorithms using the model checking toolset mCRL2.
We expand on previous work by checking liveness properties -- deadlock freedom and starvation freedom -- in addition to the main safety property.
To circumvent spurious violating paths in our models, we incorporate the completeness criterion justness into our verifications.
We checked algorithms under six different memory models, where a memory model is a combination of a register model (safe, regular, or atomic) and a model of which register access operations can block each other.
The former dimension we capture in the models themselves, by modelling the behaviour of the three types of register.
The latter, we capture in the concurrency relations employed as part of justness.
We found a number of interesting violations of correctness properties, and in some cases could suggest improvements to algorithms to fix these violations.
We find that there are several algorithms that satisfy all three properties for four out of six memory models.
For three threads, this is accomplished by the fixed version of Aravind's BLRU algorithm and Lamport's 3-bit algorithm. If we also consider algorithms for just two threads, then Anderson's algorithm and the fixed version of Szymanski's 3-bit linear wait algorithm also meet this bar. We also considered an algorithm to turn deadlock-free algorithms into starvation-free ones, and experimentally confirmed that it indeed works for Dekker's algorithm made RW-safe and Lamport's 1-bit algorithm.

\bibliography{bib2doi}

\begin{thebibliography}{10}

\bibitem{abdulla2016parameterized}
Parosh~Aziz Abdulla, Fr{\'{e}}d{\'{e}}ric Haziza, and Luk{\'{a}}\v{s}
  Hol{\'{\i}}k.
\newblock Parameterized verification through view abstraction.
\newblock {\em International Journal on Software Tools for Technology
  Transfer}, 18(5):495--516, 2016.
\newblock \href {https://doi.org/10.1007/s10009-015-0406-x}
  {\path{doi:10.1007/s10009-015-0406-x}}.

\bibitem{abdulla2008handling}
Parosh~Aziz Abdulla, Noomene~Ben Henda, Giorgio Delzanno, and Ahmed Rezine.
\newblock Handling parameterized systems with non-atomic global conditions.
\newblock In Francesco Logozzo, Doron~A. Peled, and Lenore~D. Zuck, editors,
  {\em Verification, Model Checking, and Abstract Interpretation \rm
  (VMCAI'08)}, volume 4905 of {\em Lecture Notes in Computer Science}, pages
  22--36. Springer, 2008.
\newblock \href {https://doi.org/10.1007/978-3-540-78163-9_7}
  {\path{doi:10.1007/978-3-540-78163-9_7}}.

\bibitem{alagarsamy2003some}
K.~Alagarsamy.
\newblock Some myths about famous mutual exclusion algorithms.
\newblock {\em {SIGACT} News}, 34(3):94--103, 2003.
\newblock \href {https://doi.org/10.1145/945526.945527}
  {\path{doi:10.1145/945526.945527}}.

\bibitem{anderson1993fine}
James~H. Anderson.
\newblock A fine-grained solution to the mutual exclusion problem.
\newblock {\em Acta Informatica}, 30(3):249--265, 1993.
\newblock \href {https://doi.org/10.1007/BF01179373}
  {\path{doi:10.1007/BF01179373}}.

\bibitem{apt1983proof}
Krzysztof~R. Apt and Ernst{-}R{\"{u}}diger Olderog.
\newblock Proof rules and transformations dealing with fairness.
\newblock {\em Science of Computer Programming}, 3(1):65--100, 1983.
\newblock \href {https://doi.org/10.1016/0167-6423(83)90004-7}
  {\path{doi:10.1016/0167-6423(83)90004-7}}.

\bibitem{aravind2010yet}
Alex~A. Aravind.
\newblock Yet another simple solution for the concurrent programming control
  problem.
\newblock {\em {IEEE} Transactions on Parallel and Distributed Systems},
  22(6):1056--1063, 2011.
\newblock \href {https://doi.org/10.1109/TPDS.2010.172}
  {\path{doi:10.1109/TPDS.2010.172}}.

\bibitem{AravindH11}
Alex~A. Aravind and Wim~H. Hesselink.
\newblock Nonatomic dual bakery algorithm with bounded tokens.
\newblock {\em Acta Informatica}, 48(2):67--96, 2011.
\newblock \href {https://doi.org/10.1007/s00236-011-0132-0}
  {\path{doi:10.1007/s00236-011-0132-0}}.

\bibitem{AttiyaGHKMV11}
Hagit Attiya, Rachid Guerraoui, Danny Hendler, Petr Kuznetsov, Maged~M.
  Michael, and Martin~T. Vechev.
\newblock Laws of order: expensive synchronization in concurrent algorithms
  cannot be eliminated.
\newblock In Thomas Ball and Mooly Sagiv, editors, {\em {\rm Proceedings of the
  38th {ACM} {SIGPLAN-SIGACT} Symposium on} Principles of Programming Languages
  \rm (POPL'11)}, pages 487--498. {ACM}, 2011.
\newblock \href {https://doi.org/10.1145/1926385.1926442}
  {\path{doi:10.1145/1926385.1926442}}.

\bibitem{AttiyaWelch04}
Hagit Attiya and Jennifer~L. Welch.
\newblock {\em Distributed computing - fundamentals, simulations, and advanced
  topics \rm (2nd ed.)}.
\newblock Wiley series on parallel and distributed computing. Wiley, 2004.

\bibitem{Behrmann2004}
Gerd Behrmann, Alexandre David, and Kim~Guldstrand Larsen.
\newblock A tutorial on uppaal.
\newblock In Marco Bernardo and Flavio Corradini, editors, {\em Formal Methods
  for the Design of Real-Time Systems, \rm International School on Formal
  Methods for the Design of Computer, Communication and Software Systems, \rm
  (SFM-RT'04)}, volume 3185 of {\em Lecture Notes in Computer Science}, pages
  200--236. Springer, 2004.
\newblock \href {https://doi.org/10.1007/978-3-540-30080-9_7}
  {\path{doi:10.1007/978-3-540-30080-9_7}}.

\bibitem{benari2008principles}
Mordechai Ben{-}Ari.
\newblock {\em Principles of the Spin model checker}.
\newblock Springer, 2008.
\newblock \href {https://doi.org/10.1007/978-1-84628-770-1}
  {\path{doi:10.1007/978-1-84628-770-1}}.

\bibitem{bloem2016decidability}
Roderick Bloem, Swen Jacobs, Ayrat Khalimov, Igor Konnov, Sasha Rubin, Helmut
  Veith, and Josef Widder.
\newblock Decidability in parameterized verification.
\newblock {\em {SIGACT} News}, 47(2):53--64, 2016.
\newblock \href {https://doi.org/10.1145/2951860.2951873}
  {\path{doi:10.1145/2951860.2951873}}.

\bibitem{bouajjani2000regular}
Ahmed Bouajjani, Bengt Jonsson, Marcus Nilsson, and Tayssir Touili.
\newblock Regular model checking.
\newblock In E.~Allen Emerson and A.~Prasad Sistla, editors, {\em Computer
  Aided Verification \rm (CAV'00)}, volume 1855 of {\em Lecture Notes in
  Computer Science}, pages 403--418. Springer, 2000.
\newblock \href {https://doi.org/10.1007/10722167_31}
  {\path{doi:10.1007/10722167_31}}.

\bibitem{bouwman2020off}
Mark Bouwman, Bas Luttik, and Tim A.~C. Willemse.
\newblock Off-the-shelf automated analysis of liveness properties for just
  paths.
\newblock {\em Acta Informatica}, 57(3-5):551--590, 2020.
\newblock \href {https://doi.org/10.1007/s00236-020-00371-w}
  {\path{doi:10.1007/s00236-020-00371-w}}.

\bibitem{buhr2015high}
Peter~A. Buhr, David Dice, and Wim~H. Hesselink.
\newblock High-performance \emph{N}-thread software solutions for mutual
  exclusion.
\newblock {\em Concurrency and Computation: Practice and Experience},
  27(3):651--701, 2015.
\newblock \href {https://doi.org/10.1002/cpe.3263}
  {\path{doi:10.1002/cpe.3263}}.

\bibitem{buhr2016dekker}
Peter~A. Buhr, David Dice, and Wim~H. Hesselink.
\newblock Dekker's mutual exclusion algorithm made {RW}-safe.
\newblock {\em Concurrency and Computation: Practice and Experience},
  28(1):144--165, 2016.
\newblock \href {https://doi.org/10.1002/cpe.3659}
  {\path{doi:10.1002/cpe.3659}}.

\bibitem{mCRL2toolset}
Olav Bunte, Jan~Friso Groote, Jeroen J.~A. Keiren, Maurice Laveaux, Thomas
  Neele, Erik~P. de~Vink, Wieger Wesselink, Anton Wijs, and Tim A.~C. Willemse.
\newblock The {mCRL2} toolset for analysing concurrent systems -- improvements
  in expressivity and usability.
\newblock In Tom{\'{a}}s Vojnar and Lijun Zhang, editors, {\em Tools and
  Algorithms for the Construction and Analysis of Systems \rm (TACAS'19), held
  as part of the European Joint Conferences on \it Theory and Practice of
  Software \rm (ETAPS'19), Part {II}}, volume 11428 of {\em Lecture Notes in
  Computer Science}, pages 21--39. Springer, 2019.
\newblock \href {https://doi.org/10.1007/978-3-030-17465-1_2}
  {\path{doi:10.1007/978-3-030-17465-1_2}}.

\bibitem{burns1993bounds}
James~E. Burns and Nancy~A. Lynch.
\newblock Bounds on shared memory for mutual exclusion.
\newblock {\em Information and Computation}, 107(2):171--184, 1993.
\newblock \href {https://doi.org/10.1006/inco.1993.1065}
  {\path{doi:10.1006/inco.1993.1065}}.

\bibitem{CDV09}
Flavio Corradini, Maria Rita~Di Berardini, and Walter Vogler.
\newblock Time and fairness in a process algebra with non-blocking reading.
\newblock In Mogens Nielsen, Anton{\'{\i}}n Kucera, Peter~Bro Miltersen,
  Catuscia Palamidessi, Petr Tuma, and Frank~D. Valencia, editors, {\em
  {SOFSEM}'09: Theory and Practice of Computer Science, 35th Conference on
  Current Trends in Theory and Practice of Computer Science}, volume 5404 of
  {\em Lecture Notes in Computer Science}, pages 193--204. Springer, 2009.
\newblock \href {https://doi.org/10.1007/978-3-540-95891-8_20}
  {\path{doi:10.1007/978-3-540-95891-8_20}}.

\bibitem{dijkstra1962over}
Edsger~W Dijkstra.
\newblock Over de sequentialiteit van procesbeschrijvingen (ewd-35).
\newblock {\em Center for American History, University of Texas at Austin},
  1962.
\newblock URL: \url{http://www.cs.utexas.edu/users/EWD/ewd00xx/EWD35.PDF}.

\bibitem{dijkstra65}
Edsger~W. Dijkstra.
\newblock Solution of a problem in concurrent programming control.
\newblock {\em Communications of the ACM}, 8(9):569, 1965.
\newblock \href {https://doi.org/10.1145/365559.365617}
  {\path{doi:10.1145/365559.365617}}.

\bibitem{dwyer1999patterns}
Matthew~B. Dwyer, George~S. Avrunin, and James~C. Corbett.
\newblock Patterns in property specifications for finite-state verification.
\newblock In Barry~W. Boehm, David Garlan, and Jeff Kramer, editors, {\em {\rm
  Proceedings of the 1999 International Conference on} Software Engineering \rm
  (ICSE'99)}, pages 411--420. {ACM}, 1999.
\newblock \href {https://doi.org/10.1145/302405.302672}
  {\path{doi:10.1145/302405.302672}}.

\bibitem{glabbeeksignals}
Victor Dyseryn, Rob~J. van Glabbeek, and Peter H{\"{o}}fner.
\newblock Analysing mutual exclusion using process algebra with signals.
\newblock In Kirstin Peters and Simone Tini, editors, {\em {\rm Proceedings
  Combined 24th International Workshop on} Expressiveness in Concurrency \rm
  and 14th Workshop on \it Structural Operational Semantics \rm
  (EXPRESS/SOS'17)}, volume 255 of {\em Electronic Proceedings in Theoretical
  Computer Science}, pages 18--34, 2017.
\newblock \href {https://doi.org/10.4204/EPTCS.255.2}
  {\path{doi:10.4204/EPTCS.255.2}}.

\bibitem{fisman2001beyond}
Dana Fisman and Amir Pnueli.
\newblock Beyond regular model checking.
\newblock In Ramesh Hariharan, Madhavan Mukund, and V.~Vinay, editors, {\em
  Foundations of Software Technology and Theoretical Computer Science \rm
  (FST\&TCS'01)}, volume 2245 of {\em Lecture Notes in Computer Science}, pages
  156--170. Springer, 2001.
\newblock \href {https://doi.org/10.1007/3-540-45294-X_14}
  {\path{doi:10.1007/3-540-45294-X_14}}.

\bibitem{glabbeek2018speed}
Rob J.~van Glabbeek.
\newblock Is speed-independent mutual exclusion implementable? \rm ({I}nvited
  talk).
\newblock In Sven Schewe and Lijun Zhang, editors, {\em {\rm 29th International
  Conference on} Concurrency Theory (CONCUR'18)}, volume 118 of {\em LIPIcs}.
  Schloss Dagstuhl - Leibniz-Zentrum f{\"{u}}r Informatik, 2018.
\newblock \href {https://doi.org/10.4230/LIPIcs.CONCUR.2018.3}
  {\path{doi:10.4230/LIPIcs.CONCUR.2018.3}}.

\bibitem{glabbeek2019justness}
Rob J.~van Glabbeek.
\newblock Justness - {A} completeness criterion for capturing liveness
  properties (extended abstract).
\newblock In Mikolaj Bojanczyk and Alex Simpson, editors, {\em Foundations of
  Software Science and Computation Structures \rm (FOSSACS'19), held as part of
  the European Joint Conferences on \it Theory and Practice of Software \rm
  (ETAPS'19)}, volume 11425 of {\em Lecture Notes in Computer Science}, pages
  505--522. Springer, 2019.
\newblock \href {https://doi.org/10.1007/978-3-030-17127-8_29}
  {\path{doi:10.1007/978-3-030-17127-8_29}}.

\bibitem{glabbeek2023modelling}
Rob J.~van Glabbeek.
\newblock Modelling mutual exclusion in a process algebra with time-outs.
\newblock {\em Information and Computation}, 294:105079, 2023.
\newblock \href {https://doi.org/10.1016/j.ic.2023.105079}
  {\path{doi:10.1016/j.ic.2023.105079}}.

\bibitem{GG}
Rob J.~van Glabbeek and Daniele Gorla.
\newblock On the notions of bounded bypass, and how to make any deadlock-free
  mutex protocol satisfy one of them.
\newblock 2025.
\newblock To appear.

\bibitem{glabbeek2019progress}
Rob J.~van Glabbeek and Peter H{\"o}fner.
\newblock Progress, justness, and fairness.
\newblock {\em ACM Computing Surveys}, 52(4), 2019.
\newblock \href {https://doi.org/10.1145/3329125} {\path{doi:10.1145/3329125}}.

\bibitem{groote2021tutorial}
Jan~Friso Groote and Jeroen J.~A. Keiren.
\newblock Tutorial: Designing distributed software in {mCRL2}.
\newblock In Kirstin Peters and Tim A.~C. Willemse, editors, {\em {\rm
  Proceedings 41st {IFIP} {WG} 6.1 International Conference on} Formal
  Techniques for Distributed Objects, Components, and Systems \rm (FORTE'21),
  held as part of the 16th International Federated Conference on \it
  Distributed Computing Techniques \rm (DisCoTec 2021)}, volume 12719, pages
  226--243. Springer, 2021.
\newblock \href {https://doi.org/10.1007/978-3-030-78089-0_15}
  {\path{doi:10.1007/978-3-030-78089-0_15}}.

\bibitem{mCRL2language}
Jan~Friso Groote and Mohammad~Reza Mousavi.
\newblock {\em Modeling and Analysis of Communicating Systems}.
\newblock {MIT} Press, 2014.
\newblock \href {https://doi.org/10.7551/mitpress/9946.001.0001}
  {\path{doi:10.7551/mitpress/9946.001.0001}}.

\bibitem{Hesselink13a}
Wim~H. Hesselink.
\newblock Mechanical verification of lamport's bakery algorithm.
\newblock {\em Science of Computer Programming}, 78(9):1622--1638, 2013.
\newblock \href {https://doi.org/10.1016/j.scico.2013.03.003}
  {\path{doi:10.1016/j.scico.2013.03.003}}.

\bibitem{hesselink2015mutual}
Wim~H. Hesselink.
\newblock Mutual exclusion by four shared bits with not more than quadratic
  complexity.
\newblock {\em Science of Computer Programming}, 102:57--75, 2015.
\newblock \href {https://doi.org/10.1016/j.scico.2015.01.001}
  {\path{doi:10.1016/j.scico.2015.01.001}}.

\bibitem{Ho85}
C.~A.~R. Hoare.
\newblock {\em Communicating Sequential Processes}.
\newblock Prentice-Hall, Englewood Cliffs, 1985.

\bibitem{kessels1982arbitration}
Joep L.~W. Kessels.
\newblock Arbitration without common modifiable variables.
\newblock {\em Acta Informatica}, 17:135--141, 1982.
\newblock \href {https://doi.org/10.1007/BF00288966}
  {\path{doi:10.1007/BF00288966}}.

\bibitem{knuth1966additional}
Donald~E. Knuth.
\newblock Additional comments on a problem in concurrent programming control.
\newblock {\em Communications of the ACM}, 9(5):321--322, 1966.
\newblock \href {https://doi.org/10.1145/355592.365595}
  {\path{doi:10.1145/355592.365595}}.

\bibitem{Lamport74}
Leslie Lamport.
\newblock A new solution of dijkstra's concurrent programming problem.
\newblock {\em Communications of the ACM}, 17(8):453--455, 1974.
\newblock \href {https://doi.org/10.1145/361082.361093}
  {\path{doi:10.1145/361082.361093}}.

\bibitem{Lamport86Mutex1}
Leslie Lamport.
\newblock The mutual exclusion problem: {Part} {I}---a theory of interprocess
  communication.
\newblock {\em J. {ACM}}, 33(2):313--326, 1986.
\newblock \href {https://doi.org/10.1145/5383.5384}
  {\path{doi:10.1145/5383.5384}}.

\bibitem{Lamport86Mutex2}
Leslie Lamport.
\newblock The mutual exclusion problem: {Part} {II}---statement and solutions.
\newblock {\em J. {ACM}}, 33(2):327--348, 1986.
\newblock \href {https://doi.org/10.1145/5383.5385}
  {\path{doi:10.1145/5383.5385}}.

\bibitem{Lamport86IPCbasic}
Leslie Lamport.
\newblock On interprocess communication. {Part} {I:} basic formalism.
\newblock {\em Distributed Comput.}, 1(2):77--85, 1986.
\newblock \href {https://doi.org/10.1007/BF01786227}
  {\path{doi:10.1007/BF01786227}}.

\bibitem{Lamport86IPCalg}
Leslie Lamport.
\newblock On interprocess communication. {Part} {II:} algorithms.
\newblock {\em Distributed Comput.}, 1(2):86--101, 1986.
\newblock \href {https://doi.org/10.1007/BF01786228}
  {\path{doi:10.1007/BF01786228}}.

\bibitem{lamportHyperook}
Leslie Lamport.
\newblock {\em {The TLA+ Hyperbook}}, chapter 7.8.4: The Real Bakery Algorithm.
\newblock 2015.
\newblock Available at
  \href{http://lamport.azurewebsites.net/tla/hyperbook.html}{http://lamport.azurewebsites.net/tla/hyperbook.html},
  accessed on 26 April 2023.

\bibitem{lehmann1981impartiality}
Daniel Lehmann, Amir Pnueli, and Jonathan Stavi.
\newblock Impartiality, justice and fairness: The ethics of concurrent
  termination.
\newblock In Shimon Even and Oded Kariv, editors, {\em Automata, Languages and
  Programming \rm (ICALP'81)}, volume 115 of {\em Lecture Notes in Computer
  Science}, pages 264--277. Springer, 1981.
\newblock \href {https://doi.org/10.1007/3-540-10843-2_22}
  {\path{doi:10.1007/3-540-10843-2_22}}.

\bibitem{mateescu2013model}
Radu Mateescu and Wendelin Serwe.
\newblock Model checking and performance evaluation with {CADP} illustrated on
  shared-memory mutual exclusion protocols.
\newblock {\em Science of Computer Programming}, 78(7):843--861, 2013.
\newblock Special section on Formal Methods for Industrial Critical Systems
  (FMICS 2009 + FMICS 2010) \& Special section on Object-Oriented Programming
  and Systems (OOPS 2009), a special track at the 24th ACM Symposium on Applied
  Computing.
\newblock \href {https://doi.org/10.1016/j.scico.2012.01.003}
  {\path{doi:10.1016/j.scico.2012.01.003}}.

\bibitem{nigro2024Verifying}
Libero Nigro.
\newblock Verifying mutual exclusion algorithms with non-atomic registers.
\newblock {\em Algorithms}, 17(12), 2024.
\newblock \href {https://doi.org/10.3390/a17120536}
  {\path{doi:10.3390/a17120536}}.

\bibitem{nigro2024modeling}
Libero Nigro, Franco Cicirelli, and Francesco Pupo.
\newblock Modeling and analysis of {Dekker}-based mutual exclusion algorithms.
\newblock {\em Computers}, 13(6):133, 2024.
\newblock \href {https://doi.org/10.3390/computers13060133}
  {\path{doi:10.3390/computers13060133}}.

\bibitem{Peterson81}
Gary~L. Peterson.
\newblock Myths about the mutual exclusion problem.
\newblock {\em Information Processing Letters}, 12(3):115--116, 1981.
\newblock \href {https://doi.org/10.1016/0020-0190(81)90106-X}
  {\path{doi:10.1016/0020-0190(81)90106-X}}.

\bibitem{peterson1983new}
Gary~L. Peterson.
\newblock A new solution to lamport's concurrent programming problem using
  small shared variables.
\newblock {\em ACM Transactions on Programming Languages and Systems
  {(TOPLAS)}}, 5(1):56--65, 1983.
\newblock \href {https://doi.org/10.1145/357195.357199}
  {\path{doi:10.1145/357195.357199}}.

\bibitem{Raynal13}
Michel Raynal.
\newblock {\em Concurrent Programming - Algorithms, Principles, and
  Foundations}.
\newblock Springer, 2013.
\newblock \href {https://doi.org/10.1007/978-3-642-32027-9}
  {\path{doi:10.1007/978-3-642-32027-9}}.

\bibitem{Shao11}
Cheng Shao, Jennifer~L. Welch, Evelyn Pierce, and Hyunyoung Lee.
\newblock Multiwriter consistency conditions for shared memory registers.
\newblock {\em SIAM Journal on Computing}, 40(1):28--62, 2011.
\newblock \href {https://doi.org/10.1137/07071158X}
  {\path{doi:10.1137/07071158X}}.

\bibitem{spronck2023process}
Myrthe S.~C. Spronck and Bas Luttik.
\newblock Process-algebraic models of multi-writer multi-reader non-atomic
  registers.
\newblock In Guillermo~A. P{\'{e}}rez and Jean{-}Fran{\c{c}}ois Raskin,
  editors, {\em {\rm 34th International Conference on} Concurrency Theory \rm
  (CONCUR'23)}, volume 279 of {\em LIPIcs}. Schloss Dagstuhl - Leibniz-Zentrum
  f{\"{u}}r Informatik, 2023.
\newblock Full version available at \url{https://arxiv.org/abs/2307.05143}.
\newblock \href {https://doi.org/10.4230/LIPIcs.CONCUR.2023.5}
  {\path{doi:10.4230/LIPIcs.CONCUR.2023.5}}.

\bibitem{spronck2024progress}
Myrthe S.~C. Spronck, Bas Luttik, and Tim A.~C. Willemse.
\newblock Progress, justness and fairness in modal {\(\mu\)}-calculus formulae.
\newblock In Rupak Majumdar and Alexandra Silva, editors, {\em {\rm 35th
  International Conference on} Concurrency Theory \rm (CONCUR'24)}, volume 311
  of {\em LIPIcs}. Schloss Dagstuhl - Leibniz-Zentrum f{\"{u}}r Informatik,
  2024.
\newblock \href {https://doi.org/10.4230/LIPIcs.CONCUR.2024.38}
  {\path{doi:10.4230/LIPIcs.CONCUR.2024.38}}.

\bibitem{Szy88}
Boleslaw~K. Szymanski.
\newblock A simple solution to lamport's concurrent programming problem with
  linear wait.
\newblock In Jacques Lenfant, editor, {\em {\rm Proceedings of the 2nd
  international conference on} Supercomputing \rm (ICS'88)}, pages 621--626.
  {ACM}, 1988.
\newblock \href {https://doi.org/10.1145/55364.55425}
  {\path{doi:10.1145/55364.55425}}.

\bibitem{Szy90}
Boleslaw~K. Szymanski.
\newblock Mutual exclusion revisited.
\newblock In Joshua Maor and Abraham Peled, editors, {\em Next Decade in
  Information Technology: \rm Proceedings of the 5th \it Jerusalem Conference
  on Information Technology}, pages 110--117. {IEEE} Computer Society, 1990.
\newblock \href {https://doi.org/10.1109/JCIT.1990.128275}
  {\path{doi:10.1109/JCIT.1990.128275}}.

\bibitem{taubenfeld2006synchronization}
Gadi Taubenfeld.
\newblock {\em Synchronization algorithms and concurrent programming}.
\newblock Pearson Education, 2006.

\bibitem{zuck2004model}
Lenore~D. Zuck and Amir Pnueli.
\newblock Model checking and abstraction to the aid of parameterized systems (a
  survey).
\newblock {\em Computer Languages, Systems \& Structures}, 30(3):139--169,
  2004.
\newblock \href {https://doi.org/10.1016/j.cl.2004.02.006}
  {\path{doi:10.1016/j.cl.2004.02.006}}.

\end{thebibliography}

\newpage
\appendix

\section{Safe, regular and atomic SWMR registers (an example)}\label{sec:non-atomic-communication}

We illustrate the differences between Lamport's safe, regular and atomic SWMR registers with an example.     
Consider the execution illustrated in \autoref{fig:SWMRexample}, which has three threads, with id's $0$, $1$ and $2$, operating on a register $x$ with domain $\{0,1,2\}$.
    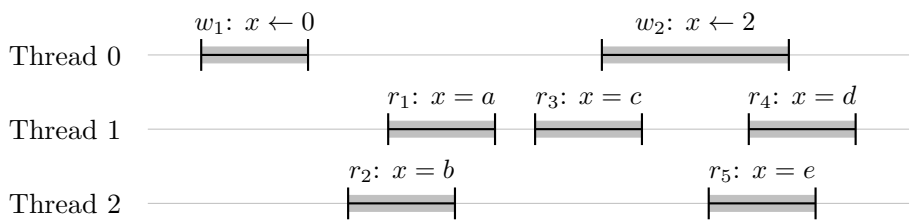
\begin{figure}[ht]
    \centering
    \begin{tikzpicture}[initial text=,inner sep=0pt, outer sep=0pt, minimum size=0pt, node distance=\distY pt and \distX pt]
        \node [draw=none] (0-name) {\large Thread $0$};

        \node [draw=none, right=of 0-name, xshift=\lineGap pt] (0-line-start) {};
        \node [draw=none, below=of 0-line-start] (1-line-start) {};
        \node [draw=none, below=of 1-line-start] (2-line-start) {};

        \node [draw=none, left=of 1-line-start, xshift=-\lineGap pt] (1-name) {\large Thread $1$};
        \node [draw=none, left=of 2-line-start, xshift=-\lineGap pt] (2-name) {\large Thread $2$};

        \node [draw=none, right=of 0-line-start, xshift=-\lineCut pt] (0-op1-s) {};
        \node [draw=none, right=of 0-op1-s] (0-op1-e) {};
        \node (0-op1-m) at ($(0-op1-s)!0.5!(0-op1-e)$) {};
        \node [draw=none, below=of 0-op1-e, xshift=\shiftXbig pt] (1-op2-s) {};
        \node [draw=none, right=of 1-op2-s] (1-op2-e) {};
        \node (1-op2-m) at ($(1-op2-s)!0.5!(1-op2-e)$) {};
        \node [draw=none, below=of 1-op2-s, xshift=-\shiftX pt] (2-op3-s) {};
        \node [draw=none, right=of 2-op3-s] (2-op3-e) {};
        \node (2-op3-m) at ($(2-op3-s)!0.5!(2-op3-e)$) {};
        \node [draw=none, above=of 2-op3-e, xshift= \shiftXbig pt] (1-op4-s) {};
        \node [draw=none, right=of 1-op4-s] (1-op4-e) {};
        \node (1-op4-m) at ($(1-op4-s)!0.5!(1-op4-e)$) {};
        \node [draw=none, above=of 1-op4-e, xshift=-\shiftX pt] (0-op5-s) {};
        \node [draw=none, right=of 0-op5-s, xshift=\shiftXbig pt] (0-op5-e) {}; 
        \node (0-op5-m) at ($(0-op5-s)!0.5!(0-op5-e)$) {};
        \node [draw=none, below=of 0-op5-e, xshift=-\shiftX pt] (1-op6-s) {};
        \node [draw=none, right=of 1-op6-s] (1-op6-e) {};
        \node (1-op6-m) at ($(1-op6-s)!0.5!(1-op6-e)$) {};
        \node [draw=none, below=of 1-op6-s, xshift=-\shiftX pt] (2-op7-s) {};
        \node [draw=none, right=of 2-op7-s] (2-op7-e) {};
        \node (2-op7-m) at ($(2-op7-s)!0.5!(2-op7-e)$) {};

        \node [draw=none, right=of 1-op6-e, xshift=-\lineCut pt] (1-line-end) {};
        \node [draw=none, above=of 1-line-end] (0-line-end) {};
        \node [draw=none, below=of 1-line-end] (2-line-end) {};

        \path[-, opacity=0.25]
            (0-line-start) edge (0-line-end)
            (1-line-start) edge (1-line-end)
            (2-line-start) edge (2-line-end);

        \draw[thick]
            \VBar{\barX pt}{(0-op1-s)}
            \VBar{\barX pt}{(0-op1-e)}
            \VBar{\barX pt}{(1-op2-s)}
            \VBar{\barX pt}{(1-op2-e)}
            \VBar{\barX pt}{(2-op3-s)}
            \VBar{\barX pt}{(2-op3-e)}
            \VBar{\barX pt}{(1-op4-s)}
            \VBar{\barX pt}{(1-op4-e)}
            \VBar{\barX pt}{(0-op5-s)}
            \VBar{\barX pt}{(0-op5-e)}
            \VBar{\barX pt}{(1-op6-s)}
            \VBar{\barX pt}{(1-op6-e)}
            \VBar{\barX pt}{(2-op7-s)}
            \VBar{\barX pt}{(2-op7-e)};

        \filldraw[opacity=0.25]
            \Above{\barY pt}{(0-op1-s)} rectangle \Below{\barY pt}{(0-op1-e)}
            \Above{\barY pt}{(1-op2-s)} rectangle \Below{\barY pt}{(1-op2-e)}
            \Above{\barY pt}{(2-op3-s)} rectangle \Below{\barY pt}{(2-op3-e)}
            \Above{\barY pt}{(1-op4-s)} rectangle \Below{\barY pt}{(1-op4-e)}
            \Above{\barY pt}{(0-op5-s)} rectangle \Below{\barY pt}{(0-op5-e)}
            \Above{\barY pt}{(1-op6-s)} rectangle \Below{\barY pt}{(1-op6-e)}
            \Above{\barY pt}{(2-op7-s)} rectangle \Below{\barY pt}{(2-op7-e)};

        \path[-,thick]
            (0-op1-s) edge (0-op1-e)        
            (1-op2-s) edge (1-op2-e)  
            (2-op3-s) edge (2-op3-e) 
            (1-op4-s) edge (1-op4-e)     
            (0-op5-s) edge (0-op5-e)    
            (1-op6-s) edge (1-op6-e)    
            (2-op7-s) edge (2-op7-e);      

        \draw[]
            (0-op1-m) node[above,yshift=\textY pt] {$w_1$: $x \writeop 0$}
            (1-op2-m) node[above,yshift=\textY pt] {$r_1$: $x = a$}
            (2-op3-m) node[above,yshift=\textY pt] {$r_2$: $x = b$}
            (1-op4-m) node[above,yshift=\textY pt] {$r_3$: $x = c$}
            (0-op5-m) node[above,yshift=\textY pt] {$w_2$: $x \writeop 2$}
            (1-op6-m) node[above,yshift=\textY pt] {$r_4$: $x = d$}
            (2-op7-m) node[above,yshift=\textY pt] {$r_5$: $x = e$};
    \end{tikzpicture}
    \caption{Example behaviour of SWMR safe, regular and atomic registers.}
    \label{fig:SWMRexample}
    \end{figure}

    Let us consider which values are possible for $a, b, c, d$ and $e$, depending on which type of register $x$ is.

    Both $r_1$ and $r_2$ are not concurrent with any write, hence both reads will return the last written value of the register, regardless of the type of $x$.
    Therefore, we have $a=0$ and $b=0$ for safe, regular and atomic registers. 

    If $x$ is a safe register, then $c, d$ and $e$ may all be any value of $0, 1$ or $2$. These three assignments are independent of each other, so all 27 different possible combinations of choices for $c, d$ and $e$ are allowed.

    If $x$ is a regular register, then $c, d$ and $e$ may all be any value of $0$ or $2$: the old value or the value that is being written concurrently. Here too, the three returned values are independent, so there are 8 different possibilities. In particular, it can be that $c{=}2$ but $d{=}0$; this possibility is called new-old inversion.

    If $x$ is an atomic register, then the values returned by the reads must reflect some order on all operations in the execution, and this order must respect the real-time ordering of the operations. We interpret the figure as showing real time, meaning that, for example, $r_1$ must be ordered before $w_2$ since $r_1$ ends before $w_2$ begins. Consequently, there are only five combinations of values of $c, d$ and $e$ allowed. If $c$ is $0$, then any combination of $d$ and $e$ being $0$ or $2$ is accepted, since $r_3$ can be ordered before $w_2$ without affecting whether $r_4$ and $r_5$ are before or after $w_2$. If instead we have $c = 2$, then $r_3$ must be ordered after $w_2$ and hence $r_4$ and $r_5$, which in real time come after $r_3$, must also be ordered after $r_2$, giving us $d=e=2$.

\section{Register models}\label{sec:registers}
In this appendix, we expand on the material of \autoref{sec:registers summary}, by giving the formal process-algebraic models of the registers.
First, we present the general structure of such a model and explain how to translate it to an LTS.
We then give the three models, as well as the definitions of all the access and update functions.
We leave the status object abstract: it is not necessary to define the data structure itself, as long as it is clear what information can be retrieved from it.

\subsection{Structure and functions}\label{sec:structure}
Recall that we use $\TID$ for the thread identifiers, $\RID$ for register identifiers, and that for every $\rid \in \RID$, the set $\Data$ contains all values that $\rid$ may hold.
Additionally, we use a status object as the finite memory of a register, the set of possible statuses being $\StatusAll$.

We define the following structure, shared by all three register models.
Let $\regtype \mathbin\in \{\safRep, \regRep, \atoRep\}$, then each model looks as follows, for some number $n$:
\begin{equation}\label{eq:structure}
    \Reg[\regtype](\rid : \RID, s: \StatusAll) = \sum_{\tid \in \TID}\sum_{\data \in \Data}\sum_{0 \leq j < n} (c_j(s, \tid, \data) \rightarrow a_j(\tid,\data) \cdot \Reg[\regtype](\rid, u_j(s,\tid,\data)))
\end{equation}
This represents a register with id $\rid$, that tracks its status with $s$.
The process first sums over $\tid \in \TID$ and $\data \in \Data$, allowing interaction by all threads and with all possible data parameters.
Furthermore, it has $n$ summands, each of the form $c_j(s,\tid,\data) \rightarrow a_j(\tid,\data) \cdot \Reg[\regtype](\rid, u_j(s,\tid,\data))$ where  $c_j(s,\tid,\data)$ is a Boolean condition, $a_j(\tid,\data)$ is an action, and $u_j$ is an \emph{update function} that takes $s$, $\tid$ and $\data$ and returns the updated status object $s' \in \StatusAll$.
Such a process equation gives rise to an LTS. 
Given a predefined initial state $\initstate \in \StatusAll$, the LTS of $\Reg[\regtype](\rid, \initstate)$ is:
\begin{equation}\label{eq:LTS}
    (\StatusAll, \!\!\bigcup_{0 \leq j < n}\!\!\{a_j(\tid,\data) \mid \tid \mathbin\in \TID, \data \mathbin\in \Data\}, \initstate, \!\!\bigcup_{0 \leq j < n}\!\!\{(s, a_j(\tid,\data), u_j(s,\tid,\data)) \mid \tid \mathbin\in \TID, \data \mathbin\in \Data \land c_j(s,\tid,\data)\})
\end{equation}

We now describe the initial state and the update functions.
As stated above, we do not wish the go into the implementation details of the status object.
Instead, we define a collection of \emph{access functions} which retrieve information from the current state.
The initial state, as well as the effects of the update functions, are then defined by how they alter the results of the access functions.
We use the following access functions, which are local to any given register $r$:
\begin{itemize}
    \item $\truevalsym: \StatusAll \rightarrow \Data$, the value that is currently stored in the register.
    \item $\readerssym: \StatusAll \rightarrow 2^{\TID}$, the set of thread id's of threads that have invoked a read operation on this register that has not yet had its response.
    \item $\writerssym: \StatusAll \rightarrow 2^{\TID}$, the set of thread id's of threads that have invoked a write operation on this register that has not yet had its response.
    \item $\pendingsym: \StatusAll \rightarrow 2^{\TID}$, the set of thread id's of threads that have invoked an operation that has not been ordered yet. Only used by the regular and atomic models.
    \item $\valssym: \StatusAll \times \TID \rightarrow \Data$, a mapping that allows us to record a single data value per thread, for instance used to remember what value was passed with a start write action.
    \item The predicate $\overlapsym$ on $\StatusAll \times \TID$, which stores whether an ongoing read or write operation of a thread has encountered an overlapping write. Only used by the safe model.
    \item $\posvalsym: \StatusAll \times \TID \rightarrow 2^{\Data}$, a mapping that stores a set of values per thread, representing the possible return values of an ongoing read by a thread. Only used by the regular model.
\end{itemize}
The initial state $\initstate$ is defined as follows, for all $\tid \in \TID$ and a pre-defined initial value $\data_{\initstate}$:\vspace{-1ex}
\[\begin{array}{r@{~}c@{~}lr@{~}c@{~}lr@{~}c@{~}lr@{~}c@{~}l}
    \trueval{\initstate} &=& \data_{\initstate} & \writers{\initstate} &=& \emptyset & \vals{\initstate, \tid} &=& \data_{\initstate}  & \posval{\initstate, \tid} &=& \emptyset\\
    \readers{\initstate} &=& \emptyset & \pending{\initstate} &=& \emptyset & \overlap{\initstate, \tid} &=& \false & &\\[-1ex]
\end{array}\]
The initial value $\data_{\initstate}$ of a register depends on the modelled algorithm.

We now define the update functions in a similar way to the initial state: by showing how the return values of the access functions are altered by the update function.
Each update function corresponds to an action and is applied when that action occurs; if the action's name is $a$, the update function is called $\mathit{ua}$.
Not every update function uses the data parameter that is passed to it according to (\ref{eq:structure}); in these cases we only give the thread id and status parameters.
If an access function is not mentioned, then its return value after the update is the same as before.
Given an arbitrary state $s \in \StatusAll$, thread id $\tid \in \TID$ and data value $\data \in \Data$:\pagebreak[3]\\
If $s' = \usr{s, \tid}$, then:\vspace{-1.5ex}
\begin{align*}
    \readers{s'} &= \readers{s} \cup \{\tid\}\\
    \pending{s'} &= \pending{s} \cup \{\tid\}\\
    \overlap{s', \tid} &= (\writers{s} > 0)\\
    \posval{s', \tid} &= \{\trueval{s}\} \cup \{\data' \mid \exists_{\tidtwo \in \writers{s}}.\vals{s, \tidtwo} = \data'\}
\end{align*}
If $s' = \ufr{s, \tid}$, then:\vspace{-1.5ex}
\begin{align*}
    \readers{s'} &= \readers{s} \setminus \{\tid\}
\end{align*}
If $s' = \usw{s, \tid, \data}$, then for all $\tidtwo \neq \tid$:\vspace{-1.5ex}
\begin{align*}
    \writers{s'} &= \writers{s} \cup \{\tid\}\\
    \pending{s'} &= \pending{s} \cup \{\tid\}\\
    \vals{s', \tid} &= \data\\
    \overlap{s', \tid} &= (\writers{s} > 0)\\
    \overlap{s', \tidtwo} &= \true\\
    \posval{s', \tidtwo} &= \posval{s, \tidtwo} \cup \{\data\}
\end{align*}
If $s' = \ufw{s, \tid, \data}$, then:\vspace{-1.5ex}
\begin{align*}
    \trueval{s'} &= \data\\
    \writers{s'} &= \writers{s} \setminus \{\tid\}
\end{align*}
If $s' = \uor{s, \tid}$, then:\vspace{-1.5ex}
\begin{align*}
    \pending{s'} &= \pending{s} \setminus \{\tid\}\\
    \vals{s', \tid} &= \trueval{s}
\end{align*}
If $s' = \uow{s, \tid, \data}$, then:\vspace{-1.5ex}
\begin{align*}
    \trueval{s'} &= \data\\
    \pending{s'} &= \pending{s} \setminus \{\tid\}
\end{align*}

\noindent
These formal definitions correspond to the intuitive descriptions given above. Of note is that $\overlap{s', \tidtwo}$ is set to $\true$ whenever a thread $\tid \neq \tidtwo$ starts a write, even if $\tidtwo$ is not actively reading or writing. This is done for simplicity of the definition: when $\tidtwo$ starts reading or writing, it will reset its own $\overlapsym$ to the correct value, depending on whether there is an overlapping write active at that point.
Something similar is done with $\posvalsym$: when a write is started, its value gets added to the $\posvalsym$ sets of every other thread, even if they are not actively reading. When a thread starts reading, it sets its own $\posvalsym$ correctly.

We now give the three register models.

\subsection{Safe MWMR registers}

See \autoref{fig:procsafe} for the process equation representing our safe register model.
\begin{figure}[ht]
    \begin{multline*}
        \SReg(\rid: \RID, s:\StatusAll) = \\
        \sum_{\tid\in\TID}\sum_{\data \in \Data}
        \left(\begin{array}{ll}
            & (\tid\notin(\readers{s} \cup \writers{s})) \then
                 \startread[\tid,\rid]\co\SReg(\rid,\usr{s,\tid}) \\
          + & (\tid\notin(\readers{s} \cup \writers{s})) \then
                 \startwrite[\tid,\rid]{\data}\co\SReg(\rid,\usw{s,\tid,\data})\\
          + & (\tid\in\readers{s}\wedge\neg\overlap{s,\tid}) \then
                 \finishread[\tid,\rid]{\trueval{s}}\co\SReg(\rid,\ufr{s,\tid}) \\
          + & (\tid\in\readers{s}\wedge\overlap{s,\tid}) \then
                 \finishread[\tid,\rid]{\data}\co\SReg(\rid,\ufr{s,\tid})\\
          + & (\tid\in\writers{s}\wedge\neg\overlap{s,\tid}) \then
                 \finishwrite[\tid,\rid]\co\SReg(\rid,\ufw{s,\tid,\vals{s, \tid}})\\
          + & (\tid\in\writers{s}\wedge\overlap{s,\tid}) \then
                 \finishwrite[\tid,\rid]\co\SReg(\rid,\ufw{s,\tid,\data})
        \end{array}
        \right)
    \end{multline*} \caption{Safe register process}\label{fig:procsafe}
\end{figure}

The correspondence between the process and the four rules given for MWMR safe registers in \autoref{sec:safe} is rather direct: the first two summands allow a thread that is not currently reading or writing to begin a read or a write, and the remaining four each represent one of the four rules, in order. 
Note that, in the case of a finish write without overlap, we use $\valssym$ to retrieve which value this thread intended to write so that the register state can be appropriately updated.

\subsection{Regular MWMR registers}
See \autoref{fig:procregular} for the process equation representing our regular register model.
Recall that regular registers use the order write action to generate a global ordering on all write operations on a register on the fly.

\begin{figure}[hb]
    \begin{multline*}
        \RReg(\rid: \RID, s:\StatusAll) = \\
        \sum_{\tid\in\TID}\sum_{\data \in \Data}
        \left(\begin{array}{ll}
            & (\tid\notin(\readers{s} \cup \writers{s})) \then
                 \startread[\tid,\rid]\co\RReg(\rid,\usr{s,\tid}) \\
          + & (\tid\notin(\readers{s} \cup \writers{s})) \then
                 \startwrite[\tid,\rid]{\data}\co\RReg(\rid,\usw{s,\tid,\data})\\
          + & (\tid\in\readers{s} \land \data \in \posval{s,\tid}) \then
                 \finishread[\tid,\rid]{\data}\co\RReg(\rid,\ufr{s,\tid})\\
          + & (\tid \in \writers{s} \land \tid\in\pending{s}) \then \orderwrite[\tid,\rid]\co\RReg(\rid,\uow{s,\tid,\vals{s,\tid}})\\
          + & (\tid\in\writers{s}\wedge \tid\notin\pending{s}) \then
                 \finishwrite[\tid,\rid]\co\RReg(\rid,\ufw{s,\tid,\trueval{s}})
        \end{array}
        \right)
    \end{multline*}
    \caption{Regular register process}\label{fig:procregular}
\end{figure}

Similar to the safe register process, the first two summands are merely allowing an idle thread to begin a read or write operation. 
The third summand corresponds to finishing a read by returning a value that is in the set of possible values to be returned for this read.
Recall that, by the definition of $\posvalsym$, this set is constructed as follows: when a read starts, the set is initialised to the current stored value of the register and the intended write value of every active write. 
Subsequently, whenever a write occurs, its intended value is added to the set.
This way, at the finish read, the set will contain exactly those values that the read could return.
The fourth summand allows the occurrence of the ordering action; at this time the intended value of the write, which was temporary stored in the access function $\valssym$, is logged as the stored value of the register. 
The final summand describes the ending of a write operation. The $\ufwsym$ update function will set the stored value to whatever data value is passed. In this case, the stored value should not change at the finish write, since it was already changed at the order write. Hence, we simply pass $\trueval{s}$.
Note that we use $\pendingsym$ to determine if the order write action still has to occur.

\subsection{Atomic MWMR registers}
See \autoref{fig:procatomic} for our model of MWMR atomic registers.
Recall that, in addition to the order write action, the atomic register model also uses the order read action. This way, it generates an ordering on all operations on a register.

\begin{figure}[ht]
    \begin{multline*}
        \AReg(\rid:\RID, s:\StatusAll) = \\
        \sum_{\tid\in\TID}\sum_{\data \in \Data}
        \left(\begin{array}{ll}
            & (\tid\notin(\readers{s}\cup\writers{s})) \then
                 \startread[\tid,\rid]\co\AReg(\rid,\usr{s,\tid}) \\
          + & (\tid\notin(\readers{s}\cup\writers{s})) \then
                 \startwrite[\tid,\rid]{\data}\co\AReg(\rid,\usw{s,\tid,\data})\\
          + & (\tid\in\readers{s} \wedge \tid \in \pending{s}) \then\orderread[\tid,\rid]\co\AReg(\rid,\uor{s,\tid}) \\
          + & (\tid\in\writers{s}\wedge\tid \in \pending{s}) \then \orderwrite[\tid,\rid]\co\AReg(\rid,\uow{s,\tid,\vals{s,\tid}}) \\
          + & (\tid\in\readers{s}\wedge\tid \notin \pending{s})\then \finishread[\tid,\rid]{\vals{s,\tid}}\co\AReg(\rid,\ufr{s,\tid}) \\
          + & (\tid\in\writers{s}\wedge\tid \notin \pending{s})\then\finishwrite[\tid,\rid]\co\AReg(\rid,\ufw{s,\tid,\trueval{s}})
          \end{array}
        \right)
    \end{multline*}
    \caption{Atomic register process}\label{fig:procatomic}
\end{figure}

Summands one, three and five are the invocation, ordering and response of a read operation respectively.
Similarly, summands two, four and six are the invocation, ordering and response of a write operation.
We use $\pendingsym$ to determine whether an operation's order action still has to occur. When a read is ordered, we save the current stored value of the register in $\valssym$, so that this value can be returned when the read ends.
For writes, we already update $\truevalsym$ when the write is ordered, meaning that when the write ends we do not want to change $\truevalsym$ further and just pass the current value back.

\section{Thread consistency proof}\label{app:thr-consist-proof}
In this appendix, we prove \autoref{lem:thr-consist}. We first provide an alternate perspective on the definitions of the access functions given in \autoref{sec:registers}, which make it clearer how a transition affects the status object. We then prove a supporting lemma, and finally prove \autoref{lem:thr-consist} itself.

When it comes to applying the definitions of the access functions, we find it easier to reason from the perspective of transitions: given a transition $s \xrightarrow{a} s'$ in an LTS that is derived from a register model as described in \autoref{sec:registers}, the relation of the output of an access function on $s'$ to the output on $s$, depending on the action $a$.
This perspective can be derived directly from the definitions of the update functions, and the tight coupling between update functions and actions.
\begin{observation}\rm\label{lem:access}
Consider the LTS $(\states, \actionset, \initstate, \transrel)$ associated with a register $\rid \in \RID$. Let $s, s' \in \states$ and $a \in \actionset$ such that $s \xrightarrow{a} s'$, and let $\tid \in \TID$. Then the following equations hold:
{\allowdisplaybreaks
\begin{align*}
    \trueval{s'} &= \begin{cases}
        \data & \text{if $a =\finishwrite[\tid,\rid]$ and $ s' = \ufw{s,\tid,\data}$ for $\tid \in \TID, \data \in \Data$}\\
        \data & \text{if $a =\orderwrite[\tid,\rid]$ and $s' = \uow{s,\tid,\data}$ for $\tid \in \TID, \data \in \Data$}\\
        \trueval{s} & \text{otherwise}
    \end{cases}\\
\hspace{-8pt}
    \readers{s'} &= \begin{cases}
        \readers{s} \cup \{\tid\} & \text{if $a =\startread[\tid,\rid]$ for $\tid \in \TID$}\\
        \readers{s} \setminus \{\tid\} & \text{if $a =\finishread[\tid,\rid]{\data}$ for $\tid \in \TID, \data \in \Data$}\\
        \readers{s} & \text{otherwise}
    \end{cases}\\
\hspace{-8pt}
    \writers{s'} &= \begin{cases}
        \writers{s} \cup \{\tid\} & \text{if $a = \startwrite[\tid,\rid]{\data}$ for $\tid \in \TID, \data \in \Data$}\\
        \writers{s} \setminus \{\tid\} & \text{if $a = \finishwrite[\tid,\rid]$ for $\tid \in \TID$}\\
        \writers{s} & \text{otherwise}
    \end{cases}\\
\hspace{-8pt}
    \pending{s'} &= \begin{cases}
        \pending{s} \cup \{\tid\} & \text{if $a = \startread[\tid,\rid]$ or $a = \startwrite[\tid,\rid]{\data}$ for $\tid \in \TID, \data \in \Data$}\\
        \pending{s} \setminus \{\tid\} & \text{if $a = \orderread[\tid,\rid]$ or $a = \orderwrite[\tid,\rid]$ for $\tid \in \TID$}\\
        \pending{s} & \text{otherwise}
    \end{cases}\\
\hspace{-8pt}
    \vals{s', \tid} &= \begin{cases}
        \data & \text{if $a = \startwrite[\tid,\rid]{\data}$ for $\data \in \Data$}\\
        \trueval{s} & \text{if $a = \orderread[\tid,\rid]$}\\
        \vals{s, \tid} & \text{otherwise}        
    \end{cases}\\
\hspace{-8pt}
    \overlap{s', \tid} &= \begin{cases}
        (\writers{s} > 0) & \text{if $a = \startread[\tid,\rid]$ or $a = \startwrite[\tid,\rid]{\data}$ for $\data \in \Data$}\\
        \true & \text{if $a = \startwrite[\tidtwo,\rid]{\data}$ for $\tidtwo\in\TID, \tidtwo \neq \tid$ and $\data \in \Data$}\\
        \overlap{s, \tid} & \text{otherwise}
    \end{cases}\\
\hspace{-8pt}
    \posval{s', \tid} &= \begin{cases}
        \{\trueval{s}\} \cup \{\data \mid \exists_{\tidtwo \in \writers{s}}:\vals{s, \tidtwo} = d\} & \text{if $a = \startread[\tid,\rid]$}\\
        \posval{s, \tid} \cup \{\data\} & \hspace{-35pt}\text{if $a = \startwrite[\tidtwo,\rid]{\data}$ for $\tidtwo\in\TID, \tid\neq\tidtwo, \data \in \Data$}\\
        \posval{s, \tid} & \hspace{-35pt}\text{otherwise}
    \end{cases}
\end{align*}
}
\end{observation}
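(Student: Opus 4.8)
The plan is to derive the observation directly from the update-function definitions in \autoref{sec:structure}, exploiting the rigid coupling between each transition label and a single update function. First I would fix the action-to-update dictionary by inspecting the summands of the three register processes in \autoref{fig:procsafe}, \autoref{fig:procregular} and \autoref{fig:procatomic}: any transition $s \xrightarrow{a} s'$ of the LTS associated with a register $\rid$ forces $s'$ to be the image of $s$ under a uniquely determined update, namely $s' = \usr{s,\tid}$ when $a = \startread[\tid,\rid]$, $s' = \usw{s,\tid,\data}$ when $a = \startwrite[\tid,\rid]{\data}$, $s' = \ufr{s,\tid}$ when $a = \finishread[\tid,\rid]{\data}$, $s' = \ufw{s,\tid,\data'}$ for some $\data' \in \Data$ when $a = \finishwrite[\tid,\rid]$, $s' = \uor{s,\tid}$ when $a = \orderread[\tid,\rid]$, and $s' = \uow{s,\tid,\vals{s,\tid}}$ when $a = \orderwrite[\tid,\rid]$. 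The only labels appearing in more than one summand are $\finishread[\tid,\rid]{\data}$ and $\finishwrite[\tid,\rid]$ in the safe model, but both summands of each invoke the same update function ($\ufrsym$, resp.\ $\ufwsym$) and differ only in the value supplied, which affects $\truevalsym$ alone. This is exactly why the observation phrases the $\truevalsym$ case for $\finishwrite[\tid,\rid]$ parametrically, reading off $\truevalsym(s') = \data$ from whatever value $\data$ was passed as the third argument of $\ufwsym$.

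Next I would verify the seven access-function blocks one at a time. For each, the non-``otherwise'' cases are read straight off the definition of the relevant update function in \autoref{sec:structure}: for instance, $\readers{s'} = \readers{s} \cup \{\tid\}$ for $\usrsym$ and $\readers{s'} = \readers{s}\setminus\{\tid\}$ for $\ufrsym$ yield the two displayed cases for $\readerssym$, while $\trueval{s'} = \data$ for $\ufwsym$ and for $\uowsym$ yields the two displayed cases for $\truevalsym$. The ``otherwise'' branches are justified by the convention stated in \autoref{sec:structure} that any access function not mentioned in an update's definition retains its previous value; so for each access function I would simply check that none of the remaining update functions mentions it, which makes the unchanged branch immediate.

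The point requiring real care, and the step I expect to be the main obstacle, is the cross-thread behaviour of $\uswsym$, which drives the $\overlapsym$ and $\posvalsym$ cases. A start write by thread $\tidtwo$ sets $\overlap{s',\tid} = \true$ and $\posval{s',\tid} = \posval{s,\tid}\cup\{\data\}$ for every $\tid \neq \tidtwo$, in addition to setting $\overlap{s',\tidtwo} = (\writers{s} > 0)$ for the writer itself. Stated for a fixed but arbitrary thread $\tid$, this splits into two distinct non-``otherwise'' cases according to whether the $\startwrite$ is performed by $\tid$ or by another thread, and here I would take care to keep the summation index of (\ref{eq:structure}) separate from the fixed thread $\tid$ whose access values are being computed. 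Verifying that these two cases, together with the $\startread[\tid,\rid]$ case (where $\usrsym$ sets $\overlap{s',\tid} = (\writers{s} > 0)$ and reinitialises $\posval{s',\tid}$), exhaust all actions that touch the $\tid$-entries, and that every other action leaves them untouched, completes the $\overlapsym$ and $\posvalsym$ blocks and hence the observation.
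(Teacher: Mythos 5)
Your proposal is correct and follows essentially the same route as the paper, which offers no separate proof but explicitly states that the observation ``can be derived directly from the definitions of the update functions, and the tight coupling between update functions and actions'' --- precisely the action-to-update dictionary you construct from Figures~\ref{fig:procsafe}--\ref{fig:procatomic}, combined with the stated convention that unmentioned access functions retain their values. Your attention to the duplicate $\finishread[\tid,\rid]{\data}$/$\finishwrite[\tid,\rid]$ summands in the safe model and to keeping the summation index distinct from the fixed thread $\tid$ in the $\uswsym$ cases for $\overlapsym$ and $\posvalsym$ correctly fills in the only details the paper leaves implicit.
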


We use the following lemma in our proof of \autoref{lem:thr-consist}.
\begin{lemma}\rm\label{lem:overlap}
    Let $s$ be the state of a safe register that is reachable from its initial state, and let $\tid \in \TID$ be a thread identifier. If $\tid \in \readers{s}$ and $\writers{s} \neq \emptyset$, then $\overlap{s,\tid}$.
\end{lemma}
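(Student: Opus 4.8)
The plan is to prove the statement by induction on the length of a path from the initial state $\initstate$ to $s$, exploiting the transition-level characterisation of the access functions in \autoref{lem:access} together with the guards of the safe register process in \autoref{fig:procsafe}. For the base case $s = \initstate$ we have $\readers{\initstate} = \emptyset$, so the premise $\tid \in \readers{s}$ fails and the implication holds vacuously.

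For the inductive step I would take a transition $s \xrightarrow{a} s'$ with $s$ reachable, assume the claim for $s$, and suppose $\tid \in \readers{s'}$ and $\writers{s'} \neq \emptyset$; then I case-split on $a$ using \autoref{lem:access}. There are two immediate cases. If $a = \startwrite[\tidtwo,\rid]{\data}$ with $\tidtwo \neq \tid$, then \autoref{lem:access} sets $\overlap{s',\tid} = \true$ directly. If $a = \startread[\tid,\rid]$, then a start read leaves the writer set unchanged, so $\writers{s} = \writers{s'} \neq \emptyset$, and \autoref{lem:access} gives $\overlap{s',\tid} = (\writers{s} > 0) = \true$.

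In every remaining case the overlap flag for $\tid$ is unchanged, i.e.\ $\overlap{s',\tid} = \overlap{s,\tid}$, so I reduce to the induction hypothesis by verifying both premises already held at $s$. First, $\tid$ remains a reader: the only action removing $\tid$ from $\readers$ is $\finishread[\tid,\rid]{\data}$, which would contradict $\tid \in \readers{s'}$; hence in all surviving cases $\tid \in \readers{s}$. Second, the writer set stays nonempty: the only actions changing $\writers$ are $\startwrite$ (which enlarges it, and whose interesting subcase was already handled directly above) and $\finishwrite$ (which removes one thread, leaving $\writers{s} \supseteq \writers{s'} \neq \emptyset$); so whenever the overlap flag is preserved we have $\writers{s} \neq \emptyset$. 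The induction hypothesis then yields $\overlap{s,\tid} = \true$, hence $\overlap{s',\tid} = \true$.

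I expect the main obstacle to be the one case that must be ruled out rather than computed, namely $a = \startwrite[\tid,\rid]{\data}$: this is the only action by $\tid$ itself that could reset $\overlap{\cdot,\tid}$ to a stale value while potentially leaving $\tid$ in the reader set. Here the guards of \autoref{fig:procsafe} are essential: the guard $\tid \notin (\readers{s} \cup \writers{s})$, together with the fact that a start write leaves $\readers$ unchanged, gives $\tid \notin \readers{s} = \readers{s'}$, contradicting the premise, so this case cannot occur. The conceptual heart of the argument is exactly this interaction between the process guards and the overlap-update rules: a thread can only (re)set its own overlap flag at the moment it becomes active, at which point the flag correctly records whether a writer is present, while any writer arriving later sets the flag to $\true$ directly; consequently the forbidden configuration $\tid \in \readers{s}$, $\writers{s} \neq \emptyset$, $\neg\,\overlap{s,\tid}$ can never be reached.
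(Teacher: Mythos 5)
Your proof is correct, but it takes a genuinely different route from the paper's. The paper argues directly about an arbitrary reachable state $s$: it fixes an arbitrary path $\pi$ from $\initstate$ to $s$, locates the \emph{last} occurrences of $\startread[\tid,\rid]$ and of $\startwrite[\tidtwo,\rid]{\data}$ (for a witness writer $\tidtwo \neq \tid$) on $\pi$, notes that no matching finish actions can follow them, splits on which of the two start actions occurs first, and in each case derives a contradiction from the assumption $\neg\overlap{s,\tid}$ --- the key point being that resetting $\overlapsym$ to $\false$ for $\tid$ would require a $\startread[\tid,\rid]$ or $\startwrite[\tid,\rid]{\data'}$ from a writer-free state, while $\tidtwo$ remains a writer throughout the relevant suffix. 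You instead establish the claim as an inductive invariant, by induction on path length with a case analysis on the final transition via \autoref{lem:access}, and you dispose of the one genuinely dangerous case ($a = \startwrite[\tid,\rid]{\data}$, which could reset $\tid$'s overlap flag to a stale value) using the guard $\tid \notin (\readers{s} \cup \writers{s})$ in \autoref{fig:procsafe} --- where the paper's proof instead leans on the observation that a thread cannot read and write simultaneously. Your inductive formulation is more mechanical and local: each case is a one-step check, there is no reasoning about event orderings or suffixes of paths, and the argument would transfer readily to a proof assistant or to establishing further invariants of the register model. The paper's argument, by contrast, works behaviourally with last occurrences and contradiction, which matches the style of reasoning used elsewhere in its appendices but requires slightly more bookkeeping about what can and cannot appear on the chosen path.
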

\begin{proof}
    Let $\rid$ be a safe register process with LTS $(\states, \actionset, \initstate, \transrel)$ and let $s \in \states$ be a state reachable from $\initstate$. Let $\tid$ be an element of $\TID$ and assume that $\tid \in \readers{s}$ and $\writers{s} \neq \emptyset$. Then there exists $\tidtwo \in \TID$ such that $\tidtwo \in \writers{s}$. Since a thread cannot read and write simultaneously, we know that $\tid \neq \tidtwo$. We prove $\overlap{s,\tid}$. Recall that $\readers{\initstate} = \writers{\initstate} = \emptyset$. Since $\tid \in \readers{s}$ and $\tidtwo \in \writers{s}$, it follows from the definitions of $\readerssym$ and $\writerssym$ that $s \neq \initstate$ and that every path from $\initstate$ to $s$ must contain occurrences of $\startread[\tid,\rid]$ and $\startwrite[\tidtwo,\rid]{\data}$ for some $\data \in \Data$. Let $\pi$ be an arbitrary path from $\initstate$ to $s$, and consider the last occurrence of $\startread[\tid,\rid]$ on $\pi$, and the last occurrence of $\startwrite[\tidtwo,\rid]{\data}$ for any $\data \in \Data$ on $\pi$. If after $\startread[\tid,\rid]$ there were an occurrence of $\finishread[\tid,\rid]{\data'}$ for some $\data' \in \Data$ along $\pi$, then $\tid \notin \readers{s}$, so there is no such occurrence. Similarly, if there were an occurrence of $\finishwrite[\tidtwo,\rid]$ after $\startwrite[\tidtwo,\rid]{\data}$ on $\pi$, then $\tidtwo \notin \writers{s}$, so there is no such $\finishwrite[\tidtwo,\rid]$.
    We do a case distinction on whether $\startread[\tid,\rid]$ occurs before $\startwrite[\tidtwo,\rid]{\data}$ or vice versa.
    \begin{itemize}
        \item Suppose $\startread[\tid,\rid]$ is before $\startwrite[\tidtwo,\rid]{\data}$. Let $s_b \startwrite[\tidtwo,\rid]{\data} s_a$ be the fragment $\pi$ with the last occurrence of $\startwrite[\tidtwo,\rid]{\data}$. Then $\overlap{s_a,\tid} = \true$ according to the definition of $\overlapsym$ as given in \autoref{lem:access}. Assume towards a contradiction that $\neg\overlap{s,\tid}$ is true. Then on the suffix $\pi'$ of $\pi$ between $s_a$ and $s$, there must be an occurrence of $\startread[\tid,\rid]$ or $\startwrite[\tid,\rid]{\data'}$ for some $\data' \in \Data$ from a state $s_c$ such that $\writers{s_c} = \emptyset$. However, since there is no $\finishwrite[\tidtwo,\rid]$  on $\pi'$, $\tidtwo \in \writers{s_c}$ for all states $s_c$ on $\pi'$. Hence, we have reached a contradiction and $\overlap{s,\tid}$ must be true.
        
        \item Suppose $\startwrite[\tidtwo,\rid]{\data}$ is before $\startread[\tid,\rid]$. Let $s_b \startread[\tid,\rid] s_a$ be the fragment of $\pi$ with the last occurrence of $\startread[\tid,\rid]$. Since there is no $\finishwrite[\tidtwo,\rid]$ after $\startwrite[\tidtwo,\rid]{\data}$ on $\pi$, $\tidtwo \in \writers{s_b}$. It follows from \autoref{lem:access} that $\overlap{s_a,\tid}$. Assume towards a contradiction that $\neg\overlap{s,\tid}$ is true. Then on the suffix $\pi'$ of $\pi$ between $s_a$ and $s$, $\startread[\tid,\rid]$ or $\startwrite[\tid,\rid]{\data'}$ occurs for some $\data'\in\Data$ from a state $s_c$ such that $\writers{s_c} = \emptyset$. But since $\tidtwo \in \writers{s_c}$ for all $s_c$ along $\pi'$, we have a contradiction. Therefore, $\overlap{s,\tid}$ must be true.
    \end{itemize}
    In both cases, we have shown that $\overlap{s,\tid}$. 
\end{proof}

\pagebreak[2]
We now provide the proof of \autoref{lem:thr-consist}.
\thrconsist*
\begin{proof}
    \hypertarget{thrconsist}{Let $M = (\states, \actionset, \initstate, \transrel, \thrsym, \regsym)$ be a thread-register model constructed as the parallel composition $P_1 \parcomp \ldots \parcomp P_k$ with $k = |\TID| + |\RID|$, where $P_1$ to $P_{|\TID|}$ are thread LTSs and $P_{|\TID| + 1}$ to $P_k$ are register LTSs. Let $P_x = (\states_x, \actionset_x, \initstate_x, \transrel_x)$ for all $1 \leq x \leq k$. Let $\#$ be a mapping from thread and register id's to natural numbers, which yields the index of that thread or register in the parallel composition.}

    Let $s = (s_1, \ldots, s_k)$ be a state in $\states$ and $a$ an action in $\actionset$ such that $a$ is enabled in $s$ and there exists an action $b \in \actionset$ and a state $s' = (s_1', \ldots, s_k') \in \states$ such that $\thrmap{a} \neq \thrmap{b}$ and $(s, b, s') \in \transrel$. We prove that $a$ is enabled in $s'$.
    
    Let $\thrmap{a} = \tid$ and $\regmap{a} = \rid$, with $\tid \in \TID$ and $\rid \in \RID \cup \{\undefsymb\}$.
    From the definition of parallel composition, specifically the construction of $\transrel$, it follows that the target state of $(s, b, s')$ can only differ from $s$ for the parallel components that are involved in $b$: $P_{\idx{\thrmap{b}}}$ and, if $\regmap{b} \neq \undefsymb$, $P_{\idx{\regmap{b}}}$. Since $\thrmap{b} \neq \tid$, $s_{\idx{\tid}} = s'_{\idx{\tid}}$.
    We do a case distinction on whether $\regmap{a} = \undefsymb$. 
    \begin{itemize}
        \item If $\regmap{a} = \undefsymb$, then $a$ is in $\thrlocacts_{\tid}$. Since these actions occur only in $\actionset_{\idx{\tid}}$, we merely need to show that $a$ is enabled in $s'_{\idx{\tid}}$ to establish that $a$ is enabled in $s'$. Since $a$ is enabled\linebreak[3] in $s$, it is enabled in $s_{\idx{\tid}}$. We have established that $s_{\idx{\tid}} \mathbin= s'_{\idx{\tid}}$, so $a$ is enabled in $s'$. 

        \item If $\regmap{a} \neq \undefsymb$, then $a$ is not a thread local action, and must be a register interface or register local action. In both cases, it is an action in $\actionset_{\idx{\rid}}$. If it is a register local action, it solely remains to prove that $a$ is enabled in $s'_{\idx{\rid}}$; if it is a register interface action, we must also prove that $a$ is enabled in $s'_{\idx{\tid}}$. The latter follows immediately, in the case that $a$ is a register interface action, from the observation that $s_{\idx{\tid}} = s'_{\idx{\tid}}$ and $a$ being enabled in $s_{\idx{\tid}}$. In both cases, it therefore suffices to prove that $a$ is enabled in $s'_{\idx{\rid}}$.
        From the construction of $\transrel$ it follows that $a$ is enabled in $s_{\idx{\rid}}$. Note that if $\regmap{b} \neq \rid$, then $s_{\idx{\rid}} = s'_{\idx{\rid}}$ and so $a$ is trivially enabled in $s'_{\idx{\rid}}$. Therefore, we continue under the assumption that $\regmap{b} = \rid$. Since $b$ is the action label of a transition from $s$ to $s'$, and $\regmap{b} = \rid \neq \undefsymb$, we know that $(s_{\idx{\rid}}, b, s'_{\idx{\rid}}) \in \transrel_{\idx{\rid}}$. 
        We do a further case distinction on what action $a$ is:
        \begin{itemize}
            \item If $a = \startread[\tid,\rid]$ or $a=\startwrite[\tid,\rid]{\data}$ for some $\data \in \Data$, then for all three register models, it follows that $\tid \notin (\readers{s_{\idx{\rid}}} \cup \writers{s_{\idx{\rid}}})$. We need to show that $\tid \notin (\readers{s'_{\idx{\rid}}} \cup \writers{s'_{\idx{\rid}}})$ as well. From $\thrmap{b} \neq \tid$ it follows that $b \neq \startread[\tid,\rid]$ and $b \neq \startwrite[\tid,\rid]{\data'}$ for any $\data' \in \Data$. We know by \autoref{lem:access} that if $\tid \notin \readers{s_{\idx{\rid}}}$ then $\tid \notin \readers{s'_{\idx{\rid}}}$ and that if $\tid \notin \writers{s_{\idx{\rid}}}$ then $\tid \notin \writers{s'_{\idx{\rid}}}$. Therefore, $\tid \notin (\readers{s'_{\idx{\rid}}} \cup \writers{s'_{\idx{rid}}})$. 
            Hence, it follows for all three register models that $a$ is enabled in $s'_{\idx{\rid}}$ and thus enabled in $s'$. 

            \item If $a = \finishread[\tid,\rid]{\data}$ for some $\data \in \Data$, then to prove that $a$ is enabled in $s'_{\idx{\rid}}$, we do a case distinction on which type of register $\rid$ is:
            \begin{itemize}
                \item If $\rid$ is a safe register, then it follows from $a$ being enabled in $s_{\idx{\rid}}$ that $\tid \in \readers{s_{\idx{\rid}}}$ and either $\neg \overlap{s_{\idx{\rid}}, \tid}$ and $\data =\trueval{s_{\idx{\rid}}}$, or $\overlap{s_{\idx{\rid}}, \tid}$. To show that $a$ is enabled in $s'_{\idx{\rid}}$, we need to show $\tid \in \readers{s'_{\idx{\rid}}}$, and $\data = \trueval{s'_{\idx{\rid}}}$ or $\overlap{s'_{\idx{\rid}}, \tid}$.\linebreak[3] From $\thrmap{b} \neq \tid$ it follows that $b \neq \finishread[\tid,\rid]{\data'}$ for some $\data' \in \Data$. Hence, it follows from \autoref{lem:access} that if $\tid \in \readers{s_{\idx{\rid}}}$ then $\tid \in \readers{s'_{\idx{\rid}}}$. Additionally, if $\overlap{s_{\idx{\rid}}, \tid}$ then $\neg\overlap{s'_{\idx{\rid}}, \tid}$ would only be possible if $b = \startread[\tid,\rid]$ or $b = \startwrite[\tid,\rid]{\data'}$ for some $\data' \in \Data$. Since $\thrmap{b} \neq \tid$, we can conclude that this is not the case and therefore that if $\overlap{s_{\idx{\rid}}, \tid}$, then $\overlap{s'_{\idx{\rid}}, \tid}$. Hence, if $\overlap{s_{\idx{\rid}}, \tid}$ then $a$ is enabled in $s'_{\idx{\rid}}$ and thus $a$ is enabled in $s'$. We continue under the assumption that $\neg\overlap{s_{\idx{\rid}}, \tid}$, and thus that $\data= \trueval{s_{\idx{\rid}}}$. Assume towards a contradiction that $\neg\overlap{s'_{\idx{\rid}}, \tid}$ and $\data \neq \trueval{s'_{\idx{\rid}}}$. If $\trueval{s_{\idx{\rid}}} \neq \trueval{s'_{\idx{\rid}}}$, then $b$ must be $\finishwrite[\tidtwo,\rid]$ or $\orderwrite[\tidtwo,\rid]$ for some $\tidtwo \neq \tid$. Since there are no order write actions in the safe register model, it must be the case that $b$ is a finish write action. As $b$ is enabled in $s_{\idx{\rid}}$, this means that $\tidtwo \in \writers{s_{\idx{\rid}}} \neq \emptyset$. Since $\tid \in \readers{s_{\idx{\rid}}}$, it follows from \autoref{lem:overlap} that $\overlap{ s_{\idx{\rid}}, \tid}$. This contradicts our assumption that $\neg\overlap{s_{\idx{\rid}}, \tid}$, thus we can conclude that we must have $\overlap{s'_{\idx{\rid}}, \tid}$ or $\data = \trueval{s'_{\idx{\rid}}}$, and consequently $a$ is enabled in $s'_{\idx{\rid}}$. Hence, $a$ is enabled in $s'$.

                \item If $\rid$ is a regular register, then it follows from $a$ being enabled in $s_{\idx{\rid}}$ that $\tid \in \readers{s_{\idx{\rid}}}$ and  $\data \in \posval{s_{\idx{\rid}}, \tid}$. To show $a$ is enabled in $s'_{\idx{\rid}}$, it suffices to show that $\tid \in \readers{s'_{\idx{\rid}}}$ and $\data \in \posval{s'_{\idx{\rid}}. \tid}$. From \autoref{lem:access} and $\thrmap{b} \neq \tid$, it follows that if $\tid \in \readers{s_{\idx{\rid}}}$, then $\tid \in \readers{s'_{\idx{\rid}}}$. Additionally, as can be seen in \autoref{lem:access}, the set $\posvalsym$ for a specific thread id only grows over time until it is reset when that thread starts a read. Therefore, it is only possible that $\data \in \posval{s_{\idx{\rid}}, \tid}$ and $\data \notin \posval{s'_{\idx{\rid}}, \tid}$ are both true if $b = \startread[\tid,\rid]$. Since $\thrmap{b} \neq \tid$, we can conclude that $\data \in \posval{s'_{\idx{\rid}}, \tid}$. Thus, $a$ is enabled in $s'_{\idx{\rid}}$ and therefore $a$ is enabled in $s'$.

                \item If $r$ is an atomic register, then it follows from $a$ being enabled in $s_{\idx{\rid}}$ that $\tid \in \readers{s_{\idx{\rid}}}$, $\tid \notin \pending{s_{\idx{\rid}}}$ and $\data = \vals{s_{\idx{\rid}}, \tid}$. From \autoref{lem:access} it follows that when $\thrmap{b} \neq \tid$, then $\tid \in \readers{s_{\idx{\rid}}}$ implies $\tid \in \readers{s'_{\idx{\rid}}}$ and $\tid \notin \pending{s_{\idx{\rid}}}$ implies $\tid \notin \pending{s'_{\idx{\rid}}}$. Similarly, since $\thrmap{b} \neq \tid$, $\vals{s_{\idx{\rid}}, \tid} = \vals{s'_{\idx{\rid}}, \tid}$. Thus, $a$ is enabled in $s'_{\idx{\rid}}$ and therefore also in $s'$.
            \end{itemize}
            In all three cases, $a$ is enabled in $s'$. 

            \item If $a = \finishwrite[\tid,\rid]$, then what we need to show about $s'_{\idx{\rid}}$ to establish that $a$ is enabled differs depending on which type of register $\rid$ is. If $\rid$ is safe, then it suffices to show $\tid \in \writers{s'_{\idx{\rid}}}$. If $\rid$ is regular or atomic, we need to show $\tid \in \writers{s'_{\idx{\rid}}}$ and $\tid \notin \pending{s'_{\idx{\rid}}}$. From the definitions of these access function as given in \autoref{lem:access}, and $\thrmap{b} \neq \tid$, it follows that $\tid \in \writers{s_{\idx{\rid}}}$ implies $\tid \in \writers{s'_{\idx{\rid}}}$ and $\tid \notin \pending{s_{\idx{\rid}}}$ implies $\tid \notin \pending{s'_{\idx{\rid}}}$. Consequently, if $a$ is enabled in $s_{\idx{\rid}}$ then it is also enabled in $s'_{\idx{\rid}}$. Thus, we can conclude that $a$ is enabled in $s'_{\idx{\rid}}$ and therefore also in $s'$. 

            \item If $a = \orderwrite[\tid,\rid]$, then $\rid$ could be a regular or atomic register. From the two models, we know that since $a$ is enabled in $s_{\idx{\rid}}$, we have $\tid \in \writers{s_{\idx{\rid}}}$ and $\tid \in \pending{s_{\idx{\rid}}}$. For both models, to show $a$ is enabled in $s'_{\idx{\rid}}$, it suffices to show $\tid \in \writers{s'_{\idx{\rid}}}$ and $\tid \in \pending{s'_{\idx{\rid}}}$. From \autoref{lem:access} and $\thrmap{b} \neq \tid$, it follows that if $\tid \in \writers{s_{\idx{\rid}}}$ then $\tid \in \writers{s'_{\idx{\rid}}}$ and if $\tid \in \pending{s_{\idx{\rid}}}$ then  $\tid \in \pending{s'_{\idx{\rid}}}$. Thus, $a$ is enabled in $s'_{\idx{\rid}}$ and therefore $a$ is enabled in $s'$. 

            \item If $a = \orderread[\tid,\rid]$ then $\rid$ must be an atomic register.  From the atomic register model and $a$ being enabled in $s_{\idx{\rid}}$, we know that $\tid \in \readers{s_{\idx{\rid}}}$ and $\tid \in \pending{s_{\idx{\rid}}}$. From the definitions of the access functions as given in \autoref{lem:access} and $\thrmap{b}\neq \tid$, it follows that $\tid \in \readers{s_{\idx{\rid}}}$ implies $\tid \in \readers{s'_{\idx{\rid}}}$ and $\tid \in \pending{s_{\idx{\rid}}}$ implies $\tid \in \pending{s'_{\idx{\rid}}}$. Thus, when $a$ is enabled in $s_{\idx{\rid}}$ it is also enabled in $s'_{\idx{\rid}}$. We conclude that $a$ is enabled in $s'$.
        \end{itemize}
    \end{itemize}
    We have shown in all cases that $a$ is enabled in $s'$. Hence, our thread-register models are thread consistent with respect to the mapping $\thrsym$.
\end{proof}

\section{The tread interference relation is a concurrency relation} \label{app:concT-val}

\concTval*
\begin{proof}
To prove $\conc_T$ is a concurrency relation, we need to prove the two properties of concurrency relations.
    \begin{enumerate}
        \item It follows directly from the definition of $\conc_T$ that it is irreflexive: trivially $\thrmap{a} = \thrmap{a}$ for all $a \in \actionset$.
        \item Let $a$ be an arbitrary action in $\actionset$ and let $s$ be an arbitrary state in $\states$. Assume that $a$ is enabled in $s$ and that there exists a path $\pi$ from $s$ to some $s'$ such that $a \conc_T b$ for all $b$ occurring on $\pi$. We prove that $a$ is enabled in $s'$. 
        We do induction on the length of $\pi$.
        \begin{itemize}
            \item For the base case, $|\pi| = 0$, we observe that if $\pi$ has length $0$, then $s = s'$ and hence trivially $a$ is enabled in $s'$.
            \item Let $i \geq 0$ and assume that the claim holds for all paths from $s$ of length $i$; we prove that it also holds for all paths from $s$ of length $i + 1$. Let $\pi$ be a path from $s$ to some state $s' \in \states$ with $|\pi| = i + 1$ such that $a \conc_T b$ for all $b$ occurring on $\pi$. Then there exists a path $\pi'$ from $s$ to some state $s''$ of length $i$ such that $a \conc_T b$ for all $b$ occurring on $\pi'$, and $\pi = \pi'cs'$ for some $c \in \actionset$. Note that by assumption on $\pi$, $a \conc_T c$. By the induction hypothesis, $a$ is enabled in $s''$. Since $a \conc_T c$, it follows by definition of $\conc_T$ that  $\thrmap{a} \neq \thrmap{c}$ and thus, since $M$ is thread-consistent by \autoref{lem:thr-consist}, it follows that $a$ is enabled in $s'$.
        \end{itemize}
        We conclude that the property holds for all paths $\pi$.\qedhere
    \end{enumerate}
\end{proof}

\section{An explicit characterisation of complete paths}\label{app:complete paths}

\newcommand{\finish}{\textit{end}}

For a given mutual exclusion algorithm, let $M$ be the LTS of its thread-register model from \autoref{sec:thread-register}, using one of the register models employed in this paper---cf.\ \autoref{sec:registers}. Recalling that $M$ is a parallel composition of thread and register processes, each state in $M$ is a tuple $s=(s_1,\dots,s_k)$, where each of the indices $i \in \{1,\dots,k\}$ corresponds to a thread $t\in\TID$ or a register $r\in\RID$; as in \hyperlink{thrconsist}{the proof of} \autoref{lem:thr-consist} in
\autoref{app:thr-consist-proof} we denote the corresponding component $s_i$ as $s_{\idx{\tid}}$ or $s_{\idx{\rid}}$; it is a state in the LTS $T_t$ or $R_r$.

\begin{lemma}\rm\label{lem:1}
If $s$ is a reachable state of $M$ such that $s_{\idx{r}}$ enables an action $\orderwrite[\tid,\rid]$ or $\orderread[\tid,\rid]$ in $R_r$, then $s_{\idx{t}}$ enables an action $\finishwrite[\tid,\rid]$ or $\finishread[\tid,\rid]{d}$, respectively, in $T_t$.
Moreover, if  $s_{\idx{r}}$ enables $c = \finishwrite[\tid,\rid]$ or $c = \finishread[\tid,\rid]{d}$, then also  $s_{\idx{t}}$ enables $c$, and thus $s$ enables $c$ in $M$.
\end{lemma}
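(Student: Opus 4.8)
The plan is to reduce both claims to a single \emph{coupling invariant} relating each register's bookkeeping to the internal state of the thread operating on it, and then to prove that invariant by induction on reachability. Concretely, I would show that for every reachable state $s$ of $M$, every $\tid\in\TID$ and every $\rid\in\RID$: (i) if $\tid\in\writers{s_{\idx{\rid}}}$ then $s_{\idx{\tid}}$ enables $\finishwrite[\tid,\rid]$ in $T_\tid$, and (ii) if $\tid\in\readers{s_{\idx{\rid}}}$ then $s_{\idx{\tid}}$ enables $\finishread[\tid,\rid]{\data}$ for every $\data\in\Data$ in $T_\tid$. Granting this, the lemma is immediate. Inspecting the regular and atomic register processes shows that $\orderwrite[\tid,\rid]$ is enabled in $s_{\idx{\rid}}$ only when $\tid\in\writers{s_{\idx{\rid}}}$, and $\orderread[\tid,\rid]$ (atomic only) only when $\tid\in\readers{s_{\idx{\rid}}}$; the invariant then supplies the matching finish action in $T_\tid$, giving the first statement. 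Likewise, enabledness of $\finishwrite[\tid,\rid]$ (resp.\ $\finishread[\tid,\rid]{\data}$) in $s_{\idx{\rid}}$ forces $\tid\in\writers{s_{\idx{\rid}}}$ (resp.\ $\tid\in\readers{s_{\idx{\rid}}}$) in every register model, so the invariant makes the same action enabled in $s_{\idx{\tid}}$ (for every value, hence in particular the one the register fixes). Since $\finishwrite[\tid,\rid]$ and $\finishread[\tid,\rid]{\data}$ occur in exactly the two components $T_\tid$ and $R_\rid$, both of which enable the action, the definition of parallel composition makes it enabled in $s$ in $M$.

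For the invariant I would induct on the length of a path from $\initstate$ to $s$. The base case is vacuous, since $\readers{\initstate_{\idx{\rid}}}=\writers{\initstate_{\idx{\rid}}}=\emptyset$. For the step, consider $s\xrightarrow{a}s'$ and use \autoref{lem:access} to track how $a$ affects $\readerssym$ and $\writerssym$. The order actions $\orderwrite[\tid,\rid],\orderread[\tid,\rid]$ and all thread-local actions leave every $\readerssym$- and $\writerssym$-set untouched, so only the shared interface actions can alter an antecedent: $\startwrite[\tid,\rid]{\data}$ adds $\tid$ to $\writers{s_{\idx{\rid}}}$ while simultaneously (by the reasonable-implementation assumptions on $T_\tid$) moving $T_\tid$ into the state where only $\finishwrite[\tid,\rid]$ is enabled, so the obligation for the pair $(\tid,\rid)$ is created and discharged in lockstep; $\finishwrite[\tid,\rid]$ deletes $\tid$ from $\writers{s_{\idx{\rid}}}$, erasing the obligation; and symmetrically for reads. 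Obligations for other threads or other registers are inherited from the induction hypothesis, since the relevant component states do not change.

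The delicate point, which I expect to be the main obstacle, is ruling out a spurious obligation for the \emph{same} thread $\tid$ on a \emph{different} register $\rid'$ when $\tid$ starts an operation on $\rid$ or performs a thread-local action. Here I would exploit the structural assumptions on $T_\tid$: a state of $T_\tid$ that enables $\finishwrite[\tid,\rid]$ or a finish read on $\rid$ has \emph{only} those finish actions enabled, hence enables no start action, no thread-local action and no finish action for any other register. Consequently, whenever $a$ is a start write, start read, or thread-local action of $\tid$, the source component $s_{\idx{\tid}}$ is not such a ``post-start'' state, so the contrapositive of the induction hypothesis yields $\tid\notin\writers{s_{\idx{\rid'}}}\cup\readers{s_{\idx{\rid'}}}$ for every $\rid'$; as $a$ does not change these sets, no new obligation for $(\tid,\rid')$ can arise. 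This is precisely the fact that a thread is engaged in at most one register operation at a time, and it is what keeps the per-register invariant self-contained through the induction.

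Finally, I would note that the second sentence of the lemma needs, beyond the invariant, only the parallel-composition synchronisation argument already sketched: $\finishwrite[\tid,\rid]$ and $\finishread[\tid,\rid]{\data}$ appear in no component other than $T_\tid$ and $R_\rid$, so all remaining components can idle, and the joint enabledness in $T_\tid$ and $R_\rid$ lifts to enabledness of $c$ at $s$ in $M$.
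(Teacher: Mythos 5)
Your proposal is correct, and it reaches the lemma by a genuinely different route than the paper. The paper argues directly on an execution path leading to $s$: since in every register model the actions of thread $t$ on register $r$ can only occur in the strict order $\startwrite[\tid,\rid]{d}$ -- $\orderwrite[\tid,\rid]$ -- $\finishwrite[\tid,\rid]$ (a direct consequence of the guards $\tid\in\writers{s}$ and $\tid\in\pending{s}$ in the register processes), and since start and finish actions are synchronisations between $T_t$ and $R_r$, enabledness of $\orderwrite[\tid,\rid]$ or $\finishwrite[\tid,\rid]$ in $s_{\idx{r}}$ forces the last synchronisation between the two components to have been a start write; the structural assumption on threads then yields that $T_t$ is still parked in the post-start state enabling only $\finishwrite[\tid,\rid]$. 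You instead prove a stronger global statement -- the coupling invariant that $\tid\in\writers{s_{\idx{\rid}}}$ (resp.\ $\tid\in\readers{s_{\idx{\rid}}}$) implies $s_{\idx{\tid}}$ enables $\finishwrite[\tid,\rid]$ (resp.\ all $\finishread[\tid,\rid]{\data}$) -- by induction on reachability, and then read the lemma off from the guards of the register processes. Both arguments rest on exactly the same two structural facts, but your inductive formulation makes explicit what the paper leaves implicit: that the thread cannot have moved between its start action and the current state (because only the finish action is enabled there), and that a thread cannot carry simultaneous obligations on two registers (your ``delicate point'', which the paper's single-register, last-synchronisation argument never needs to confront). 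The price is a longer case analysis over transition types; the paper's version is shorter and dispatches reads by observing they proceed along the same lines as writes, rather than carrying both through an invariant.
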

\begin{proof}
We will provide the proof for the cases that $s_{\idx{r}}$ enables an action $\orderwrite[\tid,\rid]$ or $\finishwrite[\tid,\rid]$. The case for read actions proceeds along the exact same lines.

Given a register $r\in \RID$ and a thread $t\in \TID$, in any execution path for any of the register models from Figures \ref{fig:procsafe}--\ref{fig:procatomic}, the actions $c$ with $\thrmap{c}=t$ and $\regmap{c}=r$ occur strictly in the order
$\startwrite[\tid,\rid]{d}$ -- $\orderwrite[\tid,\rid]$ -- $\finishwrite[\tid,\rid]$. This is a direct consequence of the conditions $\tid \in \writers{s}$ and $\tid \in \pending{s}$ in Figures \ref{fig:procsafe}--\ref{fig:procatomic}. Moreover, the actions $\startwrite[\tid,\rid]{d}$ for $d \in \Data$ and $\finishwrite[\tid,\rid]$ can occur only in synchronisation between the register $r$ and the thread $t$. Thus, if $s$ is a reachable state of $M$ such that $s_{\idx{r}}$ enables an action $\orderwrite[\tid,\rid]$ or $\finishwrite[\tid,\rid]$, then the last synchronisation between $r$ and $t$ must have been an action $\startwrite[\tid,\rid]{d}$.

In \autoref{sec:thread-register} we postulated that on all paths from the initial state of $T_t$, each transition labelled $\startwrite[\tid,\rid]{d}$ for some $\rid \in \RID$ and $d\in\Data$ must go to a state where only the action $\finishwrite[\tid,\rid]$ is enabled. Since the last synchronisation between $r$ and $t$ was on an action $\startwrite[\tid,\rid]{\data}$, this $\finishwrite[\tid,\rid]$ cannot have occurred yet. Hence $s_{\idx{t}}$ enables the action $\finishwrite[\tid,\rid]$.

In case $s_{\idx{r}}$ enables $\finishwrite[\tid,\rid]$, then (by the above) both $s_{\idx{r}}$ and $s_{\idx{t}}$ enable this action, and therefore it is also enabled by $s$.
\end{proof}

\begin{lemma}\rm\label{lem:2}
If $s$ is a state of $M$ such that $s_{\idx{t}}$ enables an action $\finishwrite[\tid,\rid]$ or $\finishread[\tid,\rid]{d}$, then either $\orderwrite[\tid,\rid]$ or $\orderread[\tid,\rid]$ or 
$\finishwrite[\tid,\rid]$ or $\finishread[\tid,\rid]{d'}$ for some $d'\in\Data$ is enabled by $s_{\idx{r}}$.
\end{lemma}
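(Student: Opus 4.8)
The plan is to complement \autoref{lem:1}: from the postulated shape of the thread LTSs in \autoref{sec:thread-register} I read off which operation thread $\tid$ currently has outstanding on register $\rid$, and then consult the register process equations (Figures~\ref{fig:procsafe}--\ref{fig:procatomic}), via \autoref{lem:access}, to exhibit an enabled register action. As throughout this appendix I take $s$ to be reachable; this is what couples $s_{\idx{\tid}}$ with $s_{\idx{\rid}}$. I treat the case that $s_{\idx{\tid}}$ enables $\finishwrite[\tid,\rid]$; the read case is analogous. By the second postulated property of thread LTSs, a transition labelled $\finishwrite[\tid,\rid]$ is enabled in $T_{\tid}$ only in a state reached immediately after a $\startwrite[\tid,\rid]{d}$ transition. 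Hence the most recent register-interface action of $\tid$ on $\rid$ was some $\startwrite[\tid,\rid]{d}$, with no $\finishwrite[\tid,\rid]$ since (the thread would otherwise have left that state). As these actions occur only in synchronisation between $T_{\tid}$ and $R_{\rid}$, the register too last synchronised with $\tid$ on that $\startwrite[\tid,\rid]{d}$, so by \autoref{lem:access} it placed $\tid$ into $\writers$.

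Next I would argue that this membership persists: by \autoref{lem:access}, $\tid$ leaves $\writers$ only via $\finishwrite[\tid,\rid]$, while every other action of any thread and every register-local action leaves $\tid \in \writers$ intact. Since no $\finishwrite[\tid,\rid]$ has occurred since the start write, $\tid \in \writers{s_{\idx{\rid}}}$. For a safe register (\autoref{fig:procsafe}) the two finish-write summands jointly require only $\tid \in \writers{s_{\idx{\rid}}}$, so $s_{\idx{\rid}}$ enables $\finishwrite[\tid,\rid]$. For a regular or atomic register (Figures~\ref{fig:procregular},~\ref{fig:procatomic}) the register additionally fires an autonomous $\orderwrite[\tid,\rid]$ between invocation and response, so I split on whether it has fired, which \autoref{lem:access} tracks through $\pendingsym$: if $\tid \in \pending{s_{\idx{\rid}}}$ the order-write summand is enabled, giving $\orderwrite[\tid,\rid]$; if $\tid \notin \pending{s_{\idx{\rid}}}$ the finish-write summand is enabled, giving $\finishwrite[\tid,\rid]$.

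The read case runs symmetrically. If $s_{\idx{\tid}}$ enables $\finishread[\tid,\rid]{d}$, the same reasoning forces the last synchronisation to be a $\startread[\tid,\rid]$ and hence $\tid \in \readers{s_{\idx{\rid}}}$. For a safe register a finish-read summand is then enabled for some value. For a regular register $\posval{s_{\idx{\rid}},\tid}$ is non-empty---at the start read it was set to a set containing the then-stored value, and thereafter it only grows until $\tid$'s next start read, which cannot happen before the pending read finishes---so $\finishread[\tid,\rid]{d'}$ is enabled for some $d'$. For an atomic register the split on $\pendingsym$ yields $\orderread[\tid,\rid]$ when $\tid \in \pending{s_{\idx{\rid}}}$ and $\finishread[\tid,\rid]{\vals{s_{\idx{\rid}},\tid}}$ otherwise.

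The step I expect to be the crux is isolating the register-local order actions: because $\orderwrite[\tid,\rid]$ and $\orderread[\tid,\rid]$ need no thread synchronisation, the thread's state certifies only that the operation is invoked, not whether it has been ordered, so the conclusion genuinely requires the $\pendingsym$ case split rather than a single uniform witness. The other point needing care is that $\tid$ must remain in $\writers$ (resp. $\readers$) across the possibly many intervening actions of other threads; this is exactly what the monotonicity clauses of \autoref{lem:access} deliver, and it is here that reachability of $s$ is doing its work.
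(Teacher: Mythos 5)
Your proof is correct and takes essentially the same route as the paper's: both use the postulated shape of the thread LTSs to infer that the last synchronisation between $T_{\tid}$ and $R_{\rid}$ was the corresponding start action, conclude that $\tid$ is (and remains) in $\writerssym$ (resp.\ $\readerssym$) of $s_{\idx{\rid}}$, and then read off an enabled order or finish action from the register processes of Figures~\ref{fig:procsafe}--\ref{fig:procatomic}, splitting on $\pendingsym$ for the regular and atomic models. Your explicit argument that $\posval{s_{\idx{\rid}},\tid}\neq\emptyset$ in the regular case is a detail the paper's proof of this lemma leaves implicit (it is only stated later, as an invariant in \autoref{lem:register enablings}), but this is an elaboration of the same argument rather than a different approach.
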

\begin{proof}
In \autoref{sec:thread-register} we postulated that in $T_t$ transitions labelled $\finishread[\tid,\rid]{\data}$ are only enabled right after performing a transition labelled $\startread[\tid,\rid]$. As in $M$ the  $\startread[\tid,\rid]$-transition must have been a synchronisation between thread $t$ and register $r$, and the register cannot execute an action $\finishread[\tid,\rid]{\data}$ without synchronising with $t$,
in each of our register models we have $\tid \in \readers{s}$. Consequently,  by Figures \ref{fig:procsafe}--\ref{fig:procatomic}, either $\orderread[\tid,\rid]$ or $\finishread[\tid,\rid]{d'}$ for some $d'\in\Data$ is enabled by $s_{\idx{r}}$.

The proof for write actions proceeds along the same lines.
\end{proof}

\begin{definition}\rm
Given any path $\pi$ starting in the initial state of $M$, and given any thread $t\in \TID$, if $\pi$ has a suffix $\pi'$ on which no actions $b$ with $\thrmap{b} = t$ occur, starting from a state $s$ of $\pi$, then $s'_{\idx{t}} = s_{\idx{t}}$ for all states $s'$ in $\pi'$ and we define $\finish_t (\pi) = s_{\idx{t}}$.

Call action $a\in Act$ \emph{thread-enabled} by $\pi$ if, for $t= \thrmap{a}$,
$\pi$ contains only finitely many actions $b$ with $\thrmap{b} = t$ and 
$a$ is enabled in the state $\finish_t(\pi)$ of the LTS $T_t$.
\end{definition}

\noindent
Recall from \autoref{def:concT} that $a \nconc_T b$ iff $\thrmap{a}=\thrmap{b}$. Hence a path $\pi$ is \hyperlink{just}{$\block$-$\conc_T$-unjust} if, and only if, it has a suffix $\pi'$ such that an action $a \in\nonblock$ is enabled in the initial state of $\pi'$, but $\pi$ does not contain any action $b$ with $\thrmap{a}=\thrmap{b}$. Using this, we obtain the following characterisation of the $\block$-$\conc_T$-just paths of $M$.

\begin{proposition}\rm\label{pr:T-just paths}
  A path $\pi$ starting in the initial state of $M$, is \hyperlink{just}{$\block$-$\conc_T$-just} if, and only if, $\pi$ thread-enables no actions $a \in \nonblock$.
\end{proposition}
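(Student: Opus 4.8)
The plan is to prove the contrapositive equivalence: $\pi$ is $\block$-$\conc_T$-\emph{unjust} if, and only if, $\pi$ thread-enables some action $a \in \nonblock$. I will use the reformulation of $\conc_T$-unjustness recorded just above the proposition, namely that $\pi$ is unjust precisely when it has a suffix $\pi'$ with some $a \in \nonblock$ enabled in $M$ in the initial state of $\pi'$ and no action $b$ with $\thrmap{b} = \thrmap{a}$ occurring on $\pi'$. The real content is to bridge the mismatch between being enabled in the parallel composition $M$ and being enabled in the thread LTS $T_t$ alone: a register-interface action is enabled in $M$ only if \emph{both} the thread component $s_{\idx{\tid}}$ and the register component $s_{\idx{\rid}}$ enable it, whereas thread-enabledness speaks only about $T_t$. \autoref{lem:1} and \autoref{lem:2} are exactly the tools for crossing this gap, and the thread-LTS postulates of \autoref{sec:thread-register} (a state reached by a start action enables only the matching finish action, and finish actions are enabled only in such states) control which actions a thread can offer.

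For the forward direction, assume $\pi$ is unjust, witnessed by a suffix $\pi'$ and $a \in \nonblock$ with $t := \thrmap{a}$. Since no $t$-action occurs on $\pi'$, thread $t$ is active only in the finite prefix before $\pi'$, so $\pi$ contains finitely many $t$-actions and $s_{\idx{\tid}} = \finish_t(\pi)$ throughout $\pi'$. If $a$ is a thread-local action (necessarily $a \neq \noncrit[\tid]$, as $a \in \nonblock$) or a register-interface action, then $a$ occurs in the component $T_t$, so enabledness in $M$ forces enabledness in $T_t$ at $\finish_t(\pi)$, i.e.\ $\pi$ thread-enables $a$. If instead $a = \orderwrite[\tid,\rid]$ or $\orderread[\tid,\rid]$, then, as order actions occur only in $R_r$, the component $s_{\idx{\rid}}$ enables $a$; by \autoref{lem:1} the component $s_{\idx{\tid}}$ then enables $\finishwrite[\tid,\rid]$, respectively $\finishread[\tid,\rid]{d}$, which is a $\nonblock$ action of the same thread $t$ enabled in $T_t$ at $\finish_t(\pi)$, so $\pi$ thread-enables it.

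For the converse, assume $\pi$ thread-enables some $a \in \nonblock$, with $t := \thrmap{a}$; thus $\pi$ has finitely many $t$-actions and $a$ is enabled in $T_t$ at $\finish_t(\pi)$. Let $\pi'$ be the suffix of $\pi$ after the last $t$-action (or $\pi' = \pi$ if there is none), with initial state $s$; then $\pi'$ is $t$-silent and $s_{\idx{\tid}} = \finish_t(\pi)$. It suffices to exhibit \emph{some} $\nonblock$ action $a^{*}$ of thread $t$ enabled in $M$ at $s$, since $t$-silence of $\pi'$ then yields unjustness. When $a$ is thread-local it occurs only in $T_t$, so $a^{*} = a$ is enabled in $M$. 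When $a = \finishread[\tid,\rid]{d}$ or $\finishwrite[\tid,\rid]$, \autoref{lem:2} gives that $s_{\idx{\rid}}$ enables one of $\orderread[\tid,\rid]$, $\orderwrite[\tid,\rid]$, $\finishread[\tid,\rid]{d'}$, $\finishwrite[\tid,\rid]$; an order action is already enabled in $M$ (register-only), while a finish action is enabled in $M$ by the ``moreover'' part of \autoref{lem:1}. In each case $a^{*}$ is found.

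The main obstacle is the remaining case, $a = \startread[\tid,\rid]$ or $\startwrite[\tid,\rid]{d}$, where I must produce register-side enabledness. By the thread-LTS postulates, $s_{\idx{\tid}}$ enables no finish action. Inspecting the register components (Figures \ref{fig:procsafe}--\ref{fig:procatomic}): if $\tid \notin \readers{s_{\idx{\rid}}} \cup \writers{s_{\idx{\rid}}}$, then $s_{\idx{\rid}}$ enables the start action and $a^{*} = a$ is enabled in $M$; otherwise $\tid \in \readers{s_{\idx{\rid}}}$ or $\tid \in \writers{s_{\idx{\rid}}}$, and the register enables an order action of $t$ (when $\tid \in \pending{s_{\idx{\rid}}}$) or a finish action of $t$ (when $\tid \notin \pending{s_{\idx{\rid}}}$). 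The finish-action alternative is impossible, since by \autoref{lem:1} it would force $s_{\idx{\tid}}$ to enable that finish action, contradicting that $s_{\idx{\tid}}$ enables a start action. Hence the register enables $\orderread[\tid,\rid]$ or $\orderwrite[\tid,\rid]$, which is enabled in $M$ and serves as $a^{*}$. This case analysis, together with the two preceding directions, completes the proof; the delicate point throughout is that one may have to switch the witnessing action (from an order to a finish action, or vice versa) when passing between $M$ and $T_t$, which is precisely what Lemmas \ref{lem:1} and \ref{lem:2} license.
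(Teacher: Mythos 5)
Your proof is correct and takes essentially the same route as the paper's: both directions rest on \autoref{lem:1} (including its ``moreover'' clause), \autoref{lem:2}, the thread-LTS postulates of \autoref{sec:thread-register}, and the observation that a register state enables, per thread, either start actions, an order action, or a finish action, with order actions traded for the corresponding finish actions when passing between enabledness in $M$ and in $T_t$. The only difference is organisational: in the direction from thread-enabledness to unjustness you case-split on the form of $a$ and rule out the register-enables-finish alternative by a small contradiction, whereas the paper first case-splits on what the register component enables and subsumes the finish-action case for $a$ via \autoref{lem:2} --- the logical content is identical.
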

\begin{proof}
Suppose $\pi$ thread-enables an action $a \in\nonblock$. We have to show that $\pi$ is \hyperlink{just}{not just}. Let $t= \thrmap{a}$ and $r= \regmap{a}$.
Let $\pi'$ be a suffix of $\pi$ on which no actions $b$ with $\thrmap{b}=t$ occur. Let $s$ be the initial state of $\pi'$. So $s_{\idx{t}} = \finish_t(\pi)$ and $s_{\idx{t}}$ enables $a$.

In case $r \mathbin=\undefsymb$, as $s_{\idx{t}}$ enables $a$ and $a$ does not require synchronisation with any register, also $s$ enables $a$. As $\pi'$ does not contain actions $b$ with $a \nconc_T b$, the path $\pi$ is \hyperlink{just}{not just}.\footnote{When rereading this proof as part of the proof of \autoref{pr:A-just paths} we also use that $\forall r\in\RID.~ a \notin \textit{start}(r)$.}

So assume that $r \in \RID$.
Based on Figures \ref{fig:procsafe}--\ref{fig:procatomic},
any state $s'_r$ of $r$ enables either
(i) both $\startread[\tid,\rid]$ and $\startwrite[\tid,\rid]{d}$ for all $d \in \Data$,
(ii) $\orderread[\tid,\rid]$ or $\orderwrite[\tid,\rid]$, or
(iii) $\finishwrite[\tid,\rid]$ or $\finishread[\tid,\rid]{d}$ for some $d \in \Data$. 

If state $s_{\idx{r}}$ in $R_r$ enables $c=\orderread[\tid,\rid]$ or $c=\orderwrite[\tid,\rid]$, also $s$ enables $c$, since $c$ does not require synchronisation with thread $t$. As $\pi'$ contains no actions $b$ with $c \nconc_T b$, using that $\thrmap{c}=t$, the path $\pi$ is \hyperlink{just}{not just}.\footnote{When rereading this proof as part of the proof of \autoref{pr:A-just paths} we also use that $c \notin \textit{start}(r)$.}

In case state $s_{\idx{r}}$ in $R_r$ enables an action $c=\finishread[\tid,\rid]{d}$ or $c=\finishwrite[\tid,\rid]$, by \autoref{lem:1} above also state $s$ enables $c$. As $\pi'$ contains no actions $b$ with $c \nconc_T b$, the path $\pi$ is \hyperlink{just}{not just}.$^{\rm \thefootnote}$

We may now restrict attention to case (i) above, that $s_{\idx{r}}$ enables both $\startread[\tid,\rid]$ and $\startwrite[\tid,\rid]{d}$ for all $d \in \Data$. 
In this setting we proceed with a case distinction on the action $a$, which must be of the form $\startread[\tid,\rid]$, $\startwrite[\tid,\rid]{d}$, $\finishread[\tid,\rid]{d}$ or $\finishwrite[\tid,\rid]$, since it is an action in LTS $T_{\tid}$ and $r \neq \undefsymb$. By \autoref{lem:2}, the case that $a = \finishwrite[\tid,\rid]$ or $a= \finishread[\tid,\rid]{d}$ for some $d \in \Data$ is already subsumed by the cases considered above. Thus we may restrict attention to the case that $a = \startread[\tid,\rid]$ or $a = \startwrite[\tid,\rid]{d}$ for some $d \in \Data$. In this case also  $s$ enables $a$. As $\pi'$ does not contain actions $b$ with $a \nconc_T b$, the path $\pi$ is \hyperlink{just}{not just}.\footnote{When reusing this in the proof of \autoref{pr:A-just paths}, we also use that $\pi'$ contains no actions $b \in \textit{start}(r)$.}

For the other direction, suppose that $\pi$ is $\block$-$\conc_T$-just. We have to establish that $\pi$ thread-enables an action $a \in \nonblock$. Let $\pi'$ be a suffix of $\pi$ such that an action $a \in \nonblock$ is enabled in the initial state $s$ of $\pi'$, but $\pi'$ contains no action $b$ such that $a \nconc_T b$. Let $t=\thrmap{a}$. Then $\pi'$ contains no action $b$ with $\thrmap{b}=t$. In case $a=\orderwrite[\tid,\rid]$ or $a=\orderread[\tid,\rid]$, then by \autoref{lem:1} above $\finish_t(\pi)=s_{\idx{t}}$ enables an action $c = \finishwrite[\tid,\rid]$ or $c = \finishread[\tid,\rid]{d}$, and thus $\pi$ thread-enables the action $c \in \nonblock$. Otherwise, $\finish_t(\pi)=s_{\idx{t}}$ enables $a$ and thus $\pi$ thread-enables $a \in \nonblock$.\pagebreak[3]
\end{proof}
Interestingly, this explicit characterisation of the \hyperlink{just}{$\block$-$\conc_T$-just} paths in our thread-register models is independent on whether we employ safe, regular or atomic registers.

Next, we provide similar characterisations of the $\block$-$\conc_C$-\hyperlink{just}{just} paths, for $C \in \{A, I, S\}$.
For a given register $r\in \RID$ we use the abbreviation $\textit{start}(r)$ for the set of all actions $\startread[\tid,\rid]$ and $\startwrite[\tid,\rid]{d}$ for some $t\in\TID$ and $d\in\Data$.
Recall from Definitions~\ref{def:concT}--\ref{def:concA} that $a \nconc_{\!\!A} b$ iff either $\thrmap{a}=\thrmap{b}$ or $a,b\in\textit{start}(r)$ for some $r\in\RID$. Hence a path $\pi$ is $\block$-$\conc_{\!\!A}$-unjust if, and only if, it has a suffix $\pi'$ such that an action $a \in\nonblock$ is enabled in the initial state of $\pi'$, but $\pi$ does not contain any action $b$ such that $\thrmap{a}=\thrmap{b}$ or $a,b\in\textit{start}(r)$ for some $r\in\RID$.

\begin{proposition}\rm\label{pr:A-just paths}
  A path $\pi$ starting in the initial state of $M$, is $\block$-$\conc_{\!\!A}$-just if, and only if,
  \begin{enumerate}[a)]
  \item $\pi$ thread-enables no actions $a \in \nonblock$ other than 
actions from $\textit{start}(r)$ for some $r\in\RID$, and 
  \item if, for some $r\in\RID$, an action $a \in \textit{start}(r)$ is thread-enabled by $\pi$, then $\pi$ contains infinitely many occurrences of actions $b\in\textit{start}(r)$.
  \end{enumerate}
\end{proposition}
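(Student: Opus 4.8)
The plan is to prove the biconditional by establishing both implications through their contrapositives, phrased in terms of $\block$-$\conc_{\!\!A}$-unjustness, for which the explicit criterion stated just before the proposition is available: $\pi$ is unjust iff it has a suffix $\pi'$ whose initial state $s$ enables some $a \in \nonblock$ while $\pi'$ contains no $b$ with $\thrmap{a}=\thrmap{b}$ and, when $a\in\textit{start}(r)$, no $b\in\textit{start}(r)$ either. The organising observation is that $\nconc_{\!\!A}$ extends $\nconc_T$ only by the clause ``$a,b\in\textit{start}(r)$'', so the argument departs from that of \autoref{pr:T-just paths} solely at start actions. Throughout I would reuse \autoref{pr:T-just paths} together with the facts its footnotes flag, namely that each witness it produces is either not a start action, or a start action for which the chosen suffix contains no competing start.

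For the direction $\pi$ just $\Rightarrow$ (a) and (b) I prove the contrapositive, splitting $\neg[\text{(a) and (b)}]$ into Case~A, where (a) fails, and Case~B, where (a) holds but (b) fails. In Case~A, $\pi$ thread-enables some $a\in\nonblock$ that is not a start action; since order actions do not occur in $T_t$ they can never be thread-enabled, so $a$ is thread-local or a finish action. Feeding this $a$ into the proof of \autoref{pr:T-just paths} yields a witness $c$ which, because $a$ is not a start action, is thread-local, an order, or a finish action, hence $c\notin\textit{start}(r)$; then $c\nconc_{\!\!A}b$ reduces to $\thrmap{c}=\thrmap{b}$ and the chosen suffix contains no such $b$, so $\pi$ is $\conc_{\!\!A}$-unjust. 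In Case~B, some $a\in\textit{start}(r)$ is thread-enabled while $\pi$ contains only finitely many actions from $\textit{start}(r)$; I choose a suffix $\pi'$ omitting both $\thrmap{a}$-actions and all of $\textit{start}(r)$. Using (a), the register state $s_{\idx{r}}$ at the start of $\pi'$ must enable the start actions of $t$ --- otherwise it would enable an order or finish of $t$, which by \autoref{lem:1} exposes a thread-enabled non-start action and contradicts (a) --- so $s$ enables $a$ with no $\conc_{\!\!A}$-interfering action on $\pi'$. This is exactly the case-(i) reuse its final footnote prepares for, and $\pi$ is unjust.

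For the converse, (a) and (b) $\Rightarrow$ $\pi$ just, I prove the contrapositive: if $\pi$ is $\conc_{\!\!A}$-unjust then (a) or (b) fails. By the unjustness criterion there is a suffix $\pi'$ with initial state $s$ enabling some $a\in\nonblock$ such that $\pi'$ contains no $b$ with $a\nconc_{\!\!A}b$; writing $t=\thrmap{a}$, the suffix has no $t$-actions, so $\finish_t(\pi)=s_{\idx{t}}$. I case on the shape of $a$. If $a$ is an order action, then $s_{\idx{r}}$ enables it and \autoref{lem:1} yields a finish action of $t$ enabled in $s_{\idx{t}}$; this finish is thread-enabled and not a start, so (a) fails. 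If $a$ is a finish or thread-local action, it is itself enabled in $s_{\idx{t}}$, hence thread-enabled and not a start, so again (a) fails. Finally, if $a\in\textit{start}(r)$, then interference-freeness of $\pi'$ forces it to contain no member of $\textit{start}(r)$, so $\pi$ carries only finitely many such actions while $a$ is thread-enabled, and (b) fails.

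The step I expect to be most delicate is keeping straight when an action is enabled in the full composition $M$ versus only in a component, and transferring enabledness correctly across the thread-register synchronisation. Order actions are the sharp case: they are register-local and absent from $T_t$, so they are never thread-enabled and must first be converted into an enabled thread finish via \autoref{lem:1}. The other subtle point is ensuring, in the failure-of-(b) analysis, that the register genuinely offers the start action at the beginning of the chosen suffix; rather than arguing low-level consistency between the thread being idle and the register's $\readerssym$/$\writerssym$ sets, I would lean on hypothesis (a) to rule out the register configurations enabling an order or finish of $t$, which forces the start-enabling configuration and lets the case-(i) argument of \autoref{pr:T-just paths} carry through.
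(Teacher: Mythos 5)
Your proposal is correct and takes essentially the same approach as the paper: both directions work from the explicit unjustness criterion, reuse the register-mode case analysis and witnesses of \autoref{pr:T-just paths} (via \autoref{lem:1} and \autoref{lem:2}), and extract condition b) from the absence of $\textit{start}(r)$-actions in the chosen suffix. The only difference is organisational: in the forward direction you split into the cases $\neg$a) and a)$\,\wedge\,\neg$b), using hypothesis a) in the second case to force the register into its start-enabling mode, whereas the paper treats the conditional hypothesis uniformly and, in the other two register modes, directly constructs an order/finish witness rather than deriving a contradiction --- but both arguments rest on exactly the same facts.
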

\begin{proof}
Suppose $\pi$ thread-enables an action $a \in \nonblock$, say with $t= \thrmap{a}$ and $r= \regmap{a}$, such that if $a \in \textit{start}(r)$ then $\pi$ contains only finitely many actions $b\in\textit{start}(r)$. We have to show that $\pi$ is \hyperlink{just}{not $\block$-$\conc_{\!\!A}$-just}. Let $\pi'$ be a suffix of $\pi$ in which no actions $b$ with $\thrmap{b}=t$ occur; in case $a \in \textit{start}(r)$ we moreover choose $\pi'$ such that it contains no actions $b\in\textit{start}(r)$. From here on, the proof proceeds exactly as the one of \autoref{pr:T-just paths}, but reading $\nconc_{\!\!A}$ for $\nconc_T$.

For the other direction, suppose that $\pi$ is \hyperlink{just}{not $\block$-$\conc_{\!\!A}$-}just. We have to establish that\linebreak[3] 
(a) $\pi$ thread-enables an action $a \in \nonblock$, and (b) in case $a \in \textit{start}(r)$ then $\pi$ contains only finitely many actions $b\in\textit{start}(r)$. Let $\pi'$ be a suffix of $\pi$ such that an action $a \in \nonblock$ is enabled in the initial state $s$ of $\pi'$, but $\pi'$ contains no action $b$ such that $a \nconc_{\!\!A} b$, that is, $\pi'$ contains no action $b$ with $\thrmap{a}=\thrmap{b}$ or $a,b\in\textit{start}(r)$ for some $r\in\RID$.
The proof of (a) above proceeds as in the proof of \autoref{pr:T-just paths}, and (b) is now a trivial corollary.
\end{proof}

\noindent
Recall from Definitions~\ref{def:concT}--\ref{def:concI} that $a \nconc_{\!I} b$ iff either $\thrmap{a}=\thrmap{b}$ or $a,b\in\textit{start}(r)$ for some $r\in\RID$, with $\issw{a} \vee \issw{b}$. Hence a path $\pi$ is $\block$-$\conc_{\!I}$-unjust if, and only if, it has a suffix $\pi'$ such that an action $a \in\nonblock$ is enabled in the initial state of $\pi'$, but $\pi$ does not contain any action $b$ such that $\thrmap{a}=\thrmap{b}$ or $a,b\in\textit{start}(r)$ for some $r\in\RID$, with $\issw{a} \vee \issw{b}$.

Similarly, $a \nconc_{S} b$ iff either $\thrmap{a}=\thrmap{b}$ or $a,b\in\textit{start}(r)$ for some $r\in\RID$, with $\issw{b}$. Hence a path $\pi$ is $\block$-$\conc_{S}$-unjust if, and only if, it has a suffix $\pi'$ such that an action $a \in\nonblock$ is enabled in the initial state of $\pi'$, but $\pi$ does not contain any action $b$ such that $\thrmap{a}=\thrmap{b}$ or $a,b\in\textit{start}(r)$ for some $r\in\RID$, with $\issw{b}$.

\begin{proposition}\rm\label{pr:I-just paths}
  A path $\pi$ starting in the initial state of $M$, is $\block$-$\conc_{\!I}$-just if, and only if,
  \begin{enumerate}[a)]
  \item $\pi$ thread-enables no actions $a \in \nonblock$ other than actions from $\textit{start}(r)$ for some $r\in\RID$,
  \item if an action $\startwrite[\tid,\rid]{d}$ is thread-enabled by $\pi$, then $\pi$ contains infinitely many occurrences of actions $b\in\textit{start}(r)$, and
  \item if an action $\startread[\tid,\rid]$ is thread-enabled by $\pi$, then $\pi$ contains infinitely many occurrences of actions $b$ of the form $\startwrite[\tid',\rid]{d}$ for some $\tid'\in\TID$ and $d'\in\Data$.
  \end{enumerate}
\end{proposition}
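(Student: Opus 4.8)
The plan is to follow the template of \autoref{pr:T-just paths} and \autoref{pr:A-just paths}, proving both implications by contraposition and reading everything off the explicit shape of $\nconc_I$ recalled just before the statement: $a \nconc_I b$ holds exactly when $\thrmap{a}=\thrmap{b}$, or $a,b\in\textit{start}(r)$ for a common $r\in\RID$ with $\issw{a}\vee\issw{b}$. Almost the entire argument is a transcription of the proof of \autoref{pr:T-just paths} with $\nconc_I$ substituted for $\nconc_T$; the only new work is bookkeeping for the extra interference that $\conc_I$ creates between start actions on a shared register, and this is exactly what splits the single start-action clause of \autoref{pr:A-just paths} into the two clauses (b) and (c). The asymmetry to exploit is that a start write interferes with every action of $\textit{start}(r)$, whereas a start read interferes only with the start writes on $r$.

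For the direction in which (a)$\wedge$(b)$\wedge$(c) fails, I would begin from a \emph{thread-enabled} action $a\in\nonblock$ witnessing that failure, so that if $a=\startwrite[\tid,\rid]{d}$ then $\pi$ carries only finitely many actions from $\textit{start}(r)$, and if $a=\startread[\tid,\rid]$ then $\pi$ carries only finitely many start writes on $r$. Writing $t=\thrmap{a}$, I choose a suffix $\pi'$ containing no $b$ with $\thrmap{b}=t$, and in the two start cases I shrink it further so that it also omits the finitely many offending start actions just named. Within $\pi'$ the reasoning is the case analysis of \autoref{pr:T-just paths} on the register state $s_{\idx{r}}$. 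When the witness is a non-start action of thread $t$—a thread-local action, an order action performed by the register alone, or a finish action that $s$ enables via \autoref{lem:1}—it lies outside $\textit{start}(r)$, so $\nconc_I$ collapses to ``same thread'' and no interfering $b$ occurs on the thread-$t$-free $\pi'$. When instead $a$ is itself a start action, the refined choice of $\pi'$ is what removes every $b$ with $a\nconc_I b$: all of $\textit{start}(r)$ when $a$ is a write, and only the start writes on $r$ when $a$ is a read.

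For the converse I would take the unjustness witness directly: a suffix $\pi'$ with initial state $s$ that enables some $a\in\nonblock$ in $M$ and contains no $b$ with $a\nconc_I b$. As in \autoref{pr:T-just paths}, if $a$ is an order action I pass through \autoref{lem:1} to a finish action $c$ enabled in $s_{\idx{t}}=\finish_t(\pi)$; otherwise $a$ already lies in $T_t$. Either way $\pi$ \emph{thread-enables} a concrete action, and the absence of interfering $b$ on $\pi'$ yields the conclusion: the $\thrmap{a}=\thrmap{b}$ disjunct makes $\pi'$ free of thread-$t$ actions, so if this thread-enabled action is not a start action then (a) is violated; and if it is a start action (which forces $a$ itself to be one), the second disjunct makes $\pi'$ free of all of $\textit{start}(r)$ when it is a write, giving the failure of (b), or free of the start writes on $r$ when it is a read, giving the failure of (c).

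The step I expect to demand the most care is checking, in the forward direction's start-action case, that $a$ stays enabled in $M$ at the \emph{later} initial state produced by the shrunken suffix. Here I would record that a thread-enabled start action forces $\tid\notin\readers{s_{\idx{r}}}\cup\writers{s_{\idx{r}}}$—so the register sits in case~(i) with respect to $t$—and that this membership is preserved along any thread-$t$-free stretch, since only $\startread[\tid,\rid]$ and $\startwrite[\tid,\rid]{d}$ can reinsert $t$ into $\readerssym$ or $\writerssym$. Consequently $a$ remains enabled throughout $\pi'$, and dropping the finitely many start actions does no harm. Everything else is a direct reuse of \autoref{pr:T-just paths}, together with the two facts already flagged in its footnotes: that the order/finish witness $c$ satisfies $c\notin\textit{start}(r)$, and that a $\textit{start}(r)$-free tail is available for a start witness.
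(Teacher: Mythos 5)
Your proposal is correct and takes essentially the same route as the paper, whose proof of this proposition is simply the instruction to proceed as for \autoref{pr:A-just paths} (itself a rereading of \autoref{pr:T-just paths} with a refined choice of suffix); your split of that refinement --- omitting all of $\textit{start}(r)$ when the thread-enabled witness is a start write but only the start writes on $r$ when it is a start read, with non-start witnesses handled by the same-thread collapse of $\nconc_{\!I}$ --- is exactly the intended instantiation, and your converse mirrors the paper's. Your preservation argument for keeping a start action enabled at the shrunken suffix's initial state is a harmless local variant of what the paper obtains directly from the case analysis on $s_{\idx{r}}$ at that state via Lemmas~\ref{lem:1} and~\ref{lem:2}.
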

The proof of this proposition, and the next one, proceeds just like the one of \autoref{pr:A-just paths}.

\pagebreak[2]
\begin{proposition}\rm\label{pr:S-just paths}
  A path $\pi$ starting in the initial state of $M$, is $\block$-$\conc_{S}$-just if, and only if,
  \begin{enumerate}[a)]
  \item $\pi$ thread-enables no actions $a \in \nonblock$ other than actions from $\textit{start}(r)$ for some $r\in\RID$,
  \item if an action $a \in \textit{start}(r)$ is thread-enabled by $\pi$, then $\pi$ contains infinitely many occurrences of actions $b$ of the form $\startwrite[\tid',\rid]{d}$ for some $\tid'\in\TID$ and $d'\in\Data$.
  \end{enumerate}
\end{proposition}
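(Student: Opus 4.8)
The plan is to re-run the proof of \autoref{pr:A-just paths} almost verbatim, changing only the set of actions recorded as interfering with a thread-enabled start action. The sole difference between $\conc_{\!A}$ and $\conc_S$ is the extra side condition $\issw{b}$ in the second disjunct of $\nconc_S$: for $a\in\textit{start}(r)$ and $\thrmap{a}\neq\thrmap{b}$ we have $a \nconc_S b$ exactly when $b$ is a \emph{start write} on $r$, whereas for $\conc_{\!A}$ any start action on $r$ qualifies. This is precisely why clause (b) now demands infinitely many occurrences of $\startwrite[\tidtwo,\rid]{d}$ rather than of arbitrary actions in $\textit{start}(r)$. The second ingredient that makes the transfer painless is that if $a$ is \emph{not} a start action---so $a$ is an order, finish, or thread-local action---the second disjunct of $\nconc_S$ cannot fire, and hence $a \nconc_S b$ reduces to $\thrmap{a}=\thrmap{b}$, exactly as for $\nconc_T$.

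For the forward implication I would argue the contrapositive, splitting on which clause fails. If clause (a) fails, some non-start action $a \in \nonblock$ is thread-enabled; since $\nconc_S$ agrees with $\nconc_T$ on non-start actions, the whole case analysis of \autoref{pr:T-just paths}---its register-state split and its appeals to \autoref{lem:1} and \autoref{lem:2}---applies word for word with $\nconc_S$ in place of $\nconc_T$, producing a suffix free of $\thrmap{a}$-actions on which the exhibited enabled action has no interferer. If instead clause (a) holds while (b) fails, there is a thread-enabled start action $a\in\textit{start}(r)$, say with $t=\thrmap{a}$, together with only finitely many start writes on $r$ in $\pi$. Because (a) holds, $s_{\idx{t}}=\finish_t(\pi)$ enables no finish action (such an action would be a thread-enabled non-start member of $\nonblock$, contradicting (a)); hence, by \autoref{lem:1}, the state $s_{\idx{r}}$ enables no order or finish action for $t$ on $r$, so it lies in the start-enabling case and enables $a$, giving that $s$ enables $a$. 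Choosing a suffix $\pi'$ that contains no action of $t$ and no start write on $r$---possible since both occur only finitely often in $\pi$---yields a suffix with $a$ enabled in its first state and no $b$ satisfying $a\nconc_S b$, so $\pi$ is not $\block$-$\conc_S$-just.

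For the backward implication I would start from a suffix $\pi'$ witnessing injustness: an $a\in\nonblock$ enabled in the first state $s$ of $\pi'$ with no $b$ on $\pi'$ satisfying $a\nconc_S b$; write $t=\thrmap{a}$. The absence of interferers yields that $\pi'$ carries no action of $t$---so $s_{\idx{t}}=\finish_t(\pi)$---and, when $a\in\textit{start}(r)$, no start write on $r$. Splitting on the kind of $a$ as in \autoref{pr:T-just paths}: if $a$ is thread-local or a finish action then $s_{\idx{t}}$ enables $a$, so $a$ is a thread-enabled non-start action and clause (a) fails; if $a$ is an order action then \autoref{lem:1} shows $s_{\idx{t}}$ enables the matching finish action, again a thread-enabled non-start action violating (a); and if $a$ is a start action then $a$ is thread-enabled while $\pi$ contains only finitely many start writes on $r$, so clause (b) fails.

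The step I expect to demand the most care is keeping the read/write asymmetry straight: I must check that start reads occurring on $\pi'$ genuinely fail to interfere with the enabled start action---so that, unlike in the $\conc_{\!A}$ case, they need not be excluded from $\pi'$---while start writes must still be excluded, and that this exclusion is exactly the content of clause (b). The only genuinely non-clerical point beyond substituting ``start write'' for ``start action'' throughout the proof of \autoref{pr:A-just paths} is confirming that a thread-enabled start action is truly enabled in $s$, i.e.\ that the register sits in the start-enabling case rather than an order- or finish-enabling one; this I would discharge from \autoref{lem:1} together with the postulated shape of the thread LTSs, exactly as in the $\conc_{\!A}$ argument. Everything else is a faithful re-run of the $\conc_T$ and $\conc_{\!A}$ proofs.
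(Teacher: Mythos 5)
Your proposal is correct and follows essentially the same route as the paper, whose proof of this proposition is literally a re-run of the proof of \autoref{pr:A-just paths} (and hence of \autoref{pr:T-just paths}) with $\nconc_S$ substituted for $\nconc_{\!A}$, exactly as you describe: the same contrapositive structure in both directions, the same appeals to \autoref{lem:1}, \autoref{lem:2} and the three-way register-state classification, with start writes on $r$ replacing arbitrary members of $\textit{start}(r)$ as the interferers that must be excluded from the suffix. Your only departures are presentational---splitting the forward direction explicitly on which clause fails, and using clause (a) plus the contrapositive of \autoref{lem:1} to force the register into its start-enabling case instead of re-running the full case distinction---and these are equivalent to the paper's conditional choice of suffix.
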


\noindent
An interesting consequence of these propositions is that whether or not a given path $\pi$ is
$\block$-$\conc_{C}$-just, for some $C \in \{T,S,I,A\}$, is completely determined by the set of
actions that are thread-enabled by $\pi$, and by the function $\textit{occ}_\pi:Act \rightarrow \mathbbm{N} \cup \{\infty\}$ that tells for each action how often it occurs in $\pi$.

\section{Register models that avoid overlap}\label{app:blocking registers}

To accurately capture the memory model of blocking reads and writes, we need not only the concurrency relation $\conc_{\!A}$, but also a register model that does not allow any read or write to start when another read or write to the same register is in progress. This can be achieved by changing, in \autoref{fig:procatomic}, the condition $\tid\notin(\readers{s}\cup\writers{s})$, occurring in the first two lines, into
$(\readers{s}\cup\writers{s}) = \emptyset$.

Similarly, in the blocking model with concurrent reads, where reads and writes have to await the completion of in-progress writes, but only writes have to await the completion of in-progress reads, the first two lines of \autoref{fig:procatomic} become
  \[\begin{array}{lr@{~\then~}l}
            & (\tid\mathbin{\notin}\readers{s} \wedge \writers{s}=\emptyset) &
                 \startread[\tid,\rid]\co\AReg(\rid,\usr{s,\tid}) \\
          + & ((\readers{s}\cup\writers{s}) = \emptyset) &
                 \startwrite[\tid,\rid]{\data}\co\AReg(\rid,\usw{s,\tid,\data})
  \end{array}\]

To capture the model with blocking writes and non-blocking reads, we make the same alterations as above, except that writes do not need to wait for in-progress reads. Hence, the first two lines become 
\[\begin{array}{lr@{~\then~}l}
            & (\tid\mathbin{\notin}\readers{s} \wedge \writers{s}=\emptyset) &
                 \startread[\tid,\rid]\co\AReg(\rid,\usr{s,\tid}) \\
          + & (\writers{s} = \emptyset) &
                 \startwrite[\tid,\rid]{\data}\co\AReg(\rid,\usw{s,\tid,\data})
  \end{array}\]
Note that this only models the blocking behaviour described in \autoref{sec:introduction}; we do not model that a write causes a read to be aborted and subsequently resumed.

For our verifications we have not made these modifications, but simply reused the register model of \autoref{fig:procatomic}. Below, we explain why this leads to the same verification results.

All our correctness properties for mutual exclusion protocols are linear-time properties: they hold for a process, modelled as an LTS, iff they hold for all complete paths starting in the initial state of that LTS\@. Moreover, whether they hold for a particular path depends solely on the labels of the transitions in that path, and in fact only on those transitions labelled $\crit[\tid]$ or $\noncrit[\tid]$, for some $\tid\in\TID$. This is witnessed by the modal $\mu$-calculus formulae for these properties given in \autoref{app:mucalc}.
For the (sole) purpose of deciding whether mutual exclusion, deadlock freedom or starvation freedom hold for a given LTS, once its is determined  which paths are complete, all other actions may be considered internal, or hidden.

\begin{definition}\rm
  Given a path $\pi=s_0 a_1 s_1 a_2 s_2 \dots$, let $\ell(\pi)=a_1 a_2 \dots$ be the sequence of actions occurring on that path, and let $\ell^-(\pi)$ be the result of omitting from $\ell(\pi)$ all actions other than $\crit[\tid]$ or $\noncrit[\tid]$, for some $\tid\in\TID$.

  Given a completeness criterion $C$, the set of \emph{weak $C$-complete traces} $\WCT_C(P)$ of an LTS $P$ consists of those finite and infinite strings $\ell^-(\pi)$ for $\pi$ a $C$-complete path starting in the initial state of $P$.
  Two LTSs $P$ and $Q$ are \emph{weak completed trace equivalent} w.r.t.\ $C$, notation $P =^C_\WCT Q$ if, and only if, $\WCT_C(P) = \WCT_C(Q)$.
\end{definition}

\noindent
It now follows that, given a completeness criterion $C$, two weak completed trace equivalent LTSs satisfy the same correctness properties for mutual exclusion protocols.

For a given mutual exclusion algorithm, let $M$ be the LTS of its thread-register model, using the atomic register model from \autoref{fig:procatomic}. Moreover, let $M_A$ be the variant of $M$ that employs the above register model for blocking reads and writes; let $M_I$ be the variant for the blocking model with concurrent reads, and let $M_S$ be the one for the model of blocking writes and non-blocking reads.

Now, using $A$, $I$ and $S$ as abbreviations for the completeness criteria $\justact{\conc_{\!A}}{\block}$, $\justact{\conc_I}{\block}$ and $\justact{\conc_S}{\block}$, we will show that $M_A =^A_\WCT M$, $M_I =^I_\WCT M$ and $M_S =^S_\WCT M$. This implies that it makes no difference whether, for the verifications based on the (partly) blocking memory models, we use the register models fine-tuned for that memory model as described above, or the model for atomic registers from \autoref{fig:procatomic}.

In the subsequent argument, let $C \in \{A, I, S\}$.

First, we must establish that $\conc_C$ is a valid concurrency relation for $M_C$. That the relations are irreflexive solely depends on the actions of an LTS, and hence still clearly holds within the context of $M_A, M_I$ and $M_S$.
It is less immediately obvious that the \hyperlink{second}{second property of concurrency relations} still holds, particularly because the fine-tuned models are not thread-consistent: it is for example possible for $\startread[\tid,\rid]$ to be disabled by the occurrence of $\startwrite[\tidtwo,\rid]{\data}$ in all three models, even when $\tid \neq \tidtwo$.
However, in all three cases, if an action $a$ is disabled by the occurrence of an action $b$, then $a \nconc_C b$:
\begin{itemize}
    \item In $M_A$, $\startread[\tid,\rid]$ and $\startwrite[\tid, \rid]{\data}$ can be disabled by the occurrence of $\startread[\tidtwo, \rid]$ or $\startwrite[\tidtwo, \rid]{\data'}$. Accordingly, in $\conc_{\!A}$ start read actions and start write actions both interfere with both start read and start write actions on the same register.
    \item In $M_I$, $\startread[\tid,\rid]$ can be disabled by the occurrence of $\startwrite[\tidtwo,\rid]{\data}$. Accordingly, $\conc_I$ allows start write actions to interfere with start read actions on the same register. Additionally, $\startwrite[\tid,\rid]{\data}$ can be disabled by the occurrence of $\startread[\tidtwo,\rid]$ or $\startwrite[\tidtwo,\rid]{\data'}$. Indeed, $\conc_I$ allows start reads and start writes to interfere with start writes on the same register.
    \item In $M_S$, $\startread[\tid,\rid]$ and $\startwrite[\tid, \rid]{\data}$ can both be disabled by an occurrence of $\startwrite[\tidtwo, \rid]{\data'}$. Accordingly, $\conc_S$ allows start writes to interfere with both start reads and start writes on the same register.
\end{itemize}
Thus, in all three register models an action can only be disabled by the occurrence of an action that interferes with it according to the appropriate concurrency relation. Consequently, if there is a transition $s \xrightarrow{b} s'$ such that an action $a$ is enabled in $s$ and $a \conc_C b$, then $a$ is enabled in $s'$. By induction, similarly to the proof of \autoref{lem:concT-val}, \hyperlink{second}{the second property of concurrency relations} is satisfied, and we can apply these concurrency relations to their respective models.

Next, we will show that the most relevant results of \autoref{app:complete paths} apply also to the register models introduced above.
First observe that Lemmas~\ref{lem:1} and~\ref{lem:2} apply equally well, with the same proofs, when reading $M_A$, $M_I$ or $M_S$ for $M$, since the only difference between these models is when start actions are enabled, and there lemmas refer only to order and finish actions.

\begin{lemma}\rm\label{lem:register enablings}
Let $T_r$ be the LTS of any register $r$, either defined as in one of the
Figures \ref{fig:procsafe}--\ref{fig:procatomic}, or using one of three register models defined at the beginning of this appendix.
Then any state $s_r$ of $r$ enables either
(i) $\startread[\tid,\rid]$ and $\startwrite[\tid,\rid]{d}$ for all $t\in\TID$ and all $d \in \Data$, or
(ii) $\orderread[\tid,\rid]$ or $\orderwrite[\tid,\rid]$ for some $t\in\TID$, or
(iii) $\finishwrite[\tid,\rid]$ or $\finishread[\tid,\rid]{d}$ for some $t\in\TID$ and $d \in \Data$.
\end{lemma}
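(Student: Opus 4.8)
The plan is to reduce the claim to a dichotomy on the register's own bookkeeping: I would distinguish on whether the set $\readers{s_r} \cup \writers{s_r}$ is empty. The intuition is that in each of the six models a state is either \emph{idle}, meaning no operation is in progress, in which case every thread is free to begin a fresh operation (alternative (i)); or \emph{busy}, meaning some thread is part-way through an operation, in which case that thread can always take the next step of its operation, which is an order action (alternative (ii)) or a finish action (alternative (iii)). The whole proof is then a matter of checking that the guards of the process equations realise exactly this dichotomy.

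For the idle case $\readers{s_r} \cup \writers{s_r} = \emptyset$, I would observe that for every $\tid \in \TID$ we have simultaneously $\tid \notin (\readers{s_r}\cup\writers{s_r})$ and $\writers{s_r} = \emptyset$. Hence the start-read and start-write guards of all six models are satisfied: the guard $\tid \notin (\readers{s}\cup\writers{s})$ of Figures~\ref{fig:procsafe}--\ref{fig:procatomic}, as well as the three refined start guards $\tid \notin \readers{s} \wedge \writers{s} = \emptyset$, $(\readers{s}\cup\writers{s}) = \emptyset$ and $\writers{s} = \emptyset$ of the blocking variants. Since the start-write summand is quantified over all $\data \in \Data$ in (\ref{eq:structure}) and the start-read action carries no data, this yields $\startread[\tid,\rid]$ and $\startwrite[\tid,\rid]{d}$ for every $\tid$ and every $d$, namely alternative (i).

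For the busy case I would fix a thread $\tid$ with $\tid \in \readers{s_r}\cup\writers{s_r}$ and read off the enabled continuation from the guards. If $\tid \in \writers{s_r}$, then the safe model enables $\finishwrite[\tid,\rid]$ (one of its two finish-write summands is active according to $\overlap{s_r,\tid}$), while the regular and atomic models enable $\orderwrite[\tid,\rid]$ if $\tid \in \pending{s_r}$ and $\finishwrite[\tid,\rid]$ otherwise. Symmetrically, if $\tid \in \readers{s_r}$, the safe model enables a finish read $\finishread[\tid,\rid]{\trueval{s_r}}$ or $\finishread[\tid,\rid]{d}$ (again selected by $\overlap{s_r,\tid}$), and the atomic model enables $\orderread[\tid,\rid]$ if $\tid \in \pending{s_r}$ and $\finishread[\tid,\rid]{\vals{s_r,\tid}}$ otherwise. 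Because the three blocking models differ from \autoref{fig:procatomic} only in the guards of their start summands, the order and finish summands---and hence the entire busy-case analysis---are identical across all six models.

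The one place that needs care, and the main obstacle, is the regular finish-read summand, whose guard additionally demands $\data \in \posval{s_r,\tid}$; I must therefore exclude $\posval{s_r,\tid} = \emptyset$ for a reading thread. Here I would invoke the invariant that $\posval{s,\tid} \neq \emptyset$ whenever $\tid \in \readers{s}$ in a state reachable from the register's initial state: by \autoref{lem:access}, $\posval{\cdot,\tid}$ is set to $\{\trueval{s}\} \cup \{d \mid \ldots\}$---hence nonempty---at the $\startread[\tid,\rid]$ that inserts $\tid$ into $\readerssym$, and thereafter only grows until the matching finish read removes $\tid$ again. Consequently $\finishread[\tid,\rid]{d}$ is enabled for some $d$, giving alternative (iii). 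This reachability assumption is the only subtlety (an arbitrary, unreachable status with $\tid\in\readers{s}$ but $\posval{s,\tid}=\emptyset$ would satisfy none of (i)--(iii)), and it is exactly what the applications of this lemma in \autoref{app:complete paths} provide; all remaining cases reduce to inspecting the guards of the process equations.
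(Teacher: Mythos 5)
Your proof is correct and follows essentially the same route as the paper's: a case split on whether $\readers{s_r}\cup\writers{s_r}$ is empty, with alternative (i) read off from the start guards of all six models in the idle case, alternatives (ii)/(iii) read off from the order/finish guards in the busy case, and the invariant that $\tid\in\readers{s}$ implies $\posval{s,\tid}\neq\emptyset$ invoked for the regular model's finish-read summand. Your explicit justification of that invariant via \autoref{lem:access}, and your remark that it presupposes reachability of $s_r$ (the paper merely calls it an ``easily checked invariant''), only add detail to the same argument.
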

\begin{proof}
In case  $\readers{s_r} \cup \writers{s_r} = \emptyset$, option (i) applies.
This can be seen from Figures \ref{fig:procsafe}--\ref{fig:procatomic}, and from the description of three register models defined at the beginning of this appendix.  Likewise, in case $\readers{s_r} \cup \writers{s_r} \neq \emptyset$, one of (ii) or (iii) must apply.  (In the case of regular registers, we use the easily checked invariant that if $t \in \readers{s}$ then $\posval{s,t}\neq\emptyset$.)
\end{proof}

\begin{lemma}\rm\label{lem:register enablings S}
Let $T_r$ be the LTS of any register $r$, as defined at the beginning of this appendix for the register model with blocking writes and non-blocking reads, i.e.\ $M_S$.
Then any state $s_r$ of $r$ enables either
(i) $\startwrite[\tid,\rid]{d}$ for all $t\in\TID$ and all $d \in \Data$, or
(ii) $\orderwrite[\tid,\rid]$ for some $t\in\TID$, or
(iii) $\finishwrite[\tid,\rid]$ for some $t\in\TID$.
If case (i) applies then, for each $t'\in\TID$, $s_r$ enables either
(a) $\startread[\tid',\rid]$, or
(b) $\orderread[\tid',\rid]$, or
(c) $\finishread[\tid',\rid]{d}$ for all $d \in \Data$.
\end{lemma}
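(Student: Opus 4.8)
The plan is to read the enabled actions straight off the guards of the $M_S$ process---which is \autoref{fig:procatomic} with its two start-summands' guards changed to $\tid\notin\readers{s}\wedge\writers{s}=\emptyset$ for $\startread[\tid,\rid]$ and to $\writers{s}=\emptyset$ for $\startwrite[\tid,\rid]{\data}$, the remaining (order and finish) summands and the access-function updates of \autoref{lem:access} being untouched---and then to case-split on the read/write status recorded in $s_r$.

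First I would dispatch the write alternatives (i)--(iii). If $\writers{s_r}=\emptyset$, the changed start-write summand, guarded exactly by $\writers{s}=\emptyset$, is enabled for every $\tid\in\TID$ and $\data\in\Data$: case (i). Otherwise I fix any $\tid\in\writers{s_r}$; since by \autoref{lem:access} a write passes through start--order--finish (start adds $\tid$ to both $\writers{s}$ and $\pending{s}$, order removes it from $\pending{s}$, finish from $\writers{s}$), $\tid$ is either still in $\pending{s_r}$, so $\orderwrite[\tid,\rid]$ is enabled (case ii), or no longer in $\pending{s_r}$, so $\finishwrite[\tid,\rid]$ is enabled (case iii). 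The order- and finish-write guards jointly cover all of $\writers{s_r}$, so no auxiliary invariant is required.

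Next, under case (i) (so $\writers{s_r}=\emptyset$), I fix $\tidtwo\in\TID$ and split on its read status. If $\tidtwo\notin\readers{s_r}$, the changed start-read guard $\tidtwo\notin\readers{s_r}\wedge\writers{s_r}=\emptyset$ holds, giving (a). If $\tidtwo\in\readers{s_r}$ and $\tidtwo\in\pending{s_r}$, then $\orderread[\tidtwo,\rid]$ is enabled, giving (b). The residual subcase $\tidtwo\in\readers{s_r}$, $\tidtwo\notin\pending{s_r}$ is the one that must supply (c).

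I expect clause (c) to be the main obstacle. Through the atomic finish-read summand this subcase enables only $\finishread[\tidtwo,\rid]{\vals{s_r,\tidtwo}}$, the single value frozen at order-read time, whereas the statement demands $\finishread[\tidtwo,\rid]{\data}$ for every $\data\in\Data$. Bridging this gap forces the defining feature of the blocking-writes/non-blocking-reads model into play: a write may start while a read is underway and supersede it, and such a superseded read obeys safe semantics, so it may return any value in $\Data$---exactly the overlapping finish-read summand of \autoref{fig:procsafe}, guarded by $\tidtwo\in\readers{s}\wedge\overlap{s,\tidtwo}$. The crux, and the step I expect to be most delicate, is to show that in this residual subcase $\tidtwo$'s current read has overlapped a write, i.e.\ that $\overlap{s_r,\tidtwo}$ holds: one tracks the $\overlapsym$ predicate, which is set to $\true$ by the start-write of any thread other than $\tidtwo$ and is never reset before the matching finish read. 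Since a read that ran to completion with no overlapping write would, read atomically, return only $\vals{s_r,\tidtwo}$, clause (c) for every $\data$ hinges entirely on that overlap being in force whenever a read is still in progress while $\writers{s_r}=\emptyset$.
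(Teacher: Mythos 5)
Your handling of (i)--(iii), (a) and (b) is exactly the paper's proof: both arguments read enabledness directly off the guards of \autoref{fig:procatomic} with the two weakened start conditions, splitting first on whether $\writers{s_r}=\emptyset$ and then, for a fixed member of $\writers{s_r}$ or $\readers{s_r}$, on membership in $\pending{s_r}$; as you note, no auxiliary invariant is needed. The genuine error is your treatment of clause (c). The model $M_S$ is the \emph{atomic} register process of \autoref{fig:procatomic} with only its two start guards changed; it contains no analogue of the overlapping finish-read summand of \autoref{fig:procsafe}, and the $\overlapsym$ predicate, although formally updated by the update functions, is never consulted by any summand of the atomic process (the paper introduces it as ``only used by the safe model''). \autoref{app:blocking registers} even says explicitly that these fine-tuned models do not capture a write aborting and superseding an in-progress read. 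So there is no mechanism in $M_S$ by which your residual subcase ($\tidtwo\in\readers{s_r}$, $\tidtwo\notin\pending{s_r}$) could enable $\finishread[\tidtwo,\rid]{\data}$ for every $\data\in\Data$; only $\finishread[\tidtwo,\rid]{\vals{s_r,\tidtwo}}$ is enabled. Your proposed crux is moreover false on its own terms: in case (i) we have $\writers{s_r}=\emptyset$, and from the initial state the sequence $\startread[\tidtwo,\rid]$, $\orderread[\tidtwo,\rid]$ reaches the residual subcase with no write ever having occurred, so $\overlap{s_r,\tidtwo}$ does not hold and cannot be established as an invariant.

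The tension you spotted is real, but its resolution is the opposite of the one you chose: the quantifier in (c) should be taken existentially, not the model made safe. The paper's proof establishes precisely the enabledness of the single atomic finish-read summand (``otherwise (b) or (c) apply''), consistent with the general \autoref{lem:register enablings}, whose corresponding case reads ``$\finishread[\tid,\rid]{d}$ for \emph{some} $t\in\TID$ and $d\in\Data$''. And the lemma's sole application, in the case $C=S$ of \autoref{pr:just paths}, needs only that some $\finishread[\tidtwo,\rid]{\data}$ is enabled by the register component: combined with \autoref{lem:1} and the postulate of \autoref{sec:thread-register} that a thread state enabling $\startread[\tid,\rid]$ enables no finish reads, this already rules out options (b) and (c) for $t'=t$, which is all that is used. (It is the \emph{thread} LTS that, after a start read, enables $\finishread[\tid,\rid]{\data}$ for all $\data$; the register never does.) So the first two-thirds of your proposal is correct and matches the paper, but the final step as you describe it would fail, and attempting it with safe-register semantics would prove a statement about a different model than $M_S$.
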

\begin{proof}
In case $\writers{s_r} = \emptyset$, option (i) applies.
This can be seen from the description of this register model.  Likewise, in case $\writers{s_r} \neq \emptyset$, one of (ii) or (iii) must apply.

Assuming $\writers{s_r} \mathbin= \emptyset$, if $t\mathbin{\notin}\readers{s_r}$, option (a) applies and otherwise (b) or (c) apply.
\end{proof}

\begin{proposition}\rm\label{pr:just paths}
The characterisations of the $\block$-$\conc_C$-\hyperlink{just}{just} paths of $M$ from Propositions~\ref{pr:A-just paths}--\ref{pr:S-just paths} apply equally well to $M_C$.
\end{proposition}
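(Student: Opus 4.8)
The plan is to re-run the proofs of \autoref{pr:A-just paths}, \autoref{pr:I-just paths} and \autoref{pr:S-just paths} with $M_C$ substituted for $M$, checking that every ingredient those proofs rest on survives the change of register model. I would first isolate those ingredients: (1)~that $\conc_C$ is a concurrency relation; (2)~\autoref{lem:1} and \autoref{lem:2}; (3)~the postulated enabling discipline of the thread LTSs $T_t$; and (4)~a structural description of which actions a register state enables. A useful preliminary observation is that the characterising conditions are phrased purely in terms of which actions are \emph{thread-enabled} by $\pi$ and of the occurrence function $\textit{occ}_\pi$; since thread-enabledness refers only to the $T_t$ and to the thread-components of the states on $\pi$, and these are literally the same processes in $M$ and $M_C$, the conditions are meaningful and evaluated identically for paths of $M_C$.

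Next I would verify that ingredients (1)--(3) are unaffected. Validity of $\conc_C$ for $M_C$ has just been established. \autoref{lem:1} and \autoref{lem:2} mention only order and finish actions, whose guards ($\tid\in\writers{s}$, $\tid\in\pending{s}$, and so on) and synchronisation pattern are identical in $M_C$ and $M$, so their proofs apply verbatim, as already noted. The thread LTSs are unchanged, so their postulated properties carry over unaltered. The only genuinely different ingredient is (4): the proofs of \autoref{pr:T-just paths}/\autoref{pr:A-just paths} used the \emph{per-thread} claim that an idle thread may always begin an operation on $r$ -- i.e.\ that the register makes that thread's start actions available. In $M_C$ this fails, precisely because a start action may be disabled by another thread's in-progress operation; this is the failure of thread consistency (\autoref{def:thr-const}) recorded earlier, and it is the main obstacle. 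I would therefore replace the per-thread claim by the \emph{global} \autoref{lem:register enablings} (and, in the $S$-case, \autoref{lem:register enablings S}).

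With these substitutions the backward directions need no further change: they rely only on the parallel-composition semantics (an action enabled in the composite is enabled in its thread component) and on \autoref{lem:1}, both available for $M_C$, while the finiteness clauses are trivial corollaries of the chosen suffix containing no interfering start actions. In the forward (not-just) direction, the cases where the thread-enabled ``bad'' action $a$ is a $\crit[\tid]$ or a finish action are also unchanged: such an $a$ is never subject to the tightened start-guards, and \autoref{lem:2} together with \autoref{lem:1} places a witnessing order- or finish-action into the composite exactly as before. The one case requiring new reasoning is when $a$ is a start action on some $r$ and the relevant finiteness clause of $\conc_C$ is violated, so that $\pi$ contains only finitely many of the start actions that interfere with $a$.

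For that case I would pass to a suffix $\pi'$ beyond the last action of thread $\thrmap{a}$ and beyond the last occurrence on $r$ of a start action interfering with $a$; along $\pi'$ no such operation is begun on $r$, so the set of active operations on $r$ relevant to $a$'s guard is non-increasing. If it empties, then \autoref{lem:register enablings} (resp.\ \autoref{lem:register enablings S}, using that $\tid\notin\readers{s}$ holds throughout $\pi'$ since no $\thrmap{a}$-actions occur there) makes $a$ enabled and keeps it enabled, with no interfering action occurring on $\pi'$, so $\pi$ is not just. Otherwise some thread $\tidtwo\neq\thrmap{a}$ keeps an operation on $r$ in progress throughout $\pi'$; by the thread-LTS discipline a thread in mid-operation has only its matching finish enabled, so $\tidtwo$ performs no further thread action, and \autoref{lem:2} places an order- or finish-action of $\tidtwo$ into the register. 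The order- and finish-cases already treated in the proof of \autoref{pr:T-just paths}, now applied to $\tidtwo$ in place of $\thrmap{a}$, then exhibit the witness to non-justness. The $I$- and $S$-variants run on this same template, with \autoref{lem:register enablings S} used to follow blocking by in-progress writes (which alone block under $\conc_S$) separately from the freely-overlapping reads. I expect this stalled-operation reduction to be the crux, since it is the only point at which the loss of thread consistency must be repaired.
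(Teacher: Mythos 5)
Your plan follows the paper's own proof in its essentials: keep Lemmas~\ref{lem:1} and~\ref{lem:2}, the thread postulates of \autoref{sec:thread-register}, and the already-established validity of $\conc_C$ for $M_C$, and repair the loss of thread consistency by replacing per-thread enabling with the global Lemmas~\ref{lem:register enablings} and~\ref{lem:register enablings S}. Your backward direction is sound, though the paper gets it more cheaply: $M_C$ has the same states as $M$ and only fewer transitions, so a path that is unjust in $M_C$ is unjust in $M$, and Propositions~\ref{pr:A-just paths}--\ref{pr:S-just paths} for $M$ finish the job. Where you genuinely deviate is the choice of suffix in the forward direction: the paper first observes that finitely many start actions on $\rid$ forces finitely many actions on $\rid$ altogether (each operation contributes one start, one order, one finish, strictly in that order), so the suffix $\pi'$ can be taken to contain \emph{no} actions $b$ with $\regmap{b}=\rid$ (no write-related actions on $\rid$, in the $S$- and $I$-cases). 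This makes the case analysis via \autoref{lem:register enablings} static, at the single state beginning $\pi'$, and in its cases (ii) and (iii) the busy thread $t'$ is automatically silent on $\pi'$, because every action it could still perform is an action on $\rid$. You exclude only the interfering start actions and compensate with a dynamic argument about the set of in-progress operations.

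It is in that dynamic argument that your proof has a gap. In the stalled case you assert that, since $\tidtwo$'s thread LTS enables only its matching finish, ``$\tidtwo$ performs no further thread action'', and you then invoke the order-/finish-cases of the proof of \autoref{pr:T-just paths} with $\tidtwo$ in place of $\thrmap{a}$. Those cases require that the suffix contain no action $b$ with $\thrmap{b}=\tidtwo$ \emph{at all}. What you have shown is only that no action of $T_{\tidtwo}$ occurs. But the pending order action $\orderread[\tidtwo,\rid]$ or $\orderwrite[\tidtwo,\rid]$ is a register-local action that nonetheless satisfies $\thrmap{\cdot}=\tidtwo$; your suffix does not ban it (it is not a start action), it may well occur in $\pi'$, and under every $\conc_C$ it interferes both with itself and with the finish action of $\tidtwo$ that is to serve as your witness, so non-justness does not follow as stated. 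The repair is small but necessary: $\tidtwo$ can perform at most one further action in $\pi'$ (its order action; the finish is ruled out by the case hypothesis), so pass to a yet later suffix beyond it; there \autoref{lem:2} still yields an enabled order or finish action of $\tidtwo$ at the register, hence at the composite (by register-locality, respectively \autoref{lem:1}), and now no $\tidtwo$-attributed action ever occurs, so $\pi$ is unjust. With that patch --- or by adopting the paper's stronger suffix, which sidesteps the issue entirely --- your argument goes through.
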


\begin{proof}
Below we give a proof for the case $C=A$, most of which applies to all three cases $C \in \{A,I,S\}$. The parts that are specific to the case $C=A$ are coloured {\color{highlightColour}{\colourname}}, and afterwards we will give the case $C=S$ with the differences between it and $C=A$ similarly highlighted. The case $C=I$ is treated at the end.

We must prove that a path $\pi$ starting in the initial state of $M_A$, is $\block$-$\conc_A$-just if, and only if, a) $\pi$ thread-enables no actions $a \in \nonblock$ other than actions from $\mathit{start}(r)$ for some $r \in \RID$, and b) if, for some $r \in \RID$, an action $a \in \mathit{start}(r)$ is thread-enabled by $\pi$, then $\pi$ contains infinitely many occurrences of actions $b \in \mathit{start}(r)$.

Suppose $\pi$ thread-enables an action $a \in \nonblock$, say with $t= \thrmap{a}$ and $r= \regmap{a}$, such that {\color{highlightColour}if $a \in \textit{start}(r)$ then $\pi$ contains only finitely many actions $b\in\textit{start}(r)$. In the latter case $\pi$ contains only finitely many actions $b$ with $\regmap{b}=r$},
because actions $c$ with $\thrmap{c}=t'$ and $\regmap{c}=r$ must occur strictly in the order $\startwrite[\tid',\rid]{d}$ -- $\orderwrite[\tid',\rid]$ -- $\finishwrite[\tid',\rid]$.

We have to show that $\pi$ is \hyperlink{just}{not $\block$-$\conc_{C}$-just}. Let $\pi'$ be a suffix of $\pi$ in which no actions $b$ with $\thrmap{b}=t$ occur; {\color{highlightColour}in case $a \in \textit{start}(r)$ we moreover choose $\pi'$ such that it contains no actions $b$ with $\regmap{b}=r$.} Let $s$ be the initial state of $\pi'$. So $s_{\idx{t}} = \finish_t(\pi)$ enables $a$.

In case $r \mathbin=\undefsymb$, as $s_{\idx{t}}$ enables $a$ and $a$ does not require synchronisation with any register, also $s$ enables $a$. As $\pi'$ does not contain actions $b$ with $a \nconc_{C} b$, the path $\pi$ is \hyperlink{just}{not just}.

So assume that $r \in \RID$. We proceed with a case distinction on the action $a$, which must be of the form $\startread[\tid,\rid]$, $\startwrite[\tid,\rid]{d}$, $\finishread[\tid,\rid]{d}$ or $\finishwrite[\tid,\rid]$, since it is an action in LTS $T_{\tid}$ and $r \neq \undefsymb$.

First assume that $a = \finishwrite[\tid,\rid]$ or $a= \finishread[\tid,\rid]{d}$ for some $d \in \Data$. By \autoref{lem:2}, either $\orderwrite[\tid,\rid]$ or $\orderread[\tid,\rid]$ or $\finishwrite[\tid,\rid]$ or $\finishread[\tid,\rid]{d'}$ for some $d'\in\Data$ is enabled by state $s_{\idx{r}}$ in $R_r$.
In case $s_{\idx{r}}$ enables $c=\orderread[\tid,\rid]$ or $c=\orderwrite[\tid,\rid]$, also $s$ enables $c$, since $c$ does not require synchronisation with thread $t$. As $\pi'$ contains no actions $b$ with $c \nconc_{C} b$, using that $\thrmap{c}=t$, the path $\pi$ is \hyperlink{just}{not just}.
In case $s_{\idx{r}}$ enables an action $c=\finishread[\tid,\rid]{d}$ or $c=\finishwrite[\tid,\rid]$, by \autoref{lem:1} also state $s$ enables $c$. As $\pi'$ contains no actions $b$ with $c \nconc_C b$, the path $\pi$ is \hyperlink{just}{not just}.

Thus we may restrict attention to the case that $a = \startread[\tid,\rid]$ or $a = \startwrite[\tid,\rid]{d}$ for some $d \in \Data$, that is, $a\in \textit{start}(r)$. So henceforth we may use that $\pi'$ {\color{highlightColour}contains no actions $b$ with $\regmap{b}=r$.}

{\color{highlightColour}We proceed by making a case distinction on the three possibilities described by \autoref{lem:register enablings} for actions enabled by $s_{\idx{r}}$.
In case (i) of \autoref{lem:register enablings}, $s_{\idx{r}}$ enables $\startread[\tid',\rid]$ and $\startwrite[\tid',\rid]{d'}$ for all $t'\in\TID$ and all $d' \in \Data$, and thus in particular the action $a$.} Consequently, also  $s$ enables $a$. As $\pi'$ does not contain actions $b$ with $\thrmap{b}=t$ {\color{highlightColour}or actions $b \in \textit{start}(r)$}, it contains no action $b$ with $a \nconc_{C} b$. Hence, the path $\pi$ is \hyperlink{just}{not just}.

In case (ii) $s_{\idx{r}}$, and hence also $s$, enables {\color{highlightColour}$\orderread[\tid',\rid]$ or $\orderwrite[\tid',\rid]$} for some $t'\mathbin\in\TID$. Thus, by \autoref{lem:1}, $s_{\idx{t'}}$ enables an action {\color{highlightColour}$\finishwrite[\tid',\rid]$ or $\finishread[\tid',\rid]{d}$}. By the assumptions of \autoref{sec:thread-register}, this implies that $s_{\idx{t'}}$ only enables {\color{highlightColour}actions $b$ with $\regmap{b}=r$}. Since $\pi'$ does not contain such $b$, it cannot contain any action $c$ with $\thrmap{c}=t'$ either.
Consequently, $\pi$ is not just.

In case (iii) $s_{\idx{r}}$ enables an action {\color{highlightColour}$c=\finishwrite[\tid',\rid]$ or $c=\finishread[\tid',\rid]{d}$ for some $t'\in\TID$ and $d \in \Data$}.
Thus, by \autoref{lem:1}, $c$ is enabled by $s_{\idx{t'}}$ as well as by $s$.
The rest of the argument proceeds just as for case (ii) above.

For the other direction, note that $M_C$ has the same states as $M$ and can be obtained from $M$ by leaving out some transitions. This implies that any path $\pi$ in $M_C$ is also a path in $M$, and moreover, if $\pi$ is 
\hyperlink{just}{$\block$-$\conc_{C}$-unjust} in $M_C$ then it certainly is 
\hyperlink{just}{$\block$-$\conc_{C}$-unjust} in $M$. Using this, the reverse direction of this proof is direct consequence of Propositions~\ref{pr:A-just paths}--\ref{pr:S-just paths}.
\vspace{2ex}

\noindent
In the case $C = S$, we must prove that a path $\pi$ starting in the initial state of $M_S$, is $\block$-$\conc_S$-just if, and only if, a) $\pi$ thread-enables no actions $a \in \nonblock$ other than actions from $\mathit{start}(r)$ for some $r \in \RID$, and b) if an action $a \in \mathit{start}(r)$ is thread-enabled by $\pi$, then $\pi$ contains infinitely many occurrences of actions $b$ of the form $\startwrite[t',r]{d}$ for some $t' \in \TID$ and $d' \in \Data$. 
The proof is identical to the $C=A$ case, save for the {\colourname} phrases.

\noindent
Suppose $\pi$ thread-enables an action $a \in \nonblock$, say with $t= \thrmap{a}$ and $r= \regmap{a}$, such that {\color{highlightColour}if $a \in \textit{start}(r)$ then $\pi$ contains only finitely many actions $b$ of the form $\startwrite[\tid',\rid]{d}$ for some $\tid'\in\TID$ and $d'\in\Data$. In the latter case $\pi$ also contains only finitely many actions $b$ of the form $\orderwrite[\tid',\rid]$ or $\finishwrite[\tid',\rid]$},
because actions $c$ with $\thrmap{c}=t'$ and $\regmap{c}=r$ must occur strictly in the order $\startwrite[\tid',\rid]{d}$ -- $\orderwrite[\tid',\rid]$ -- $\finishwrite[\tid',\rid]$.

We have to show that $\pi$ is \hyperlink{just}{not $\block$-$\conc_{C}$-just}. Let $\pi'$ be a suffix of $\pi$ in which no actions $b$ with $\thrmap{b}=t$ occur; {\color{highlightColour}in case $a \in \textit{start}(r)$ we moreover choose $\pi'$ such that it contains no actions $b$ of the form $\startwrite[\tid',\rid]{d}$, $\orderwrite[\tid',\rid]$ or $\finishwrite[\tid',\rid]$ for some $\tid'\in\TID$ and $d'\in\Data$.} Let $s$ be the initial state of $\pi'$. So $s_{\idx{t}} = \finish_t(\pi)$ enables $a$.

In case $r \mathbin=\undefsymb$, as $s_{\idx{t}}$ enables $a$ and $a$ does not require synchronisation with any register, also $s$ enables $a$. As $\pi'$ does not contain actions $b$ with $a \nconc_{C} b$, the path $\pi$ is \hyperlink{just}{not just}.

So assume that $r \in \RID$. We proceed with a case distinction on the action $a$, which must be of the form $\startread[\tid,\rid]$, $\startwrite[\tid,\rid]{d}$, $\finishread[\tid,\rid]{d}$ or $\finishwrite[\tid,\rid]$, since it is an action in LTS $T_{\tid}$ and $r \neq \undefsymb$.

First assume that $a = \finishwrite[\tid,\rid]$ or $a= \finishread[\tid,\rid]{d}$ for some $d \in \Data$. By \autoref{lem:2}, either $\orderwrite[\tid,\rid]$ or $\orderread[\tid,\rid]$ or $\finishwrite[\tid,\rid]$ or $\finishread[\tid,\rid]{d'}$ for some $d'\in\Data$ is enabled by state $s_{\idx{r}}$ in $R_r$.
In case $s_{\idx{r}}$ enables $c=\orderread[\tid,\rid]$ or $c=\orderwrite[\tid,\rid]$, also $s$ enables $c$, since $c$ does not require synchronisation with thread $t$. As $\pi'$ contains no actions $b$ with $c \nconc_{C} b$, using that $\thrmap{c}=t$, the path $\pi$ is \hyperlink{just}{not just}.
In case $s_{\idx{r}}$ enables an action $c=\finishread[\tid,\rid]{d}$ or $c=\finishwrite[\tid,\rid]$, by \autoref{lem:1} also state $s$ enables $c$. As $\pi'$ contains no actions $b$ with $c \nconc_C b$, the path $\pi$ is \hyperlink{just}{not just}.

Thus we may restrict attention to the case that $a = \startread[\tid,\rid]$ or $a = \startwrite[\tid,\rid]{d}$ for some $d \in \Data$, that is, $a\in \textit{start}(r)$. So henceforth we may use that $\pi'$ {\color{highlightColour}contains no actions $b$ of the form $\startwrite[\tid',\rid]{d}$, $\orderwrite[\tid',\rid]$ or $\finishwrite[\tid',\rid]$ for some $\tid'\in\TID$ and $d'\in\Data$.}

{\color{highlightColour}We proceed by making a case distinction on the three possibilities described by \autoref{lem:register enablings S} for actions enabled by $s_{\idx{r}}$.
In case (i) of \autoref{lem:register enablings S}, we make a further case distinction, depending on whether $a = \startread[\tid,\rid]$ or $a = \startwrite[\tid,\rid]{d}$. In the latter case,
$s_{\idx{r}}$ enables $\startwrite[\tid',\rid]{d'}$ for all $t'\in\TID$ and all $d' \in \Data$, and thus in particular the action $a$. In the former case, options (b) and (c) for $t'=t$ of \autoref{lem:register enablings S} are ruled out, because in those cases \autoref{lem:1} would imply that an action $\finishread[\tid,\rid]{d}$ would be enabled by $s_\idx{t}$, but by assumption (see \autoref{sec:thread-register}) this cannot happen in a state enabling $\startread[\tid,\rid]$. Thus (a) applies, and also $s_{\idx{r}}$ enables action $a$.} Consequently, also  $s$ enables $a$. As $\pi'$ does not contain actions $b$ with $\thrmap{b}=t$ {\color{highlightColour}or actions $b\in\textit{start}(r)$ with $\issw{b}$}, it contains no action $b$ with $a \nconc_{C} b$. Hence, the path $\pi$ is \hyperlink{just}{not just}.

In case (ii) $s_{\idx{r}}$, and hence also $s$, enables {\color{highlightColour}$\orderwrite[\tid',\rid]$} for some $t'\mathbin\in\TID$. Thus, by \autoref{lem:1}, $s_{\idx{t'}}$ enables an action {\color{highlightColour}$\finishwrite[\tid',\rid]$}. By the assumptions of \autoref{sec:thread-register}, this implies that $s_{\idx{t'}}$ only enables {\color{highlightColour}the action $b=\finishwrite[\tid',\rid]$}. Since $\pi'$ does not contain such $b$, it cannot contain any action $c$ with $\thrmap{c}=t'$ either.
Consequently, $\pi$ is not just.

In case (iii) $s_{\idx{r}}$ enables an action {\color{highlightColour}$c=\finishwrite[\tid',\rid]$ for some $t'\in\TID$}.
Thus, by \autoref{lem:1}, $c$ is enabled by $s_{\idx{t'}}$ as well as by $s$.
The rest of the argument proceeds just as for case (ii) above.

For the other direction, note that $M_C$ has the same states as $M$ and can be obtained from $M$ by leaving out some transitions. This implies that any path $\pi$ in $M_C$ is also a path in $M$, and moreover, if $\pi$ is 
\hyperlink{just}{$\block$-$\conc_{C}$-unjust} in $M_C$ then it certainly is 
\hyperlink{just}{$\block$-$\conc_{C}$-unjust} in $M$. Using this, the reverse direction of this proof is direct consequence of Propositions~\ref{pr:A-just paths}--\ref{pr:S-just paths}.
\vspace{2ex}

\noindent
Finally, for the case $C=I$ we must prove that a path $\pi$ starting in the initial state of $M_I$ is $\block$-$\conc_I$-just if, and only if, a) $\pi$ thread-enables no actions $a \in \nonblock$ other than actions from $\mathit{start}(r)$ for some $r \in \RID$, and b) if an action $\startwrite[\tid,\rid]{\data}$ is thread-enabled by $\pi$, then $\pi$ contains infinitely many occurrences of actions $b \in \mathit{start}(r)$, and c) if an action $\startread[\tid,\rid]$ is thread-enabled by $\pi$, then $\pi$ contains infinitely many occurrences of actions $b$ of the form $\startwrite[\tidtwo,\rid]{\data}$.
Let $\pi$ be a path in $M_I$ that thread-enables an action $ a\in \nonblock$ such that one of these three conditions is violated. In the case that the violated condition is a) or b), the proof proceeds just as in the case $C=A$; in the case it is condition c) the proof proceeds just as in the case $C=S$. The reverse direction goes exactly as in the case $C=A$.
\end{proof}

\begin{lemma}\rm\label{lem:MCtoM}
    Let $C \in \{A, I, S\}$. If $\pi$ is a $C$-complete path in $M_C$ starting in the initial state of $M_C$, then it is also a $C$-complete path in $M$ starting in the initial state of $M$.
\end{lemma}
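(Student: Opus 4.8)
The plan is to reduce the claim entirely to the path characterisations already obtained in this appendix, so that almost no fresh computation is needed. First I would record the two structural facts noted in the proof of \autoref{pr:just paths}: $M_C$ has the same set of states and the same initial state as $M$, and its transition relation is a subset of that of $M$, because the fine-tuned register models of this appendix differ from the atomic model of \autoref{fig:procatomic} only by dropping some \emph{start} transitions. Consequently any path $\pi$ starting in the initial state of $M_C$ is, literally, a path starting in the initial state of $M$: every transition of $\pi$ is a transition of $M$, and its first state is the initial state of $M$. Hence the only substantive point is that $C$-completeness, i.e.\ $\block$-$\conc_C$-justness, carries over from $M_C$ to $M$.

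The key observation is that the two pieces of data on which $\block$-$\conc_C$-justness depends are \emph{intrinsic} to $\pi$ and are unchanged when $\pi$ is reinterpreted as a path in $M$. Whether an action is \emph{thread-enabled} by $\pi$ is determined solely by the state/action sequence of $\pi$ together with the thread LTSs $T_t$ (through the state $\finish_t(\pi)$ and enabledness in $T_t$); and the occurrence function $\textit{occ}_\pi$ depends only on the action sequence of $\pi$. Since $M$ and $M_C$ share the same states and, being built from the same thread components, the same LTSs $T_t$, and since $\pi$ presents the same states and actions in both, the set of actions thread-enabled by $\pi$ and the function $\textit{occ}_\pi$ are identical whether we regard $\pi$ as a path in $M_C$ or in $M$.

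I would then conclude by combining the characterisations. By \autoref{pr:just paths} the path $\pi$ is $\block$-$\conc_C$-just in $M_C$ exactly when it satisfies the corresponding conditions a)--c) of Propositions~\ref{pr:A-just paths}--\ref{pr:S-just paths}, and by those same propositions $\pi$ is $\block$-$\conc_C$-just in $M$ exactly when it satisfies the very same conditions. As remarked at the close of \autoref{app:complete paths}, those conditions depend on nothing but the thread-enabled actions of $\pi$ and the function $\textit{occ}_\pi$. Since this data coincides for the two interpretations of $\pi$, the conditions hold for one iff they hold for the other, so $\pi$ is $\block$-$\conc_C$-just in $M$ whenever it is in $M_C$ (indeed the two are equivalent); this is the required implication.

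The point needing care, and the main obstacle to a naive argument, is that passing from $M_C$ to $M$ \emph{adds} transitions, so strictly more actions may be enabled in the composite states of $M$; a direct appeal to \autoref{def:justness} would then seem to impose additional justness obligations in $M$ and the transfer could fail. The resolution is precisely that the characterisations are stated in terms of thread-enabledness rather than enabledness in the composed state: the actions that become newly enabled in $M$ are exactly \emph{start} actions whose register component refuses them in $M_C$, and thread-enabledness ignores the register component, so the intrinsic data that governs justness is untouched. This is what makes the reduction sound, and why the proof can bypass re-deriving \autoref{def:justness} from scratch for $M$.
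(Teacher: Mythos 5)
Your proposal is correct and follows essentially the same route as the paper: observe that $M_C$ has the same states and a subset of the transitions of $M$, so $\pi$ literally is a path in $M$, and then transfer $\block$-$\conc_C$-justness via \autoref{pr:just paths} together with Propositions~\ref{pr:A-just paths}--\ref{pr:S-just paths}, whose conditions depend only on the thread-enabled actions and occurrence counts intrinsic to $\pi$. Your closing discussion of why naive enabledness in the composite states would not transfer, while thread-enabledness does, is a useful elaboration of exactly the point the paper's terser proof relies on implicitly.
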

\begin{proof}
    Let $\pi$ be a $C$-complete path in $M_C$ starting in its initial state. 
    Since for all three variants, $M_C$ has stricter conditions for actions being enabled than $M$ does, $\pi$ is guaranteed to exist in $M$. It remains to show that $\pi$ is $C$-complete in $M$. This, however, is an immediate consequence of \autoref{pr:just paths}, from which it follows that a path starting in the initial state of $M_C$ is $\block$-$\conc_C$-just exactly when that same path starting in the initial state of $M$ is.
\end{proof}

We also wish to prove the other direction. However, it is not necessarily the case that any $C$-complete path in $M$ is also a $C$-complete path in $M_C$, if only because not every path in $M$ exists in $M_C$. Thus, we need a way to convert a path in $M$ into a path that is guaranteed to exist in $M_C$.
We define a rewrite relation on paths, where $\pi \leadsto \pi'$ denotes that $\pi$ is rewritten into $\pi'$. Namely $\pi \leadsto \pi'$ if, and only if, there exists paths $\sigma$ and $\rho$, states $s$ and $s'$, thread $t\in\TID$, and an action $b$ with $\thrmap{b}\neq \tid$, such that
\begin{itemize}
\item either $\pi = \sigma ~\startread[\tid,\rid]~s~b~\rho$ and $\pi' = \sigma~b~s'~\startread[\tid,\rid]~\rho$,
\item or $\pi = \sigma ~\startwrite[\tid,\rid]{d}~s~b~\rho$ and $\pi' = \sigma~b~s'~\startwrite[\tid,\rid]{d}~\rho$,
\item or $\pi = \sigma ~b~s~\finishread[\tid,\rid]{d}~\rho$ and $\pi' = \sigma~\finishread[\tid,\rid]{d}~s'~b~\rho$,
\item or $\pi = \sigma ~b~s~\finishwrite[\tid,\rid]~\rho$ and $\pi' = \sigma~\finishwrite[\tid,\rid]~s'~b~\rho$.
\end{itemize}
Thus, this rewrite relation moves any start read or start write action forwards by swapping it with an action $b$ of another thread, and in the same manner moves any finish read or finish write action backwards. To establish that whenever a path $\pi$ of any of the above forms exists in $M$, the corresponding $\pi'$ also exists in $M$, it suffices to prove that for any transition $s \xrightarrow{b} s'$ that is part of a path $\pi$ in $M$: (i) if a start action $a$ is enabled in $s$ and $\thrmap{a} \neq \thrmap{b}$, then $a$ is enabled in $s'$, and (ii) if a finish action $a$ is enabled in $s'$ and $\thrmap{a} \neq \thrmap{b}$, then $a$ is enabled in $s$.
The former follows directly from \autoref{lem:thr-consist}; (ii) straightforwardly holds in the case of a finish write action, and holds in the case of a finish read action because the return value is determined by the matching order read action, which belongs to the same thread.
Since paths may be infinite, this rewrite system need not terminate, but in the limit it converts any path $\pi$ into a normal form $\norm{\pi}$ in which each action $\orderread[\tid,\rid]$ is immediately preceded by the corresponding $\startread[\tid,\rid]$ and immediately followed by the corresponding $\finishread[\tid,\rid]{d}$, and likewise for write actions. To convergence to this limit, one should give priority to rewrite steps that apply closer to the beginning of the path.
This normal form also exists in $M_A$, $M_I$ and $M_S$.

\begin{lemma}\rm\label{lem:MtoMC}
    Let $C \in \{A, I, S\}$. If $\pi$ is a $C$-complete path in $M$ starting in the initial state of $M$, then $\norm{\pi}$ is a $C$-complete path in $M_C$ starting in the initial state of $M_C$.
\end{lemma}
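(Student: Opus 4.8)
The plan is to split the statement into three ingredients: (1) $\norm{\pi}$ is a genuine path of $M_C$ that starts in its initial state; (2) $\pi$ and $\norm{\pi}$ induce the same occurrence function $\textit{occ}$ and thread-enable exactly the same actions; and (3) these two facts, combined with the characterisations of just paths, force $\norm{\pi}$ to be $C$-complete in $M_C$. Ingredient (3) is where Propositions~\ref{pr:A-just paths}--\ref{pr:S-just paths} and \autoref{pr:just paths} do the work, so the real content lies in (1) and (2).

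For (2) I would exploit that the rewrite relation $\leadsto$ only ever transposes a start or finish action of some thread $t$ with an adjacent action $b$ satisfying $\thrmap{b}\neq t$. Each single step, and hence the limit, therefore preserves the multiset of actions occurring on the path, giving $\textit{occ}_{\norm{\pi}} = \textit{occ}_\pi$, and leaves the subsequence of actions belonging to any fixed thread $t$ unchanged. Consequently the projection onto the thread LTS $T_t$ is the same for $\pi$ and $\norm{\pi}$, so whenever $t$ performs only finitely many actions (a condition determined by $\textit{occ}$ alone) we have $\finish_t(\norm{\pi}) = \finish_t(\pi)$. Since thread-enabledness of an action is defined purely in terms of the finiteness of $t$'s actions and enabledness in $\finish_t$, the two paths thread-enable the same actions.

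For (1) I would first recall that the rewrite steps keep the path inside $M$ (as already argued before this lemma) and preserve its first state, so $\norm{\pi}$ is a path of $M$ starting in the shared initial state of $M$ and $M_C$. In $\norm{\pi}$, by the normal-form property, each operation on a register occurs as a contiguous block of its start, order, and finish actions; hence no two operations on the same register $r$ overlap, and at the state $s$ immediately preceding any start action on $r$ we have $\readers{s_{\idx{r}}} = \writers{s_{\idx{r}}} = \emptyset$, which entails every start guard appearing in the register models defining $M_A$, $M_I$ and $M_S$. Thus every start transition of $\norm{\pi}$ is enabled in $M_C$; the order and finish transitions are governed by the same guards in $M_C$ as in $M$ and so carry over unchanged. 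Therefore $\norm{\pi}$ is a path of $M_C$.

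To finish, since $\pi$ is $C$-complete in $M$, Propositions~\ref{pr:A-just paths}--\ref{pr:S-just paths} show that the thread-enabled actions and $\textit{occ}_\pi$ meet the corresponding $\block$-$\conc_C$-justness conditions; by (2) the same holds for $\norm{\pi}$; and \autoref{pr:just paths} guarantees that these very conditions characterise $\block$-$\conc_C$-justness of paths in $M_C$. Hence $\norm{\pi}$ is $C$-complete in $M_C$. The step I expect to be the main obstacle is the limit reasoning for infinite $\pi$: the invariants in (2), together with the full sequentialisation underlying (1), must be justified for the limit normal form rather than for a finite prefix, which is exactly where the convergence obtained by prioritising rewrites near the front of the path becomes essential.
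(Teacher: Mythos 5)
Your proof is correct and follows essentially the same route as the paper's: show that $\norm{\pi}$ is a path of $M_C$ from its initial state, observe that $\pi$ and $\norm{\pi}$ have the same occurrence function and the same thread-enabled actions, and conclude via the characterisations of just paths. The only cosmetic differences are that you invoke \autoref{pr:just paths} directly for $M_C$, whereas the paper first deduces justness of $\norm{\pi}$ in $M$ from Propositions~\ref{pr:A-just paths}--\ref{pr:S-just paths} and then uses that $M_C$ is a subgraph of $M$, and that you spell out why the normal form satisfies the stricter start-action guards (a point the paper dismisses as trivial).
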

\begin{proof}
    Let $\pi$ be a $C$-complete path in $M$ starting in its initial state. Trivially, $\norm{\pi}$ exists in $M_C$. It remains to show that $\norm{\pi}$ is $C$-complete in $M_C$.
    Note that $M_C$ is a subgraph of $M$. Hence, if $\norm{\pi}$ is $C$-complete in $M$, then it is also $C$-complete in $M_C$. 
    It therefore suffices to prove that $\norm{\pi}$ is $C$-complete in $M$.
This is an immediate consequence of Propositions~\ref{pr:A-just paths}--\ref{pr:S-just paths}, given that $\pi$ and $\norm{\pi}$ have the same sets of thread-enabled actions and $\textit{occ}_\pi = \textit{occ}_{\norm{\pi}}$ (i.e., each action occurs as often in $\norm{\pi}$ as it occurs in $\pi$).
\end{proof}

\begin{theorem}\rm
$M_A =^A_\WCT M$, $M_I =^I_\WCT M$ and $M_S =^S_\WCT M$.
\end{theorem}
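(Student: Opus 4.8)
The plan is to reduce the theorem directly to \autoref{lem:MCtoM} and \autoref{lem:MtoMC}, which together already bridge the path sets of $M$ and $M_C$. Fix $C \in \{A, I, S\}$. By the definition of weak completed trace equivalence, the three claimed equivalences $M_A =^A_\WCT M$, $M_I =^I_\WCT M$ and $M_S =^S_\WCT M$ unfold to exactly the set equalities $\WCT_C(M_C) = \WCT_C(M)$, so it suffices to establish the two inclusions $\WCT_C(M_C) \subseteq \WCT_C(M)$ and $\WCT_C(M) \subseteq \WCT_C(M_C)$ for each such $C$.

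For the first inclusion I would take an arbitrary string $w \in \WCT_C(M_C)$, write it as $w = \ell^-(\pi)$ for some $C$-complete path $\pi$ from the initial state of $M_C$, and apply \autoref{lem:MCtoM} to conclude that $\pi$ is also a $C$-complete path from the initial state of $M$. Since $\ell^-$ depends only on the action sequence of $\pi$, which is unchanged, $w = \ell^-(\pi) \in \WCT_C(M)$. This direction is immediate.

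For the converse inclusion I would start from $w = \ell^-(\pi)$ with $\pi$ a $C$-complete path from the initial state of $M$, and apply \autoref{lem:MtoMC} to obtain that $\norm{\pi}$ is a $C$-complete path from the initial state of $M_C$. The one extra ingredient needed is that normalisation does not change the weak completed trace, i.e.\ $\ell^-(\norm{\pi}) = \ell^-(\pi)$. This is the only genuinely non-routine point, and it is where I expect to spend most of the effort. The argument is that the rewrite relation $\leadsto$ only ever swaps a start or finish read/write action with an action $b$ of a \emph{different} thread; consequently it never reorders two actions of the form $\crit[\tid]$ or $\noncrit[\tid]$ relative to each other, because such actions are thread-local and are never themselves the swapped start/finish action (they are therefore filtered out of neither side by $\ell^-$ and merely have a filtered-out action pass them). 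Each individual rewrite step thus leaves the subsequence of $\crit[\tid]$ and $\noncrit[\tid]$ actions unchanged. Hence $w = \ell^-(\pi) = \ell^-(\norm{\pi}) \in \WCT_C(M_C)$.

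Combining the two inclusions yields $\WCT_C(M_C) = \WCT_C(M)$ for each $C \in \{A, I, S\}$, which is precisely the three stated equivalences. The main obstacle, as flagged above, is not the reduction—which is essentially a one-line application of the two lemmas—but the verification that the \emph{limit} normal form preserves $\ell^-$: since $\norm{\pi}$ may be reached only as an infinite limit of rewrites, I would argue position-by-position that every $\crit[\tid]$ or $\noncrit[\tid]$ action keeps its rank among the kept actions, so that the limit of the (unchanged) filtered subsequences is again the original $\ell^-(\pi)$.
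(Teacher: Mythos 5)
Your proposal is correct and takes essentially the same route as the paper's own proof: mutual set inclusion, with the first direction given by \autoref{lem:MCtoM} and the second by \autoref{lem:MtoMC} together with the identity $\ell^-(\norm{\pi}) = \ell^-(\pi)$. The only difference is that the paper asserts this identity without argument, whereas you justify it---correctly---by noting that every rewrite step swaps a register-interface action with an action of another thread, so the subsequence of $\crit[\tid]$ and $\noncrit[\tid]$ actions is never reordered.
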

\begin{proof}
We prove that $\WCT_C(M) = \WCT_C(M_C)$ for all $C \in \{A, I, S\}$. We prove this by mutual set inclusion.

First, let $\pi$ be a $C$-complete path from the initial state of $M_C$, such that $\ell^-(\pi) \mathbin\in \WCT_C(M_C)$. By \autoref{lem:MCtoM}, $\pi$ is also a $C$-complete path from the initial state of $M$. Thus, $\ell^-(\pi) \in \WCT_C(M)$.

Second, let $\pi$ be a $C$-complete path from the initial state of $M$, such that $\ell^-(\pi) \mathbin\in \WCT_C(M)$. Then by \autoref{lem:MtoMC}, $\norm{\pi}$ is a $C$-complete path in $M_C$. Note that $\ell^-(\norm{\pi}) = \ell^-(\pi)$, hence $\ell^-(\pi) \in \WCT_C(M_C)$.

We have proven $M_C =^C_\WCT M$ for all $C \in \{A, I, S\}$.
\end{proof}

\section{mCRL2 models}\label{app:mCRL2}
In this section, we comment on some of the details of our thread and register models in mCRL2. For information on the mCRL2 language, we refer to \cite{mCRL2language}.
We made several alterations while translating the process-algebraic definitions of MWMR safe, regular and atomic registers (see \autoref{sec:registers}) to mCRL2 processes.
Some of these were necessary to obtain valid mCRL2 models, others were employed to reduce the state space of the models.
The models are available as supplementary material.
Note that the exact models differ slightly from those used in \cite{spronck2023process}; we altered them to align more closely with the process-algebraic definitions.

The most important differences between the definitions from \autoref{sec:registers} and the mCRL2 models are as follows:
\begin{itemize}
    \item Since the three models do not require exactly the same information from the status object, and tracking unused information unnecessarily increases the state space of the model, we separate the status object into three variants, one for each register type, that only track the required information for that type.
    \item We reset values to a pre-defined default whenever we know that the value is no longer relevant. For example, the safe register model only needs to know the value that a thread intends to write if this write operation does not encounter an overlapping write; if there are multiple concurrent writes, it will not matter what their intended values were. Hence, in the safe register model we only track a single value instead of a mapping from threads to values for $\valssym$, and reset this value to its default whenever we observe zero or more than one active writer.
    \item For regular registers, instead of computing on the spot what the values of all active writes are whenever a read starts, we use a multiset to keep track of those values as writes start and finish; this is more straightforward and, since it adds no new information, does not expand the state-space.
    \item To further reduce the state space, we add the value of a new writer only to the $\posvalsym$ of active readers, rather than all threads. Since $\posvalsym$ is reset for a thread whenever it starts a read, this does not affect the behaviour of the model.
    \item We have moved summations inwards whenever possible, so that $\Data$ is only summed over when the resulting value is actually relevant.
\end{itemize}

In addition to the registers, the threads must also be modelled in mCRL2. 
We already mentioned the relevant choices made here in \autoref{sec:verification}.
It remains to discuss how the parallel composition as defined in \autoref{sec:preliminaries} and employed in \autoref{sec:thread-register} is modelled in mCRL2. In order to get the right communication between threads and registers, we create ``sending'' and ``receiving'' versions of all register interface actions, and define a communication function so that the two versions together form the correct action.
For instance, to start a read of register $\rid$, thread $\tid$ does the action $\mathit{start\_read\_s}(\tid, \rid)$. The register simultaneously does the action $\mathit{start\_read\_r}(\tid,\rid)$, and this communication appears in the model as $\mathit{start\_read}(\tid,\rid)$. 
The register and thread local actions do not need such a modification, since they are only performed by a single component.

It is worth noting that the mCRL2 version of the modal $\mu$-calculus does not support quantification over actions directly, but does allow quantification over the data parameters of actions. In order to express the formulae we need (see \autoref{app:mucalc}), we therefore add the action $\mathit{label}$ to every action in the model, creating multi-actions. We give the $\mathit{label}$ action a parameter over a set of labels $\mathbb{L}$, where each label corresponds to one of the action names used in \autoref{sec:registers}. We then hide the original actions, so that only $\mathit{label}$ is visible in the model. This way, we can refer to the labels in our formulae, which circumvents the issue of not being able to quantify over actions. This approach is based on \cite{bouwman2020off}.

\section[Modal mu-calculus formulae]{Modal $\mu$-calculus formulae}\label{app:mucalc}
In \cite{spronck2024progress}, we presented template formulae that can be instantiated to capture liveness properties that fit Dwyer, Avrunin and Corbett's Property Specification Patterns (PSP) \cite{dwyer1999patterns}, incorporating a variety of completeness criteria, including justness.

In this appendix, we recap the three correctness properties we verified, and give the modal $\mu$-calculus formulae for these properties. 
To be able to use the template formulae with justness for the liveness properties, we must show how they fit into PSP.
We do not thoroughly explain the modal $\mu$-calculus here; instead, we refer to \cite{spronck2024progress} for more information on how these formulae should be understood.
We do not use the mCRL2 modal $\mu$-calculus syntax here, since it results in longer and less readable formulae. However, the files are all given in the supplementary material.

\subparagraph{Mutual exclusion}
The mutual exclusion property says that at any given time, at most one thread will be in its critical section.
In our models, a thread $\tid$ accessing its critical section is represented by the action $\crit[\tid]$.
We reformulate the mutual exclusion property for our models as: for all threads $\tid$ and $\tidtwo$ such that $\tid \neq \tidtwo$, at any time it is impossible that both $\crit[\tid]$ and $\crit[\tidtwo]$ are enabled.
This is captured by the following modal $\mu$-calculus formula:
\begin{equation}
    \bigwedge_{\tid, \tidtwo \in \TID}( (\tid \neq \tidtwo) \imps \boxm{\clos{\allact}}\neg(\diam{\crit[\tid]}\tp \land \diam{\crit[\tidtwo]}\tp))
\end{equation}

\subparagraph{Deadlock freedom}
The deadlock freedom property says that whenever at least one thread is running its entry protocol, eventually some thread will enter its critical section.
Since we want to apply the results from \cite{spronck2024progress}, we need to first show how this can be represented in PSP, using the actions in our model.
This property fits into the global response pattern of \cite{dwyer1999patterns}: whenever the trigger occurs, in this case one thread being in its entry protocol, we want the response to occur eventually, in this case some thread executing its critical section.
We cannot directly apply the results from \cite{spronck2024progress}, however, since there we require the trigger to be a set of actions and a thread ``being in the entry protocol'' cannot simply be captured by the occurrence of an action. 
Instead, a thread $\tid$ is in its entry protocol between the execution of $\noncrit[\tid]$ and the subsequent execution of $\crit[\tid]$. 
Fortunately, this requires only a minor deviation from the template given in \cite{spronck2024progress}, namely by allowing the trigger to be a regular expression over sets of actions instead.

Assume that for $C \in \{T, S, I, A\}$ and all actions $a$ in our model, we have a function $\elimf{C}{a}$ that maps an action to all actions that interfere with it according to concurrency relation $C$.
We have encoded these functions in our mCRL2 models.
For clarity, let $\crit[\mathit{all}] = \bigcup_{\tid \in \TID}\{\crit[\tid]\}$.

In the modal $\mu$-calculus formula, we capture that there \emph{does not} exist a path from the initial state of the model that violates deadlock freedom and is just. This is equivalent to checking that deadlock freedom is satisfied on all just paths.
The formula \[\nu X.(\bigwedge_{a \in \nonblock}( \diam{a}\tp \imps \diam{\clos{\comp{\crit[\mathit{all}]}}}(\diam{\elimf{C}{a} \setminus \crit[\mathit{all}]}X)))\] holds in a state $s$ iff there is a path starting in $s$ that (i) is just, in the sense that any state enabling an action $a \notin \block$ is followed by an action from $\elimf{C}{a}$, and (ii) does not contain any action from  $\crit[\mathit{all}]$.
In other words, this formula says that from a state $s$, there is a just path where the response never occurs. We then merely need to prepend the formula for an occurrence of the trigger, namely an occurrence of $\noncrit[\tid]$, for some $\tid\in\TID$, that is not (yet) followed by $\crit[\tid]$, and negate the resulting formula to capture that the initial state of the model does not admit just paths that violate deadlock freedom.

The final formula for deadlock freedom under $\justact{\conc_{C}}{\block}$ is:
\begin{equation}
    \neg\diam{\clos{\allact}}\bigvee_{\tid \in \TID}(\diam{\noncrit[\tid] \co \clos{\comp{\crit[\tid]}}} \nu X.(\bigwedge_{a \in \nonblock}( \diam{a}\tp \imps \diam{\clos{\comp{\crit[\mathit{all}]}}}(\diam{\elimf{C}{a} \setminus \crit[\mathit{all}]}X))))
\end{equation}

\subparagraph{Starvation freedom}
Starvation freedom says that whenever a thread leaves its non-critical section, it will eventually enter its critical section.
Unlike deadlock freedom, starvation freedom does fit into the global response pattern using only sets of actions: the trigger is the occurrence of $\noncrit[\tid]$ for some $\tid \in \TID$, and the response is the occurrence of $\crit[\tid]$.
Using the same functions $\elimf{C}{a}$ for $C \in \{T, S, I, A\}$ and actions $a$ as the deadlock freedom formula, we get the following modal $\mu$-calculus formula for starvation freedom under $\justact{\conc{_C}}{\block}$:
\begin{equation}
    \bigwedge_{\tid \in \TID}(\neg \diam{\clos{\allact} \co \noncrit[\tid]}\nu X. (\bigwedge_{a \in \nonblock}(\diam{a}\tp \imps \diam{\clos{\comp{\crit[\tid]}}}(\diam{\elimf{C}{a} \setminus \crit[\tid]}X))))
\end{equation}

\section{Algorithms}\label{app:algorithms}

In this appendix, we go over all the algorithms mentioned in \autoref{tab:results} that have not already been discussed in \autoref{sec:verification}.
We give the pseudocode of algorithms, and where relevant also discuss the results of the verification.

\subsection{Anderson's algorithm}
Anderson's algorithm is presented in \cite{anderson1993fine}. It only works for 2 threads, and has different code for both threads.
For clarity, we break from our established shorthand of $j = 1-i$ for just this algorithm, and present the algorithms for $i = 0$ (\cref{alg:anderson-0}) and $j = 1$ (\cref{alg:anderson-1}) separately.
There are six Booleans total: $\varidx{p}{i}, \varidx{p}{j}, \varidx{q}{i}, \varidx{q}{j}, \varidx{t}{i}$ and $\varidx{t}{j}$. All are initialised to $\true$.

\begin{table}[ht!]
\vspace{-1em}
\noindent\begin{minipage}[t]{0.47\textwidth}
\begin{algorithm}[H]
\caption{Anderson's algorithm for $i = 0$}\label{alg:anderson-0}
\begin{algorithmic}[1]
    \State{$\varidx{p}{i} \writeop \false$}
    \State{$\varidx{q}{i} \writeop \false$}
    \State{$\varname{x} \writeop \varidx{t}{j}$}
    \State{$\varidx{t}{i} \writeop \varname{x}$}
    \If{$\varname{x} = \true$}
        \State{$\varidx{p}{i} \writeop \true$}
        \State{\textbf{await} $\varidx{p}{j} = \true$}
    \Else
        \State{$\varidx{q}{i} \writeop \true$}
        \State{\textbf{await} $\varidx{q}{j} = \true$}
    \EndIf
    \State{\textbf{critical section}}
    \State{$\varidx{p}{i} \writeop \true$}
    \State{$\varidx{q}{i} \writeop \true$}
\end{algorithmic}
\end{algorithm}
\end{minipage}
\hfill
\begin{minipage}[t]{0.47\textwidth}
\begin{algorithm}[H]
\caption{Anderson's algorithm for $j = 1$}\label{alg:anderson-1}
\begin{algorithmic}[1]
    \State{$\varidx{p}{j} \writeop \false$}
    \State{$\varidx{q}{j} \writeop \false$}
    \State{$\varname{x} \writeop \neg\varidx{t}{i}$}
    \State{$\varidx{t}{j} \writeop \varname{x}$}
    \If{$\varname{x} = \true$}
        \State{$\varidx{q}{j} \writeop \true$}
        \State{\textbf{await} $\varidx{p}{i} = \true$}
    \Else
        \State{$\varidx{p}{j} \writeop \true$}
        \State{\textbf{await} $\varidx{q}{i} = \true$}
    \EndIf
    \State{\textbf{critical section}}
    \State{$\varidx{p}{j} \writeop \true$}
    \State{$\varidx{q}{j} \writeop \true$}
\end{algorithmic}
\end{algorithm}
\end{minipage}
\end{table}

In accordance with Anderson's claim, the algorithm satisfies all three properties with non-atomic as well as atomic registers.

\subsection{Attiya-Welch's algorithm}
What we call the Attiya-Welch algorithm is presented in both \cite{AttiyaWelch04} (original presentation, \cref{alg:attiya-welch}) and \cite{Shao11} (variant presentation, \cref{alg:attiya-welch-var}) as a variant of Peterson's algorithm.
The two presentations of the algorithm have different behaviour, so we present both.
This algorithm is only defined for $N = 2$.
Each thread $i$ has a Boolean $\varidx{flag}{i}$, initialised to $\false$.
There is also a global variable $\varname{turn}$ over $\TID$, initialised to $0$.

\begin{table}[ht!]
\vspace{-1em}
\noindent\begin{minipage}[t]{0.47\textwidth}
\begin{algorithm}[H]
\caption{Attiya-Welch algorithm, orig.}\label{alg:attiya-welch}
\begin{algorithmic}[1]
        \State{$\varidx{flag}{i} \writeop \false$}\label{attiya-welch-entry2}
        \State{\textbf{await} $\varidx{flag}{j} = \false \lor \varname{turn} = j$}
        \State{$\varidx{flag}{i} \writeop \true$}
        \If{$\varname{turn} = i$}\label{attiya-welch-if}
            \If{$\varidx{flag}{j} = \true$}
                \State{\textbf{goto} line \ref{attiya-welch-entry2}}\label{attiya-welch-goto}
            \EndIf
        \Else
            \State{\textbf{await} $\varidx{flag}{j} = \false$}\label{attiya-welch-wait}
        \EndIf
        \State{\textbf{critical section}}
        \State{$\varname{turn} \writeop i$}
        \State{$\varidx{flag}{i} \writeop \false$}
\end{algorithmic}
\end{algorithm}
\end{minipage}
\hfill
\begin{minipage}[t]{0.47\textwidth}
\begin{algorithm}[H]
\caption{Attiya-Welch algorithm, var.}\label{alg:attiya-welch-var}
\begin{algorithmic}[1]
    \Repeat
    \State{$\varidx{flag}{i} \writeop \false$}
    \State{\textbf{await} $\varidx{flag}{j} = \false \lor \varname{turn} = j$}
    \State{$\varidx{flag}{i} \writeop \true$}
    \Until{$\varname{turn} = j \lor \varidx{flag}{j} = \false$}\label{attiya-welch-var-until}
    \If{$\varname{turn} = j$}\label{attiya-welch-var-check}
    \State{\textbf{await} $\varidx{flag}{j} = \false$}
    \EndIf
    \State{\textbf{critical section}}\label{attiya-welch-var-crit}
    \State{$\varname{turn} \writeop i$}
    \State{$\varidx{flag}{i} \writeop \false$}
\end{algorithmic}
\end{algorithm}
\end{minipage}
\end{table}

In the original presentation of the Attiya-Welch algorithm, no claims are made about its correctness with non-atomic registers.
It is therefore perhaps surprising to note that it does satisfy many properties with non-atomic registers.
In \cite{spronck2023process}, we showed that the original presentation of the Attiya-Welch algorithm satisfies reachability of the critical section, a property weaker than deadlock freedom, with both safe and regular registers.
Here, we show that while it satisfies starvation freedom with regular registers, it only satisfies deadlock freedom with safe registers.
{\Atrace} violating starvation freedom with safe registers is the following:
\begin{itemize}
    \item Thread 0 runs through lines 1 through 9 without competition; since $\varidx{flag}{1} = \false$ it can reach line 9 without problem. On line 10, it starts writing $0$ to $\varname{turn}$, which is already $0$.
    \item Thread 1 executes line 1, setting $\varidx{flag}{1}$ to $\false$, which it already was. On line 2, it reads $\varidx{flag}{0} = \true$ and $\varname{turn} = 1$. Note that the value $1$ has never been written to $\varname{turn}$, but due to the read overlapping with thread 0's write, the value can still be read. Thread 1 can therefore not proceed through line 2.
    \item Thread 0 finishes the exit protocol, setting $\varname{turn}$ to $0$ and $\varidx{flag}{0}$ to $\false$.
\end{itemize}
At this point, we  have $\varidx{flag}{0} = \varidx{flag}{1} = \false$ and $\varname{turn}=0$, the same values the variables had at the start. Hence, thread 0 can reach line 10 again, without interference by thread 1. By having thread 1 always read $\varidx{flag}{0}$ and $\varname{turn}$ at exactly the wrong time, it will remain forever in line 2, without ever reaching the critical section.

Note that this {\trace} relies on reading a value that has never been written. We can adjust the algorithm slightly, so that $\varname{turn}$ is read before it is written, and only updated if it does not already have the intended value. Starvation freedom is then satisfied with safe registers.

We find that the Attiya-Welch algorithm does not satisfy starvation freedom with atomic registers under $\justact{\conc_S}{\block}$. 
This violation is rather trivial: as long as $\varidx{flag}{1} = \false$, thread 0 can infinitely often execute the algorithm to get to its critical section. During this execution, it will write to $\varidx{flag}{0}$ repeatedly, preventing thread 1 from reading $\varidx{flag}{0}$ on line 2. Hence, thread 1 can be prevented from ever reaching the critical section if a write can block a read. 
Note that on line 2 of the algorithm, a thread already wants access to its critical section but has in no way communicated this to the other thread.
This is not something we can fix by merely slightly altering a condition or reading a variable before writing, it would require more significant alterations to the algorithm so that a thread has to communicate its intention before attempting to read a register another thread can infinitely often write to.
We do not explore such alterations here.

Instead, we turn our attention to the variant presentation from \cite{Shao11}.
In this version, the goto statements have been eliminated.

In \cite{Shao11}, it is claimed that the algorithm satisfies all three properties for all four interpretations of MWMR regular registers proposed in that paper.
A comparison between their definitions and our regular register model is given in \cite{spronck2023process}; here it suffices to observe that their weakest definition is weaker than our regular register model.
Hence, we would expect all three properties to be satisfied by our regular model.
As can be observed in \autoref{tab:results}, this is not the case.
While the two presentations of the algorithm are seemingly equivalent, the altered pseudocode suggests the $\varname{turn}$ variable needs to be read twice in a row, where it is read only once on line 4 of the original presentation.
This allows new-old inversion to occur with the non-atomic register models, and causes the variant presentation to no longer satisfy deadlock freedom with non-atomic registers.
{\Atrace} demonstrating the violation is given in \cite{spronck2023process}.
If we alter the model so that the value of $\varname{turn}$ is only read once for the two conditions, we see the expected behaviour, given the results in \cite{Shao11}.
If we then also make the adjustment that $\varname{turn}$ is only written to when its value would actually change, we see the same behaviour as our altered version of the original presentation.
The issue of starvation freedom not being satisfied for atomic registers with $\justact{\conc_S}{\block}$ is also present in this variant.

\subsection{Burns-Lynch's algorithm and Lamport's 1-bit algorithm}
The Burns-Lynch algorithm (\cite{burns1993bounds}, \cref{alg:burns-lynch}) and Lamport's 1-bit algorithm (\cite{Lamport86Mutex2}, \cref{alg:lamport1bit}) are two very similar algorithms.
In \cite{Lamport86Mutex2}, Lamport makes the explicit claim that this algorithm satisfies mutual exclusion and deadlock freedom with safe registers.
Both algorithms are designed for an arbitrary $N$, and both use only a single shared Boolean per thread, initialised to $\false$.
For the sake of easy comparison, we name this Boolean $\varidx{flag}{i}$ for each thread $i$ for both algorithms, although Lamport uses the name $\varidx{x}{i}$ and Burns and Lynch use $\varidx{Flag}{i}$.

\begin{table}[ht!]
\vspace{-1em}
\noindent\begin{minipage}[t]{0.47\textwidth}
\begin{algorithm}[H]
\caption{The Burns-Lynch algorithm}\label{alg:burns-lynch}
\begin{algorithmic}[1]
    \Repeat
        \State{$\varidx{flag}{i} \writeop \false$}\label{bl-3}
        \State{\textbf{await} $\forall_{j < i}: \varidx{flag}{j} = \false$}
        \State{$\varidx{flag}{i} \writeop \true$}
    \Until{$\forall_{j < i}:$ $\varidx{flag}{j} = \false$}
    \State{\textbf{await} $\forall_{j > i}: \varidx{flag}{j} = \false$}
    \State{\textbf{critical section}}
    \State{$\varidx{flag}{i} \writeop \false$}
\end{algorithmic}
\end{algorithm}
\end{minipage}
\hfill
\begin{minipage}[t]{0.47\textwidth}
\begin{algorithm}[H]
\caption{Lamport's 1-bit algorithm}\label{alg:lamport1bit}
\begin{algorithmic}[1]
    \State{$\varidx{flag}{i} \writeop \true$}\label{lamport1-l}
    \For{$j$ \textbf{from} $0$ \textbf{to} $i - 1$}
        \If{$\varidx{flag}{j} = \true$}
            \State{$\varidx{flag}{i} \writeop \false$}
            \State{\textbf{await} $\varidx{flag}{j} = \false$}
            \State{\textbf{goto} line \ref{lamport1-l}}
        \EndIf
    \EndFor
    \For{$j$ \textbf{from} $i + 1$ \textbf{to} $N - 1$}
        \State{\textbf{await} $\varidx{flag}{j} = \false$}
    \EndFor
    \State{\textbf{critical section}}
    \State{$\varidx{flag}{i} \writeop \false$}
\end{algorithmic}
\end{algorithm}
\end{minipage}
\end{table}

Neither algorithm was designed to satisfy starvation freedom, only deadlock freedom.
In \cite{buhr2015high}, it is claimed that both work with non-atomic registers, albeit without satisfying starvation freedom.
With the Attiya-Welch algorithm, we saw that minor differences in presentation can impact the correctness of an algorithm.
This does not appear to be the case with these two algorithms; they show the same, and the expected, behaviour.

\subsection{From deadlock freedom to starvation freedom}
We here give the pseudocode for the algorithm for turning a deadlock-free algorithm into a starvation-free one, as discussed in \autoref{sec:dftosf}.
See \autoref{alg:GG} for the algorithm as presented in \cite{GG}. It works for arbitrary $N$.

In addition to the registers used by the deadlock-free algorithm, this algorithm uses a Boolean $\varname{flag}$ for every thread, initialised to $\false$ and a shared register over $\TID$ called $\varname{turn}$, which here is initialised to $0$.
Naturally, these registers must be distinct from those used in the deadlock-free algorithm.

\begin{algorithm}[ht!]
\caption{Algorithm for making a deadlock-free solution starvation-free}\label{alg:GG}
\begin{algorithmic}[1]
    \State{$\varidx{flag}{i} \writeop \true$}
    \Repeat
        \State{$\varname{tmp} \writeop \varname{turn}$}
    \Until{$\varname{tmp} = i \lor \varidx{flag}{tmp} = \false$}
    \State{\textbf{entry protocol of deadlock-free algorithm}}
    \State{\textbf{critical section}}
    \State{$\varidx{flag}{i} \writeop \false$}
    \State{$\varname{tmp} \writeop \varname{turn}$}
    \If{$\varidx{flag}{tmp} = \false$}
        \State{$\varname{turn} \writeop (\varname{tmp} + 1) \mod N$}
    \EndIf
    \State{\textbf{exit protocol of deadlock-free algorithm}}
\end{algorithmic}
\end{algorithm}

\subsection{Dijkstra's algorithm}
Dijkstra's algorithm \cite{dijkstra65} is given as \autoref{alg:dijkstra}.
It works for arbitrary $N$.
Every thread $i$ has two Booleans: $\varidx{b}{i}$ and $\varidx{c}{i}$, both initialised to $\true$.
There is also a global register $\varname{k}$ over $\TID$, initialised at $0$.

\begin{algorithm}[ht!]
\caption{Dijkstra's algorithm}\label{alg:dijkstra}
\begin{algorithmic}[1]
    \State{$\varidx{b}{i} \writeop \false$}\label{dijkstra-Li0}
    \If{$\varname{k} \neq i$}\label{dijkstra-Li1}
        \State{$\varidx{c}{i} \writeop \true$}\label{dijkstra-Li2}
        \If{$\varidx{b}{k} = \true$}\label{dijkstra-Li3}
            \State{$\varname{k} \writeop i$}
        \EndIf
        \State{\textbf{goto} line \ref{dijkstra-Li1}}
    \Else
        \State{$\varidx{c}{i} \writeop \false$}\label{dijkstra-Li4}
        \For{$j$ \textbf{from} $0$ \textbf{to} $N - 1$}
            \If{$j \neq i \land \varidx{c}{j} = \false$}
                \State{\textbf{goto} line \ref{dijkstra-Li1}}
            \EndIf
        \EndFor
    \EndIf
    \State{\textbf{critical section}}
    \State{$\varidx{c}{i} \writeop \true$}
    \State{$\varidx{b}{i} \writeop \true$}
\end{algorithmic}
\end{algorithm}

We find that this algorithm never satisfies starvation freedom, which is unsurprising since Dijkstra did not require that property in his original presentation of the mutual exclusion problem.

\subsection{Kessels's algorithm}
Kessels's algorithm (\autoref{alg:kessels}) is a variant on Peterson's, presented in \cite{kessels1982arbitration}.
Its basic form is designed for $N = 2$, and each thread $i$ has two variables: a Boolean $\varidx{q}{i}$, initialised at $\false$, and $\varidx{r}{i}$ over $\TID$, initialised at $0$.

\begin{algorithm}[ht!]
\caption{Kessels's algorithm}\label{alg:kessels}
\begin{algorithmic}[1]
    \State{$\varidx{q}{j} \writeop \true$}
    \State{$\varidx{r}{j} \writeop (\varidx{r}{i} + j) \mod 2$}
    \State{\textbf{await} $\varidx{q}{i} = \false \lor (\varidx{r}{j} \neq ((\varidx{r}{i} + j) \mod 2))$}
    \State{\textbf{critical section}}
    \State{$\varidx{q}{j} \writeop \false$}    
\end{algorithmic}
\end{algorithm}

Since it is a variant on Peterson's, merely with only single-writer registers, it is unsurprising that the two algorithms give the same results to our verification.
The violating {\trace} for regular registers, like many of the {\trace}s discussed in this section, relies on new-old inversion.
\begin{itemize}
    \item Thread 0 sets $\varidx{q}{1}$ to $\true$, and then reads $\varidx{r}{0} = 0$. It therefore starts writing $1$ to $\varidx{r}{1}$, which was $0$.
    \item Thread 1 sets $\varidx{q}{0}$ to $\true$, and then reads $\varidx{r}{1}$ with overlap, reading the new value $1$. It therefore writes $1$ to $\varidx{r}{0}$. It then reads $\varidx{r}{0} = 1$ and $\varidx{r}{1} = 0$, the latter being the old value. Since $1 \neq (0 + 0) \mod 2$, it can reach its critical section.
    \item Thread 0 finishes its write to $\varidx{r}{1}$, and on line 3 reads $\varidx{r}{1} = 1$ and $\varidx{r}{0} = 1$; since $1 \neq (1 + 1)\mod 2$, it can reach its critical section.
\end{itemize}

This violation does not contradict any claims made by Kessels in \cite{kessels1982arbitration}, who said nothing about the correctness of this algorithm with non-atomic registers.
Rather, we find the violation interesting because it clearly illustrates that using only single-writer Booleans does not automatically mean that an algorithm is robust to non-atomic registers.

\subsection{Knuth's algorithm}
Knuth's algorithm (\autoref{alg:knuth}) comes from \cite{knuth1966additional}.
It works for arbitrary $N$.
Every thread $i$ has a register $\varidx{control}{i}$, which is over $\{0,1,2\}$, initialised to $0$.
Additionally, there is a global register $\varname{k}$ over $\TID$, also initialised to $0$.\footnote{In Knuth's original presentation \cite{knuth1966additional}, $\varname{k}$ was initialised to $-1$, which there was called $0$, as the threads ranged from $1$ to $N$. This difference is immaterial; when thread 1 goes once through  the algorithm without contention, $k$ is set to $0$, so this could just as well be the initial value.}

\begin{algorithm}[ht!]
  \caption{Knuth's algorithm}\label{alg:knuth}
  \begin{algorithmic}[1]
    \State{$\varidx{control}{i} \writeop 1$}\label{knuth-L0}
    \For{$j$ \textbf{from} $\varname{k}$ \textbf{downto} $0$}\label{knuth-L1}
        \If{$j = i$}
            \State{\textbf{goto} line \ref{knuth-L2}}
        \EndIf
        \If{$\varidx{control}{j} \neq 0$}
            \State{\textbf{goto} line \ref{knuth-L1}}
        \EndIf
    \EndFor
    \For{$j$ \textbf{from} $N{-}1$ \textbf{downto} $0$}
        \If{$j = i$}
            \State{\textbf{goto} line \ref{knuth-L2}}
        \EndIf
        \If{$\varidx{control}{j} \neq 0$}
            \State{\textbf{goto} line \ref{knuth-L1}}
        \EndIf
    \EndFor    
    \State{$\varidx{control}{i} \writeop 2$}\label{knuth-L2}
    \For{$j$ \textbf{from} $N{-}1$ \textbf{downto} $0$}
        \If{$j \neq i \land \varidx{control}{j} = 2$}
            \State{\textbf{goto} line \ref{knuth-L0}}
        \EndIf
    \EndFor
    \State{$\varname{k} \writeop i$}\label{knuth-L3}
    \State{\textbf{critical section}}
    \If{$i = 0$}
        \State{$\varname{k} \writeop N{-}1$}
    \Else  
        \State{$\varname{k} \writeop i {-} 1$}
    \EndIf
    \State{$\varidx{control}{i} \writeop 0$}\label{knuth-L4}
    \end{algorithmic}
\end{algorithm}

The goal of Knuth's algorithm is to improve upon Dijkstra's result by having starvation freedom in addition to deadlock freedom. This goal is indeed accomplished with atomic registers, as we see in \autoref{tab:results}. Knuth makes no claims on this algorithm's behaviour with non-atomic registers. However, it is interesting that there is a difference between the behaviour with safe and regular registers.

{\Atrace} violating deadlock freedom for two threads and safe registers is as follows:
\begin{itemize}
    \item Thread 1 starts setting $\varidx{control}{1}$ to $1$
    \item Thread 0 starts the competition; since $\varname{k} = 0$, it goes to line 12 and sets $\varidx{control}{0}$ to $2$. When it reads $\varidx{control}{1}$ on line 14, it reads a $2$, which has never been written. Hence, it has to return to line 1, and hence start writing to $\varidx{control}{0}$.
    \item Thread 1 finishes line 1 and goes to line 2, where it has to read $\varidx{control}{0}$. It reads $\varidx{control}{0} = 0$, not the new or old value, and hence goes to line 7. There, since $N - 1 = 1$, it goes to line 12 and sets $\varidx{control}{1}$ to $2$. It then goes to line 13, where it finds that $\varidx{control}{0} = 2$, due to the overlap with thread 0's write. Hence, it has to return to line 1, where it starts writing to $\varidx{control}{1}$ again.
\end{itemize}
At this point, we are back where we started, with both threads at the beginning of the algorithm. We see there is a loop where both threads continually make progress, but always get sent back on line 15.
Unlike with the Attiya-Welch algorithm, we cannot fix this by merely ensuring that a register is only written to when its value would change.

\subsection{Lamport's 3-bit algorithm}
We already discussed Lamport's 1-bit algorithm, but in \cite{Lamport86Mutex2} he also presented the 3-bit algorithm (\autoref{alg:lamport3bit}), which was designed to improve on the 1-bit algorithm by making it starvation-free.
It works for arbitrary $N$.
Every thread $i$ has three Booleans: $\varidx{x}{i}, \varidx{y}{i}$ and $\varidx{z}{i}$, all initialised to $\false$.

This algorithm uses some auxiliary operations. 
The operation $\mathrm{ORD}(S)$ returns for a set $S \subseteq \TID$ a list of all elements in $S$ ordered from smallest to largest. Note that such an ordering is defined for $\TID$ since, \hyperlink{threadorder}{as we stated in} \autoref{sec:verification}, we use natural numbers to represent thread identifiers. This list is formalised as an increasing function $\gamma:\{1,\dots,M\} \rightarrow S$, where $M=|S|$. We write $\textrm{domain}(\gamma)$ for $\{1,\ldots,M\}$ and $\textrm{range}(\gamma)$ for $S$.
Additionally, there is the Boolean function $\mathrm{CG}(v, l)$, where $v: \{1,\dots,M\} \rightarrow \mathbbm{B}$ is a Boolean function mapping an index in $\{1,\dots,M\}$ (denoting an element of $S$) to either $\true$ or $\false$, and $l \in \{1,\dots,M\}$.
\begin{align*}
   \mathrm{CG}(v,l) \defeq&~  (v(l) = \mathrm{CGV}(v, l)) \\
   \mathrm{CGV}(v, l) \defeq&\begin{cases}
        \neg v(l{-}1) &\text{if $l > 1$}\\
        v(M) &\text{if $l = 1$}
        \end{cases}
\end{align*}

We write ``\textbf{for} $i$ \textbf{from} $j$\ \textbf{cyclically to} $k$'' to mean an iteration that begins with $i = j$, then increments $i$ by 1, taking the result modulo $N$. The iteration terminates when $i = k$, without executing the loop with $i = k$.
We use $\oplus$ for addition modulo $N$.

The definitions of $\mathrm{ORD}$, $\mathrm{CG}$ and $\mathrm{CGV}$ given above differ from those given by Lamport in \cite{Lamport86Mutex2}. He works with \emph{cycles}, rather than lists, and defines $\mathrm{CG}$ and $\mathrm{CGV}$ based on indexed elements of those cycles, rather than using the indices directly as we do. We believe that our presentation is equivalent to the one given by Lamport, but find it more straightforward to explain. This presentation also aligns better with how the algorithm would be implemented, and indeed aligns more closely with our mCRL2 model.

In addition to these changes in presentation, we also made one change to the algorithm itself: line~\ref{l-new} was added by us to emphasise that the mapping $\zeta$ is a snapshot of the variables $\varidx{z}{i}$ for all $i \in \textrm{range}(\gamma)$, and that the registers themselves do not get repeatedly read on line~\ref{lamport-3bit-functioncall}.
For this, we introduce the operation $\mathrm{SAVE_{\varname{z}}^{\gamma}}$ that creates a mapping from $\textrm{domain}(\gamma)$ to the Booleans, such that for all $h \in \textrm{domain}(\gamma)$, $\mathrm{SAVE^{\gamma}_{\varname{z}}}(h) = \varidx{z}{\gamma(h)}$.
Lamport does not state explicitly in \cite{Lamport86Mutex2} that the $\varname{z}$ registers must not be read repeatedly when computing the minimum, but if the registers are non-atomic and re-read during the computation it is possible that no minimum will be found.
This observation is also made in \cite{spronck2023process}.

\begin{algorithm}[ht!]
\caption{Lamport's 3-bit algorithm}
\label{alg:lamport3bit}
\begin{algorithmic}[1]
  \State{$\varidx{y}{i} \writeop \true$} 
  \State{$\varidx{x}{i} \writeop \true$} \label{l1}
  \State{$\gamma \writeop \mathrm{ORD}\{j\mid \varidx{y}{j} = \true\}$} \label{l2}
  \State{$\zeta \writeop \mathrm{SAVE}^{\gamma}_{\varname{z}}$}\label{l-new}
  \State{$f\writeop \gamma(\mathrm{minimum}\{h\in\textrm{domain}(\gamma)\mid \mathrm{CG}(\zeta,h)=\true\})$}\label{lamport-3bit-functioncall}
  \For{$j$ \textbf{from} $f$\ \textbf{cyclically to} $i$}
  \If{$\varidx{y}{j} = \true$}
    \If{$\varidx{x}{i} = \true$} {$\varidx{x}{i} \writeop \false$}\EndIf
  \State{\textbf{goto} line~\ref{l2}}
  \EndIf
  \EndFor
  \If{$\varidx{x}{i} = \false$} {\textbf{goto} line~\ref{l1}} \EndIf
  \For{$j$ \textbf{from} $i\oplus 1$ \textbf{cyclically to} $f$}
  \If{$\varidx{x}{j} = \true$} {\textbf{goto} line \ref{l2}}\EndIf
  \EndFor
  \State{\textbf{critical section}}
  \If{$\varidx{z}{i} = \true$}
    \State{$\varidx{z}{i} \writeop \false$}
  \Else 
    \State{$\varidx{z}{i} \writeop \true$}
  \EndIf
  \State{$\varidx{x}{i} \writeop \false$}
  \State{$\varidx{y}{i} \writeop \false$}
 \end{algorithmic}
\end{algorithm}

\vspace{-10pt}
\subsection{Peterson's algorithm}
Peterson's algorithm is already given in \autoref{sec:verification}. We merely note here that an execution showing that Peterson's algorithm does not satisfy mutual exclusion with non-atomic registers is given in \cite{spronck2023process}.

\subsection{Szymanski's flag algorithm}
We discussed Szymanski's 3-bit linear wait algorithm from \cite{Szy90} in \autoref{sec:Szymanski}.
Earlier, in \cite{Szy88}, Szymanski presented the flag algorithm, which we here give as \autoref{alg:szymanski-flag}.
Note that we give the pseudocode with a minor fix of an obvious typo: on line~\ref{alg:Szy-flag:exit}, we use a $\lor$ instead of a $\land$. 
In its original presentation, the flag algorithm uses a single integer variable $\varname{flag}$ per thread ranging over $\{0, 1, 2, 3,4\}$, initially $0$. Just like the 3-bit algorithm, it is designed for arbitrary $N$.

\begin{algorithm}[hb!]
\caption{Szymanski's flag algorithm}\label{alg:Szy-flag}
\begin{algorithmic}[1]
  \State{$\varidx{flag}{i} \writeop 1$}
  \State{\textbf{await} $\forall_j:\ \varidx{flag}{j} < 3$}
  \State{$\varidx{flag}{i} \writeop 3$}
  \If{$\exists_j:\ \varidx{flag}{j}=1$}
    \State{$\varidx{flag}{i} \writeop 2$}
    \State{\textbf{await} $\exists_j:\ \varidx{flag}{j}=4$}
  \EndIf
  \State{$\varidx{flag}{i} \writeop 4$}
  \State{\textbf{await} $\forall_{j < i}:\ \varidx{flag}{j} < 2$}
  \State{\textbf{critical section}}
  \State{\textbf{await} $\forall_{j>i}:\ \varidx{flag}{j} < 2 \lor \varidx{flag}{j} > 3$}\label{alg:Szy-flag:exit}
  \State{$\varidx{flag}{i} \writeop 0$}
\end{algorithmic}
\label{alg:szymanski-flag}
\end{algorithm}

It is claimed that, while the integer version of the algorithm is not correct with non-atomic registers, by converting the integers to three Booleans each, $\varname{door\_in}$, $\varname{door\_out}$ and $\varname{intent}$, the algorithm can be made correct for non-atomic registers.
According to \cite{Szy88}, this translation should be done according to \autoref{tab:szy-flag2bits}.
This results in \autoref{alg:Szy-flag-bits}. The three Booleans are all initialised as $\false$.

\begin{table}[hb!]
\caption{Translating the register $\mathit{flag}$ to three Boolean registers $\mathit{intent}$, $\mathit{door\_in}$ and $\mathit{door\_out}$.}
\label{tab:szy-flag2bits}
\centering
\begin{tabular}{|c|c|c|c|}
\hline
\textit{flag} & \textit{intent} & \textit{door\_in} & \textit{door\_out} \\ \hline
0             & $\false$              & $\false$                & $\false$                 \\ \hline
1             & $\true$               & $\false$                & $\false$                 \\ \hline
2             & $\false$              & $\true$                 & $\false$                  \\ \hline
3             & $\true$               & $\true$                 & $\false$                  \\ \hline
4             & $\true$               & $\true$                 & $\true$                  \\ \hline
\end{tabular}
\end{table}

\begin{algorithm}[hb!]
\caption{Szymanski's flag algorithm implemented with Booleans}\label{alg:Szy-flag-bits}
\begin{algorithmic}[1]
  \State{$\varidx{intent}{i} \writeop \true$}\label{szy-bits-1}
  \State{\textbf{await} $\forall_j:\ \varidx{intent}{j} = \false \lor \varidx{door\_in}{j} = \false$}\label{szy-bits-2}
  \State{$\varidx{door\_in}{i}\writeop \true$}\label{szy-bits-3}
  \If{$\exists_j:\ \varidx{intent}{j} = \true \land \varidx{door\_in}{j} = \false$}\label{szy-bits-4}
    \State{$\varidx{intent}{i} \writeop \false$}\label{szy-bits-5}
    \State{\textbf{await} $\exists_j:\ \varidx{door\_out}{j} = \true$}\label{szy-bits-6}
  \EndIf
  \If{$\varidx{intent}{i} = \false$} {$\varidx{intent}{i} \writeop \true$}\EndIf
  \State{$\varidx{door\_out}{i} \writeop \true$}\label{szy-bits-9}
  \State{\textbf{await} $\forall_{j < i}:\ \varidx{door\_in}{j} = \false$}\label{szy-bits-10}
  \State{\textbf{critical section}}\label{szy-bits-11}
  \State{\textbf{await} $\forall_{j>i}:\ \varidx{door\_in}{j} = \false \lor \varidx{door\_out}{j} = \true$}\label{szy-bits-12}
  \State{$\varidx{intent}{i} \writeop \false$}\label{szy-bits-13}
  \State{$\varidx{door\_in}{i} \writeop \false$}\label{szy-bits-14}
  \State{$\varidx{door\_out}{i}\writeop \false$}\label{szy-bits-15}
\end{algorithmic}
\label{alg:szymanski-flag-bits}
\end{algorithm}

We analysed both variants in \cite{spronck2023process}, and found that neither algorithm is correct with non-atomic registers. Executions demonstrating these violations are provided there.
Additionally, we observed in \cite{spronck2023process} that the Boolean variant of the algorithm violates mutual exclusion even with atomic registers. We observed that the violating execution reported by mCRL2, given in full in \cite{spronck2023process}, relies on the exit protocol resetting the three Booleans in the order $\varname{intent} - \varname{door\_in} - \varname{door\_out}$. This is the order that is suggested by the final pseudocode provided by Szymanski in \cite[Figure 3]{Szy88}. We posited in \cite{spronck2023process} that the violation of mutual exclusion could be fixed by changing the order to be $\varname{door\_out} -\varname{intent} - \varname{door\_in}$ instead. Now that we can verify the liveness properties as well, we can confirm that this is indeed true: if this alternate exit protocol is used, the Boolean version of the flag protocol is equivalent to the integer version with respect to the properties verified in this paper.

\end{document}